\documentclass[12pt]{article}

\usepackage[letterpaper,margin=1in]{geometry}
\usepackage[T1]{fontenc}
\usepackage[utf8]{inputenc}
\usepackage{lmodern}
\usepackage{microtype}
\usepackage{amsmath,amssymb,amsfonts,amsthm,mathtools,bm}
\usepackage{booktabs,array,threeparttable}
\usepackage{enumitem}
\usepackage{setspace}
\usepackage[dvipsnames]{xcolor}
\usepackage{graphicx}
\usepackage{caption}
\usepackage{tikz}
\usetikzlibrary{arrows.meta,positioning,calc}
\usepackage[authoryear,round,longnamesfirst]{natbib}
\usepackage[colorlinks=true,citecolor=black,urlcolor=black,linkcolor=black]{hyperref}
\usepackage[nameinlink,noabbrev]{cleveref}
\usepackage{pgfplots}
\usepgfplotslibrary{groupplots}
\pgfplotsset{compat=1.18}

\newcommand{\manuscriptspacing}{\setstretch{1.45}}
\allowdisplaybreaks
\newtheorem{theorem}{Theorem}
\newtheorem{proposition}{Proposition}
\newtheorem{lemma}{Lemma}
\newtheorem{corollary}{Corollary}
\newtheorem{assumption}{Assumption}
\newtheorem{definition}{Definition}
\newtheorem{remark}{Remark}
\newtheorem{example}{Example}

\crefname{theorem}{theorem}{theorems}
\Crefname{theorem}{Theorem}{Theorems}
\crefname{proposition}{proposition}{propositions}
\Crefname{proposition}{Proposition}{Propositions}
\crefname{lemma}{lemma}{lemmas}
\Crefname{lemma}{Lemma}{Lemmas}
\crefname{corollary}{corollary}{corollaries}
\Crefname{corollary}{Corollary}{Corollaries}
\crefname{assumption}{assumption}{assumptions}
\Crefname{assumption}{Assumption}{Assumptions}
\crefname{definition}{definition}{definitions}
\Crefname{definition}{Definition}{Definitions}
\crefname{remark}{remark}{remarks}
\Crefname{remark}{Remark}{Remarks}
\crefname{example}{example}{examples}
\Crefname{example}{Example}{Examples}

\DeclareMathOperator*{\argmin}{arg\,min}
\DeclareMathOperator*{\argmax}{arg\,max}
\newcommand{\E}{\mathbb E}
\newcommand{\V}{\mathbb V}
\newcommand{\Pp}{\mathbb P}
\newcommand{\1}{\mathbf 1}
\newcommand{\cA}{\mathcal A}
\newcommand{\cG}{\mathcal G}
\newcommand{\cS}{\mathcal S}
\newcommand{\cX}{\mathcal X}
\newcommand{\cI}{\mathcal I}
\newcommand{\RA}{\mathrm{RA}}
\newcommand{\PM}{\mathrm{PM}}
\newcommand{\RR}{\mathrm{R}}
\newcommand{\tr}{\mathrm{tr}}
\newcommand{\te}{\mathrm{te}}
\newcommand{\CE}{\mathrm{CE}}
\newcommand{\TV}{\mathrm{TV}}
\newcommand{\KL}{\mathrm{KL}}
\newcommand{\dd}{\,\mathrm d}
\newcommand{\appcontentsline}[2]{%
  \noindent\hyperref[#1]{#2}\dotfill\pageref{#1}\par}

\title{\textbf{Stable Policy Learning}}
\author{%
Harvey Barnhard\\[-1pt]
{\normalsize Harvard University}
\and
Giacomo Opocher\\[-1pt]
{\normalsize University of Mannheim}
\and
Rahul Singh\thanks{
    We thank Amartya Sen and Davide Viviano for helpful discussions.}\hspace{.2cm}\\[-1pt]
{\normalsize Harvard University}}
\date{September 16, 2026}

\begin{document}
\maketitle

\begin{abstract}
In evidence-based policymaking, typically one experimental sample is observed, then a learned policy recommendation is implemented at scale.
Policies learned from the experimental data can perform well in expected welfare, yet random sampling in the experiment can produce recommendations with poor welfare outcomes.
In this paper, we ask: how should policy learning algorithms balance expected welfare against sampling risk?
Our main contribution is to show that algorithmic stability plays a central role in characterizing and navigating the tradeoff.
Intuitively, if a policy learning algorithm's recommendation remains stable when one experimental unit is replaced, then that algorithm has limited sampling risk.
We propose a method for policy learning called policy-vote bagging, which learns treatment decisions on many subsamples then averages their votes into treatment probabilities. 
Relative to using one subsample, averaging across subsamples preserves expected welfare and improves expected utility for a risk-averse researcher.
We derive sharp bounds linking estimation accuracy, subsample size, and welfare variation, including an exact guarantee under CARA utility.
\end{abstract}

\medskip
\noindent\textit{Keywords:} Algorithmic stability; empirical welfare maximization; statistical treatment choice.
\vfill

\clearpage
\manuscriptspacing

\section{Introduction}\label{sec:introduction}

Many economic policies are learned from experiments and then implemented at scale. 
Often, one experimental sample is observed, one resulting policy is implemented, and a poor policy recommendation may be difficult to reverse.
Standard treatment-choice frameworks evaluate an algorithm by its expected welfare across hypothetical experimental samples. 
This criterion does not distinguish between two algorithms that perform equally well in expectation, even though one algorithm may produce much more variable recommendations across hypothetical experiments. 
The distinction matters when researchers, institutions, or societies are averse to the inherent gamble of learning a policy from randomly sampled experimental data.

This paper views the choice of a policy learning algorithm as part of the research design. Before observing the experimental sample, a researcher chooses how the data will be converted into a policy.\footnote{Choosing an algorithm before observing the data
follows the statistical decision-theoretic framework of
\citet{wald1950}. A decision rule maps possible
samples into actions. Its performance is evaluated
across samples.} The experimental sample is then collected, the algorithm recommends a policy, and that policy determines treatment in the full population. This sequence creates two sources of uncertainty in chronological order, which may be viewed as a compound lottery. First, the learned policy varies across possible experimental samples. Second, outcomes vary across people in the full population. The main text focuses on aversion to the first source of variation. Appendix~\ref{app:joint_concavity} allows aversion to both.\footnote{This compound lottery extends the veil of ignorance  \citep{vickrey_1945,harsanyi_1953,rawls_1971} to before the experiment. The researcher chooses how to learn a policy without knowing which sample will be observed. The policymaker evaluates a policy without knowing whose outcome they will receive. }

Our primary contribution is to prove that algorithmic stability provides a general way to manage experimental sampling risk. We say that a policy learning algorithm is welfare stable when replacing one experimental unit changes the welfare of its learned policy by a small amount. This one-unit stability controls both welfare variation across samples and the probability that the learned policy performs poorly. It also yields an exact, finite-sample guarantee for a researcher with constant absolute risk aversion. The result is deliberately broad: it requires only that welfare be well defined, and places no restriction on the dimension of the covariates nor the complexity of the policy class.

Our secondary contribution is to propose a method for stable policy learning called  policy-vote bagging. The method learns a binary treatment decision on each of many subsamples and uses the fraction voting for treatment as the final treatment probability. Its stability comes directly from the construction. If one unit is replaced, every subsample that omits that unit casts the same vote as before; only the fraction of subsamples containing it can change. Policy-vote bagging therefore achieves stability across the full experimental sample, even when the underlying base learner is unstable. At a fixed subsample size, averaging preserves expected welfare (relative to a policy based on one subsample) and improves expected utility for any increasing, concave evaluation of welfare.

Our formal results show how subsample size trades estimation accuracy for stability. Larger subsamples give the base learner more information, but they also make each unit influential in more votes. We derive matching upper and lower bounds over a class of populations and base learners described by their mean-squared estimation error. Within this class, the bounds characterize a  minimax rate-optimal subsample size. Parametric, H\"older-smooth, and Gaussian-kernel examples show how familiar estimation guarantees enter the analysis. 
Subsampling can be worthwhile in small experimental samples when estimation accuracy improves slowly in the sample size and sampling risk is prioritized.

\subsection{Related work}

We contribute to the literature on statistical treatment choice; see \citet{opocher_26_review} for a recent review. Expected-regret analysis \citep{manski2004} underlies empirical welfare maximization and its extensions \citep{kitagawa_tetenov_2018,mbakop_tabord_meehan_2021,athey_wager_2021}. Related work incorporates concerns about inequality and fairness into the objectives used to evaluate treatment assignments \citep{kitagawa_tetenov_2021,viviano_bradic_2024}, yielding different criteria; see Section~\ref{sec:compound}.

Our closest connection is to work that evaluates welfare risk across experimental samples. In an innovation experiment with binary outcomes, \citet{manski_tetenov_2007} evaluate treatment rules through the expectation of a concave transformation of population welfare and characterize admissible fractional treatment rules. Their criterion inspires our formalization of sampling risk. \citet{manski_tetenov_2023} study quantile performance and stochastic-dominance comparisons of sampling distributions, while \citet{kitagawa_lee_qiu_2026} study nonlinear transformations of regret, discussed in Section~\ref{sec:compound}. We connect sampling risk to algorithmic stability and derive quantitative  guarantees for policy-vote bagging with covariates.

Other approaches account for statistical uncertainty when
choosing policies or reporting their performance.
These include policy choice with budget constraints and estimated
costs \citep{sun2026}, policy choice based on confidence
bounds for welfare
\citep{chernozhukov_lee_rosen_sun_2025,andrews_chen_2025},
and empirical Bayes allocation across policies with estimated
costs and benefits \citep{moon_2026}.
\citet{swaminathan_joachims_2015} account for uncertainty
in an estimator of welfare when choosing a policy.
By contrast, our criterion concerns how the actual welfare
of the learned policy varies across experimental samples.
We show how algorithmic stability controls this variation.

Our two-stage formulation resembles models of ambiguity aversion. Smooth, variational, multiplier, and recursive preferences distinguish layers of uncertainty \citep{klibanoff_et_al_2005,maccheroni_marinacci_rustichini_2006,strzalecki_2011,strzalecki_2013,denti_pomatto_2022}. Robust mean-variance analysis provides a particularly close mathematical analogy \citep{maccheroni_marinacci_ruffino_2013}. Our interpretation is different. Both the experimental-sample lottery and full-population lottery are objective and arise under a fixed data-generating process.

Single-unit stability has separate roots in econometrics and machine learning. Building on the influence framework of \citet{hampel_1974}, \citet{andrews_1986} formalizes a stability exponent based on the rate at which an estimator's largest leave-one-out change vanishes. Modern algorithmic-stability analyses use this property to obtain finite-sample statistical guarantees \citep{bousquet_elisseeff_2002}. \citet{elisseeff_evgeniou_pontil_2005} extend this framework to randomized learners, and analyze bagging and subbagging. In recent work on semiparametric inference, \citet{chernozhukov_newey_singh_syrgkanis_2026} and \citet{chen_syrgkanis_austern_2022} develop a stability-based analysis and show that subsample bagging can provide the required stability. Our definition of welfare stability shares this single-unit logic. Unlike earlier works, we apply it to the population welfare of a learned policy, and use it to control variation across experimental samples.

Bagging can stabilize an unstable base learner. Classical work studies its effects on instability and prediction error \citep{breiman_1996,buhlmann_yu_2002,buja_stuetzle_2006}. \citet{soloff_barber_willett_2024} give assumption-free stability guarantees for bagged predictors. Stable predictions need not yield stable decisions, so  \citet{soloff_barber_willett_argmax_2024} address discontinuous classification by reporting sets of candidate labels, and \citet{qian_ying_lam_yin_2025} instead select models through subsample voting. Our proposal of policy-vote bagging retains the vote shares as treatment probabilities. We study how this fractional implementation controls the sampling distribution of population welfare.

The paper follows the decision problem from criterion to method. \Cref{sec:compound} defines the two-stage lottery and risk-aware regret. \Cref{sec:design} interprets algorithm choice as a tradeoff between expected welfare and sampling risk. \Cref{sec:stability} shows how stability controls experimental-sample risk. \Cref{sec:bagging} develops policy-vote bagging. \Cref{sec:rates} gives lower bounds and optimality results. \Cref{sec:discussion} concludes. Appendix~\ref{app:joint_concavity} allows risk aversion at both stages.  Appendices~\ref{app:sec2_proofs}-\ref{app:secA_proofs} collect the proofs.
\section{The training-test compound lottery}\label{sec:compound}

This section separates the policy-learning problem into its two stages. We define the experimental sample used to learn a policy, and the welfare in the full population. We place them in a chronological lottery. As shorthand, we refer to the former source of uncertainty as arising from a ``training'' set, and the latter as arising from a ``test'' set. Finally, we introduce risk-aware regret as a tractable criterion integrating both sources of uncertainty.

\subsection{Data setting}

We distinguish the population that receives the policy (the ``test'' set) from the experiment used to learn it (the ``training'' set). Let $P$ denote the joint law of covariates $X$ and potential outcomes $Y(0),Y(1)$ for a generic test unit, with $X\in\cX\subseteq\mathbb R^d$. The conditional average treatment effect is $\tau(x)=\E\{Y(1)-Y(0)\mid X=x\}$.

A policy specifies a treatment probability for each covariate value. Formally, $G:\cX\to[0,1]$ is measurable, and $\cG$ is the class of all such rules. This unconstrained class contains the oracle and the averages of policies studied below. For an independent test unit and $U^{\te}\sim\operatorname{Unif}(0,1)$, set $D^G=\1\{U^{\te}\leq G(X^{\te})\}$ and $Y^G=Y^{\te}(D^G)$. Thus, the test outcome reflects population heterogeneity and, when $G$ is fractional, the policy's own randomization.

The policymaker may value the distribution of test outcomes as well as their mean. Under policy $G$, realized utility is $U_{\PM}(G)=\phi(Y^G)$ and ex ante value is $V_\phi(G)=\E\{\phi(Y^G)\}$, where $\phi$ is increasing and concave, and $\E|\phi\{Y(d)\}|<\infty$ for $d\in\{0,1\}$. Concavity places less value on mean-preserving spreads in outcomes, which we call inequality aversion.\footnote{It also evaluates idiosyncratic outcome risk.} Conditional independence of the policy randomization gives
\[
V_\phi(G)
=
\E\bigl[G(X)\phi\{Y(1)\}+\{1-G(X)\}\phi\{Y(0)\}\bigr].
\]
The relevant treatment contrast is therefore $\tau_\phi(x)=\E[\phi\{Y(1)\}-\phi\{Y(0)\}\mid X=x]$, and the unconstrained oracle treats when this contrast is nonnegative: $G_\phi^\star(x)=\1\{\tau_\phi(x)\geq0\}$. 

The training data come from a randomized experiment, sampled from the same  population. The researcher observes $S=(S_1,\ldots,S_n)\sim P_{\tr}^n$, where $S_i=(X_i,D_i,Y_i)$ and $Y_i=D_iY_i(1)+(1-D_i)Y_i(0)$. The law $P_{\tr}$ draws covariates and potential outcomes from $P$ and then randomizes treatment within covariate strata. The following assumption makes this design explicit and imposes the regularity used below.

\begin{assumption}[Training sample and test outcome]\label[assumption]{ass:main}
The following conditions hold.
\begin{enumerate}[label=(\roman*),leftmargin=2.1em]
\item Integrability: $\E|Y(0)|+\E|Y(1)|<\infty$.
\item Stratified randomization: for the training experiment,
$
D\perp\{Y(0),Y(1)\}\mid X,
$ and $
\Pp(D=1\mid X=x)\in[\underline e,1-\underline e]
$
for some $\underline e\in(0,1/2)$ and all $x\in\cX$.
\item Margin condition: there exist $C_\tau<\infty$ and $\kappa>0$ such that, for all $t>0$,
$
\Pp_X\{0<|\tau(X)|\leq t\}\leq C_\tau t^\kappa.
$
\end{enumerate}
\end{assumption}

Each part of Assumption~\ref{ass:main} has a separate role. Integrability requires only first moments, rather than bounded outcomes, and ensures that welfare is well defined for every $G\in\cG$. Randomization identifies the CATE from the training sample. The margin condition limits how much of the test population is nearly indifferent between treatment and no treatment. 

\subsection{The researcher's decision problem}

The researcher's choice precedes both the training and test lotteries. A learning algorithm is a measurable map $A:\cS^n\to\cG$, and at date zero the researcher chooses $A$ from an admissible class $\cA$. The sequence of events is as follows. We abbreviate the researcher by R, and the policymaker by PM.

\begin{description}[leftmargin=5.2em,labelwidth=5.2em,style=nextline]
\item[$t=0$.] R decides on an algorithm $A\in\cA$ before observing the training data.
\item[Training.] The randomized training experiment produces $S\sim P_{\tr}^n$.
\item[$t=1$.] R recommends the data-driven policy $\widehat G=A(S)$, and the PM decrees it. At this date, R evaluates the recommendation using the PM's expected utility conditional upon the training sample.
\item[Test.] An independent test unit is assigned according to $\widehat G$, and only the selected potential outcome $Y^{\widehat G}$ realizes.
\item[$t=2$.] The PM receives utility $U_{\PM}(\widehat G)=\phi(Y^{\widehat G})$.
\end{description}

The two lotteries enter the researcher's objective in different places. Conditional on $S$, the policymaker values the learned policy as $\E_P[\phi\{Y^{A(S)}\}\mid S]=V_\phi\{A(S)\}$. The policymaker accepts the recommendation rather than choosing among algorithms. The researcher's date-one utility is $U_{\RR}(A)=\psi[V_\phi\{A(S)\}]$, where the increasing, concave function $\psi$ expresses aversion to variation in policy value across training samples. At date zero, R solves
\begin{equation}\label{eq:researcher_problem}
A_\star
\in
\argmax_{A\in\cA}
\E_{P_{\tr}^n}
\left\{
\psi\left(
\E_P\left[
\phi\left\{Y^{A(S)}\right\}
\mid S
\right]
\right)
\right\}.
\end{equation}

The criterion in \eqref{eq:researcher_problem} makes reliability across training samples part of statistical treatment choice. R can be interpreted as a benevolent adviser who dislikes exposing the population to a sampling gamble. The same reduced-form preference can represent costly policy reversal or reputational losses from a recommendation that performs poorly because of the realized sample. Alternatively, $\psi$ can be an ex ante social criterion delegated to R. The two-agent language simply keeps the training and test lotteries conceptually separate; it does not require their objectives to conflict.

\begin{remark}[Taxonomy]\label[remark]{rmk:taxonomy}
When $\psi$ is linear, \eqref{eq:researcher_problem} reduces to maximizing expected transformed welfare, or equivalently minimizing
$
R_\phi(A)
:=
V_\phi(G_\phi^\star)
-
\E_{P_{\tr}^n}[V_\phi\{A(S)\}].
$
If $\phi$ is concave, this objective is consistent with equality-minded or fairness-oriented treatment choice \citep{kitagawa_tetenov_2021,viviano_bradic_2024}. If $\phi$ is also linear, it is the standard expected-regret problem \citep{kitagawa_tetenov_2018}. When $\phi$ is linear and $\psi$ is concave, the PM is utilitarian but R is sampling-risk averse, which is the case studied below. Concavity of both transformations accommodates the two concerns simultaneously without requiring them to be linked. Appendix~\ref{app:joint_concavity} states the corresponding statistical results.
\end{remark}

\begin{remark}[Sampling risk averse treatment choice]\label[remark]{rmk:manski_tetenov}
Risk aversion over sampling is directly inspired by \citet{manski_tetenov_2007}. They study an innovation experiment without covariates, with binary outcomes, and with a known status quo success probability. A statistical treatment rule maps the experimental sample into the fraction of the population receiving the innovation, and the planner evaluates the rule through the sampling expectation of a concave transformation of the population mean outcome. It has the same preference structure as \eqref{eq:researcher_problem} with $\phi(y)=y$ and concave $\psi$. Their analysis studies admissibility of fractional treatment rules in that experiment. Ours connects sampling-risk preferences to algorithmic stability and welfare guarantees for learned policies with covariates. The mean-variance criterion introduced next is motivated by a local approximation to the exact expected-utility objective.
\end{remark}

The nested objective resembles several second-order models of uncertainty. Smooth ambiguity applies an outer transformation to inner expected utility;  variational and multiplier preferences penalize departures from a reference model;  and recursive models emphasize when uncertainty is resolved \citep{klibanoff_et_al_2005,maccheroni_marinacci_rustichini_2006,strzalecki_2011,strzalecki_2013,denti_pomatto_2022}. The robust mean-variance analysis of \citet{maccheroni_marinacci_ruffino_2013} is especially close mathematically. The distinction between objective and subjective mixtures in \citet{ghirardato_et_al_2003} helps interpret fractional policy randomization. Unlike those works, every probability law in our setting is objective, and the data-generating process is fixed.

\begin{remark}[Outcome levels versus policy gains]\label[remark]{rmk:gains}
The criterion $V_\phi(G)=\E\{\phi(Y^G)\}$ evaluates the distribution of post-policy outcome levels. Without additional structure, concavity of $\phi$ therefore represents aversion to dispersion in outcomes, not specifically dispersion in treatment gains. A gain-based alternative is
$
V_\phi^\Delta(G)
=
\E\left(\phi\left[D^G\{Y(1)-Y(0)\}\right]\right),
$
whose transformed conditional treatment contrast is
$
\tau_\phi^\Delta(x)
=
\E\left[\phi\{Y(1)-Y(0)\}-\phi(0)\mid X=x\right].
$
All compound-lottery arguments continue to apply after replacing $V_\phi$ by $V_\phi^\Delta$. In the utilitarian case $\phi(y)=y$ studied below, outcome welfare and gain welfare differ only by the policy-invariant constant $\E\{Y(0)\}$, so they induce the same policy target and regret.
\end{remark}

\subsection{Risk-aware regret}

For simplicity, take the policymaker to be utilitarian. Setting $\phi(y)=y$, population welfare is
\begin{equation*}\label{eq:welfare}
W(G)
:=
\E(Y^G)
=
\E\{Y(0)\}+\E\{\tau(X)G(X)\}.
\end{equation*}
Integrability is enough to make this welfare comparison finite. 
The unconstrained oracle is $G^\star(x)=\1\{\tau(x)\geq0\}$, with welfare $W^\star:=W(G^\star)$. Standard expected regret is $R(A):=W^\star-\E_{P_{\tr}^n}[W\{A(S)\}]$; when $\psi$ is linear, minimizing this quantity is exactly \eqref{eq:researcher_problem}.

A local expected-utility calculation motivates adding sampling variation to expected regret. Write $Z_A=W\{A(S)\}$ and define R's certainty equivalent by $\psi\{\CE_\psi(A)\}=\E\{\psi(Z_A)\}$. The next result makes the mean-variance approximation uniform.

\begin{lemma}[Mean-variance approximation]\label[lemma]{lem:taylor}
Let $\mathcal J$ be an open interval containing every feasible welfare value. Suppose $\psi\in C^3(\mathcal J)$, $\inf_{w\in\mathcal J}\psi'(w)>0$, and
$
-\frac{\psi''(w)}{\psi'(w)}=\gamma\geq0
$ for every $w\in\mathcal J.
$
There exists a finite constant $C_\psi$, depending only on $\psi$ and the feasible welfare interval, such that every $A\in\cA$ satisfies
$$
\CE_\psi(A)
=
\E(Z_A)-\frac{\gamma}{2}\V(Z_A)+r_\psi(A)
,\quad 
|r_\psi(A)|
\leq
C_\psi
\left\{
\E|Z_A-\E(Z_A)|^3+\V(Z_A)^2
\right\}.
$$
Consequently, the expansion is uniform over any sequence $\cA_n$ for which the centered third moments vanish uniformly:
$\sup_{A\in\cA_n}\E|Z_A-\E(Z_A)|^3\to0.
$
\end{lemma}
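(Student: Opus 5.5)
The plan is to use the constant-risk-aversion hypothesis to put $\psi$ in closed form. This turns the certainty equivalent into a cumulant-generating-function expression, which I then expand.

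\emph{Feasible welfare is bounded.} Every feasible welfare value lies in a compact interval $[a,b]\subset\mathcal J$. Indeed, $W(G)=\E\{Y(0)\}+\E\{\tau(X)G(X)\}$ with $0\le G\le 1$. So
\[
\E\{Y(0)\}-\E\{\tau(X)_-\}\le W(G)\le \E\{Y(0)\}+\E\{\tau(X)_+\},
\]
and both bounds are finite by Assumption~\ref{ass:main}(i). Set $a$ and $b$ equal to these two bounds, and write $L=b-a$. All constants below will depend only on $\gamma$ and $L$, that is, on $\psi$ and the feasible welfare interval.

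\emph{Reduction to CARA form.} The equation $\psi''=-\gamma\psi'$ with $\psi'>0$ integrates on $\mathcal J$ to $\psi'(w)=c\,e^{-\gamma w}$ with $c>0$. Hence $\psi(w)=\alpha+\beta(-e^{-\gamma w}/\gamma)$ with $\beta>0$ when $\gamma>0$, and $\psi$ is affine when $\gamma=0$. The certainty equivalent is invariant to positive affine transformations of $\psi$.

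The case $\gamma=0$ is immediate: $\CE_\psi(A)=\E(Z_A)$ and $r_\psi\equiv0$.

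For $\gamma>0$, write $\mu=\E(Z_A)$ and $\xi=Z_A-\mu$. Solving the defining equation for the certainty equivalent gives the exact identity
\[
\CE_\psi(A)=\mu-\tfrac1\gamma\log M,\qquad M=\E\{e^{-\gamma\xi}\}.
\]
This removes any need to control a Taylor expansion of $\psi^{-1}$.

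\emph{Two Taylor steps.} First, since $|\xi|\le L$, a third-order expansion of the exponential with Lagrange remainder gives
\[
M=1+\tfrac{\gamma^2}{2}\V(Z_A)+R_1,\qquad |R_1|\le \tfrac{\gamma^3e^{\gamma L}}{6}\,\E|\xi|^3 .
\]
Second, $M\in[e^{-\gamma L},e^{\gamma L}]$, so $u=M-1$ ranges over a fixed compact interval on which $\log(1+u)=u+R_2$ with $|R_2|\le C u^2$, where $C=C(\gamma,L)$.

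Then $u^2\le 2\{\gamma^4\V(Z_A)^2/4+R_1^2\}$. For the second term, $R_1^2\le C'\,\E|\xi|^3$, because $|R_1|$ is itself bounded by a constant times $L^3$.

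Substituting, the remainder is $r_\psi(A)=-(R_1+R_2)/\gamma$. This gives
\[
\CE_\psi(A)=\mu-\tfrac{\gamma}{2}\V(Z_A)+r_\psi(A),\qquad |r_\psi(A)|\le C_\psi\{\E|\xi|^3+\V(Z_A)^2\},
\]
with $C_\psi$ depending only on $\gamma$ and $L$.

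\emph{Uniform consequence.} By Lyapunov's inequality, $\V(Z_A)\le(\E|\xi|^3)^{2/3}$, so $\V(Z_A)^2\le(\E|\xi|^3)^{4/3}$. Hence $\sup_{A\in\cA_n}|r_\psi(A)|\to0$ whenever the centered third moments vanish uniformly over $\cA_n$.

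\emph{Main obstacle.} The delicate step is making $C_\psi$ free of $A$. This needs $M$ bounded away from $0$ and $\infty$, which is where the compactness of the feasible welfare range is essential. I would first verify that bound from integrability as in the first step. If it failed, the fallback would be a direct third-order expansion of $\psi$ and $\psi^{-1}$ using $\inf\psi'>0$ and bounded $\psi'''$ on $[a,b]$. That route yields the same form of remainder, but with messier constants.
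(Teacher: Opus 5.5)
Your proof is correct, but it takes a different route from the paper's. You use constant absolute risk aversion globally. Integrating $\psi''=-\gamma\psi'$ shows that $\psi$ is a positive affine transform of $-e^{-\gamma w}$ (or affine when $\gamma=0$). The certainty equivalent then has the exact cumulant form $\CE_\psi(A)=\E(Z_A)-\gamma^{-1}\log\E(\exp[-\gamma\{Z_A-\E(Z_A)\}])$, and the remainder comes from two scalar expansions, of $\exp$ and of $\log(1+u)$.

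The paper never solves for $\psi$. Its argument runs as follows:
\begin{enumerate}
\item It expands $\psi(Z_A)$ to second order around $\E(Z_A)$, with a cubic remainder controlled by $\sup|\psi'''|$ on the compact welfare interval.
\item Using $\inf\psi'>0$ and the mean value theorem, it gets the a priori bound $|\CE_\psi(A)-\E(Z_A)|\leq C_\psi\V(Z_A)$.
\item It expands $\psi\{\CE_\psi(A)\}$ to first order around $\E(Z_A)$, with error at most a constant times $\V(Z_A)^2$, and equates the two expansions.
\end{enumerate}
The hypothesis $-\psi''/\psi'=\gamma$ enters only at the very end, to replace $\psi''\{\E(Z_A)\}/[2\psi'\{\E(Z_A)\}]$ by $-\gamma/2$. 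This is essentially the fallback you sketch, except that the a priori bound removes any need to expand $\psi^{-1}$.

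Your route buys three things. It gives explicit constants depending only on $\gamma$ and $L=\lVert\tau\rVert_{L_1(P_X)}$. It needs nothing beyond $\psi'>0$ and the risk-aversion identity itself. And it ties the lemma directly to the CARA decomposition that the paper later uses for its exact certainty-equivalent bounds. The paper's route, by contrast, uses only local bounds on $\psi'$, $\psi''$, $\psi'''$, so it carries over unchanged to non-CARA $\psi$. There it delivers the local coefficient $\psi''(\mu)/\{2\psi'(\mu)\}$ at $\mu=\E(Z_A)$, so the price of variance depends on the algorithm through $\mu$; this is why constant $\gamma$ is imposed.

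Finally, the worry in your last paragraph is moot. You already established compactness, and Jensen's inequality gives $M\geq1$. Hence $u\geq0$ and $|\log(1+u)-u|\leq u^2/2$, with no lower bound on $M$ needed.
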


The lemma expresses the certainty equivalent as average welfare minus a common price for welfare variance. Constant absolute risk aversion makes that price the same across algorithms, while the smoothness condition controls the approximation uniformly.

\begin{corollary}[Optimizer approximation]\label[corollary]{cor:optimizer}
For a candidate class $\cA_n$, define the mean-variance objective
$
M_\rho(A)=\E(Z_A)-\rho\V(Z_A),
$ the approximation error $
\epsilon_n=\sup_{A\in\cA_n}|r_\psi(A)|,
$ and the risk aversion $
\rho=\gamma/2.
$
Consider the optimizers $A_n^{\CE}\in\argmax_{A\in\cA_n}\CE_\psi(A)$ and $A_n^{\mathrm{MV}}\in\argmax_{A\in\cA_n}M_\rho(A)$. Then
$
0
\leq
\CE_\psi(A_n^{\CE})-\CE_\psi(A_n^{\mathrm{MV}})
\leq
2\epsilon_n,
$
and
$
0
\leq
M_\rho(A_n^{\mathrm{MV}})-M_\rho(A_n^{\CE})
\leq
2\epsilon_n.
$ 
In particular, if the mean-variance maximizer is unique and its objective gap compared to the next best algorithm exceeds $2\epsilon_n$, then the certainty-equivalent and mean-variance maximizers coincide.
\end{corollary}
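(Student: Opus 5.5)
The plan is to treat both optimizers through the uniform identity $\CE_\psi(A)=M_\rho(A)+r_\psi(A)$, which follows from Lemma~\ref{lem:taylor} with $\rho=\gamma/2$. By the definition of $\epsilon_n$, this gives
\[
|\CE_\psi(A)-M_\rho(A)|\leq\epsilon_n \quad\text{for every } A\in\cA_n .
\]
This is the only input needed; everything else is the standard "two objectives uniformly close implies near-optimality of each other's maximizers" argument. I will assume the maximizers exist, as the statement presupposes. If $\epsilon_n=\infty$, the upper bounds are vacuous.

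For the first chain, the lower bound $0\leq\CE_\psi(A_n^{\CE})-\CE_\psi(A_n^{\mathrm{MV}})$ is immediate, because $A_n^{\CE}$ maximizes $\CE_\psi$ over $\cA_n$ and $A_n^{\mathrm{MV}}\in\cA_n$. For the upper bound I will write
\[
\CE_\psi(A_n^{\CE})\leq M_\rho(A_n^{\CE})+\epsilon_n\leq M_\rho(A_n^{\mathrm{MV}})+\epsilon_n\leq \CE_\psi(A_n^{\mathrm{MV}})+2\epsilon_n .
\]
Here the middle inequality is optimality of $A_n^{\mathrm{MV}}$ for $M_\rho$. The second chain is symmetric with the roles of $\CE_\psi$ and $M_\rho$ swapped. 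Its lower bound is optimality of $A_n^{\mathrm{MV}}$ for $M_\rho$. Its upper bound is
\[
M_\rho(A_n^{\mathrm{MV}})\leq\CE_\psi(A_n^{\mathrm{MV}})+\epsilon_n\leq\CE_\psi(A_n^{\CE})+\epsilon_n\leq M_\rho(A_n^{\CE})+2\epsilon_n .
\]

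For the final claim, suppose $A_n^{\mathrm{MV}}$ is the unique maximizer of $M_\rho$, and suppose every $A\neq A_n^{\mathrm{MV}}$ in $\cA_n$ satisfies $M_\rho(A_n^{\mathrm{MV}})-M_\rho(A)>2\epsilon_n$. I read the gap condition as a strict inequality against the supremum over the other candidates. If $A_n^{\CE}\neq A_n^{\mathrm{MV}}$, the second chain would give $M_\rho(A_n^{\mathrm{MV}})-M_\rho(A_n^{\CE})\leq2\epsilon_n$, which contradicts the gap. Hence the two maximizers coincide.

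There is no real obstacle. The only care needed is to confirm that Lemma~\ref{lem:taylor} applies to every member of $\cA_n$, so that $\epsilon_n$ is a genuine uniform bound, and to state the gap condition precisely (strict and uniform over competitors).
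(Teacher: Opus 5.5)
Your proposal is correct and takes essentially the same approach as the paper: it writes $\CE_\psi(A)=M_\rho(A)+r_\psi(A)$, uses each maximizer's optimality for its own objective, bounds the remainder difference by $2\epsilon_n$, and gets the coincidence claim by contradicting the second bound. Your chained inequalities are only a rearrangement of the paper's difference-of-remainders step.
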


The corollary shows when the approximation can guide algorithm choice: a uniformly small approximation error implies a small loss in certainty equivalent from choosing an algorithm by its mean-variance approximation.

These calculations motivate our definition of risk-aware regret, which adds a penalty for sampling risk to ordinary expected regret. We study this criterion in its own right. The preceding results explain when it approximates the expected-utility choice.

\begin{definition}[Risk-aware regret]\label[definition]{def:riskaware}
For $\rho\geq0$, the risk-aware regret of algorithm $A$ is
\begin{equation}\label{eq:riskaware}
R_{\RA}(A)
:=
W^\star
-
\E[W\{A(S)\}]
+
\rho\V[W\{A(S)\}]
=
R(A)+\rho\V(Z_A).
\end{equation}
\end{definition}

The two terms in \eqref{eq:riskaware} separate expected performance from reliability. The first is ordinary expected welfare regret. The second prices variation in the welfare of the learned policy across experimental samples. The parameter $\rho$ has units inverse to welfare, so the terms are comparable. The criterion is a second-order certainty-equivalent loss.

\begin{remark}[Nonlinear regret]\label[remark]{rmk:nonlinear_regret}
Let $\operatorname{Reg}(A,S)=W^\star-W\{A(S)\}$. The mean-square-regret criterion of \citet{kitagawa_lee_qiu_2026} satisfies
$
\E\{\operatorname{Reg}(A,S)^2\}
=
R(A)^2+\V[W\{A(S)\}],
$
whereas \eqref{eq:riskaware} is $R(A)+\rho\V[W\{A(S)\}]$. The criteria share the same variance component but differ in the role of mean regret. Our criterion preserves the canonical expected-regret benchmark and allows the price of sampling risk to vary independently.
\end{remark}

\begin{remark}[Stochastic dominance]\label[remark]{rmk:sd}
\citet{manski_tetenov_2023} take the sampling distribution of welfare as the primitive object and require comparisons to respect stochastic dominance. Risk-aware regret also depends on the sampling distribution, but only through its initial two moments. A mean-variance functional need not respect first-order stochastic dominance globally because a stochastically better distribution can have a larger variance. Our criterion is therefore not a substitute for their general dominance framework. Its role is narrower: it is a locally expected-utility-based and statistically tractable target for studying how algorithmic stability controls sampling risk.
\end{remark}

Risk-aware regret is a local mean-variance criterion. It augments standard expected regret without claiming to order arbitrary welfare distributions. Using its two components, we interpret algorithm choice as a tradeoff between expected welfare and sampling risk.
\section{A tradeoff: Expected welfare versus sampling risk}\label{sec:design}

The researcher manages sampling risk by choosing among learning algorithms before seeing the data. The choice is the same $A\in\cA$ introduced in \Cref{sec:compound}, and the recommendation is $\widehat G=A(S)$. Different tuning parameters are incorporated as different elements of $\cA$.\footnote{We write $\cA_n$ only when sample-size dependence matters. For complete policy-vote bagging, the candidates will be $\cA=\{A_m:1\leq m\leq n\}$.}

\begin{example}[Deterministic rule]\label[example]{ex:plugin}
Hard thresholding can make a deterministic policy sensitive near zero. Let $\widehat\tau_h$ be a CATE estimator whose hyperparameter $h$ may index bandwidth, tree depth, leaf size, or regularization strength. A deterministic algorithm $A\in\cA$ yields  the policy
$
\widehat G(x)
=
\{A(S)\}(x)
=
\1\{\widehat\tau_h(S,x)\geq0\}.
$
Here $\widehat\tau_h$ is the score function and $t\mapsto\1(t\geq0)$ is the choice function. The hyperparameter $h$ is absorbed into the definition of the algorithm $A$. Because the choice function is discontinuous, small changes in the training sample can flip treatment decisions near zero and thereby create sampling variability in welfare.
\end{example}

\begin{example}[Fractional rule]\label[example]{ex:fractional}
A fractional candidate smooths the treatment choice near zero:
$
\widehat G(x)
=
\{A(S)\}(x)
=
\delta_c\{\widehat\tau_h(S,x)\},
$
where $\delta_c(t)=0$ if $t\leq -c$, $\delta_c=(t+c)/(2c)$ if $|t|\leq c$, and $\delta_c(t)=1$ if $t\geq c$ for some $c>0$.
The pair $(h,c)$ is again absorbed into the definition of $A$. Increasing $c$ reduces sensitivity near the treatment threshold, potentially lowering sampling risk at the cost of expected welfare.
\end{example}

A risk-aware research design chooses the best attainable combination of expected regret and welfare variation. For the candidate class $\cA$, R solves
$
A_\star
\in
\argmin_{A\in\cA}
R_{\RA}(A).
$ 
The lower-left Pareto boundary summarizes the attainable tradeoff. Each algorithm generates a pair $\bigl(R(A),\V(Z_A)\bigr)$, and no algorithm on the boundary can be improved in both expected regret and welfare variance by another candidate. As in the mean-variance frontier of portfolio theory, different policy learning algorithms offer different combinations of return and risk \citep{markowitz1952}.  
Here the frontier concerns population welfare across training samples, and $\rho$ determines how much expected welfare R is willing to exchange for greater reliability.

The frontier gives a geometric interpretation of the optimal algorithm. For each $\rho>0$, R selects the attainable point on the lowest iso-risk-aware-regret line. At a smooth interior solution, that line is tangent to the lower-left boundary, and a larger $\rho$ favors a point with less sampling risk. \Cref{fig:frontier} illustrates this geometry. Appendix~\ref{app:sec3_proofs} records the slope calculation.

\begin{figure}[t]
\centering
\begin{tikzpicture}[
    x=1cm, y=0.9cm,
    font=\normalsize,
    >=Latex,
    axis/.style={->,line width=0.9pt,black!75},
    frontier/.style={RoyalBlue!85!black,line width=2pt},
    iso/.style={
        black!65,line width=1.2pt,
        dash pattern=on 4pt off 3pt
    }
]

% Axes
\draw[axis] (0,0) -- (6.1,0);
\draw[axis] (0,0) -- (0,3.9);
\node[below=7pt] at (3.05,0)
    {Expected regret $R(A)$};
\node[rotate=90,above=7pt] at (0,1.95)
    {Welfare variance $\V(Z_A)$};

% Selected point and exact tangent slope
\pgfmathsetmacro{\rstar}{3}
\pgfmathsetmacro{\vstar}{4.8/(\rstar+0.45)-0.1}
\pgfmathsetmacro{\tangentslope}{-4.8/(\rstar+0.45)^2}

% Iso-risk-aware-regret line
\draw[iso,domain=0.8:5.9,samples=2,variable=\x]
    plot ({\x},{\vstar+\tangentslope*(\x-\rstar)});

% Efficient frontier
\draw[frontier,domain=0.9:5.85,smooth,
      samples=100,variable=\x]
    plot ({\x},{4.8/(\x+0.45)-0.1});

% Optimal choice
\fill (\rstar,\vstar) circle[radius=2.2pt];
\node[above right=3pt] at (\rstar,\vstar) {$A_\star$};

\end{tikzpicture}

\caption{Algorithmic design and the welfare-risk frontier}
\label{fig:frontier}
\caption*{\footnotesize \textit{Notes:}
The blue curve shows the efficient combinations of expected
regret and welfare variance attainable within $\cA$.
Along the dashed line, $R(A)+\rho\V(Z_A)$ is constant.
At the illustrated optimum $A_\star$, this line is tangent
to the frontier, with slope $-1/\rho$.}
\end{figure}

Estimating the frontier of expected regret and welfare variation is a separate inferential problem. The support-function methods of \citet{liu_molinari_2025} for a fairness-accuracy frontier may provide a useful starting point, but we do not estimate the boundary here. We instead ask which aspects of an algorithm control its position on the welfare variation axis. Algorithmic stability supplies the answer.
\section{The central role of algorithmic stability}\label{sec:stability}

This section shows how a property of the policy learning algorithm controls sampling risk. We define welfare stability, then use it to bound welfare variance and downside risk. Finally, we connect those guarantees to expected utility under constant absolute risk aversion.

\subsection{Welfare stability}

Welfare stability asks how much one observation can change the value of the learned policy. Fix $A:\cS^n\to\cG$ and write $A(S)=\widehat G_S$. For $i\in\{1,\ldots,n\}$ and $s'\in\cS$, let $S^{i\leftarrow s'}=(S_1,\ldots,S_{i-1},s',S_{i+1},\ldots,S_n)$ denote the sample obtained by replacing observation $i$.

\begin{definition}[Welfare stability]\label[definition]{def:stability}
The welfare-stability parameter of $A$ is
\begin{equation*}\label{eq:stability}
\beta_n(A)
:=
\sup_{S\in\cS^n}
\sup_{1\leq i\leq n}
\sup_{s'\in\cS}
\left|
W\{A(S)\}
-
W\{A(S^{i\leftarrow s'})\}
\right|.
\end{equation*}
A sequence $A_n$ is \emph{welfare stable} if $\beta_n(A_n)\to0$. 
\end{definition}

Welfare stability is weaker than requiring every prediction to be stable. It controls the scalar welfare of the policy, so changes at different covariate values may offset one another. Welfare stability requires only that the influence of one observation vanishes. For the generic variance bound below to vanish, the stronger rate $\beta_n(A_n)=o(n^{-1/2})$ is needed. The next result gives simple conditions under which familiar stability guarantees imply welfare stability.

\begin{lemma}[From algorithmic stability to welfare stability]\label[lemma]{lem:policy_to_welfare}
Suppose Assumption \ref{ass:main}(i) holds and define the algorithmic stability parameter 
\[
\Gamma_n(A)
=
\sup_{S,i,s'}\sup_{x\in\cX}
\left|
\{A(S)\}(x)-\{A(S^{i\leftarrow s'})\}(x)
\right|.
\]
Then the welfare stability parameter satisfies
$
\beta_n(A)
\leq
\Gamma_n(A) \lVert\tau\rVert_{L_1(P_X)}.
$ 

In particular, if $\{A(S)\}(x)=\delta\{\zeta(S,x)\}$, the choice function $\delta$ is $L_\delta$-Lipschitz, and the score function $\zeta$ has algorithmic stability parameter at most $s_n$, then
$
\beta_n(A)
\leq
L_\delta s_n\lVert\tau\rVert_{L_1(P_X)}.
$
\end{lemma}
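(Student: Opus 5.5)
The plan is to reduce the welfare difference to a pointwise difference of policies integrated against $\tau$, using the linear representation of utilitarian welfare from \Cref{sec:compound}. Fix $S$, $i$, and $s'$, and write $G=A(S)$ and $G'=A(S^{i\leftarrow s'})$. Both are measurable maps into $[0,1]$, so they belong to $\cG$. Assumption~\ref{ass:main}(i) gives $\E|\tau(X)|\leq \E|Y(1)|+\E|Y(0)|<\infty$, since $|\tau(x)|\leq \E\{|Y(1)|+|Y(0)|\mid X=x\}$. Hence $\lVert\tau\rVert_{L_1(P_X)}<\infty$, and $W(G)=\E\{Y(0)\}+\E\{\tau(X)G(X)\}$ is finite for every $G\in\cG$. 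The constant $\E\{Y(0)\}$ does not depend on the policy, so it cancels:
\[
W(G)-W(G')=\E\bigl[\tau(X)\{G(X)-G'(X)\}\bigr].
\]

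The first claim then follows from the triangle inequality. We have
\[
|W(G)-W(G')|\leq \E\bigl[|\tau(X)|\,|G(X)-G'(X)|\bigr]\leq \sup_{x\in\cX}|G(x)-G'(x)|\,\lVert\tau\rVert_{L_1(P_X)}.
\]
The supremum is at most $\Gamma_n(A)$ by the definition of $\Gamma_n(A)$. This bound holds uniformly in $(S,i,s')$, so taking the supremum over these gives $\beta_n(A)\leq\Gamma_n(A)\lVert\tau\rVert_{L_1(P_X)}$.

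For the second claim, it suffices to show that $\Gamma_n(A)\leq L_\delta s_n$ and then apply the first claim. Here I read ``the score function $\zeta$ has algorithmic stability parameter at most $s_n$'' as $\sup_{S,i,s',x}|\zeta(S,x)-\zeta(S^{i\leftarrow s'},x)|\leq s_n$. Under that reading, the Lipschitz property gives, pointwise in $x$,
\[
|\delta\{\zeta(S,x)\}-\delta\{\zeta(S^{i\leftarrow s'},x)\}|\leq L_\delta|\zeta(S,x)-\zeta(S^{i\leftarrow s'},x)|\leq L_\delta s_n.
\]
Taking suprema over $(S,i,s',x)$ yields the bound on $\Gamma_n(A)$. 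We also need $\delta$ to take values in $[0,1]$ so that $A(S)\in\cG$; this is implicit in $A:\cS^n\to\cG$, as for $\delta_c$ in Example~\ref{ex:fractional}.

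There is no real obstacle here. The only points needing care are measurability and integrability, so that the interchange in the welfare difference is justified. Both follow from the integrability bound on $\tau$ and from $|G-G'|\leq1$, which make $\tau(X)\{G(X)-G'(X)\}$ integrable. Everything else is a direct chain of inequalities.
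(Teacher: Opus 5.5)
Your proof is correct and matches the paper's argument step for step. You cancel the baseline $\E\{Y(0)\}$, bound $\E[\tau(X)\{G(X)-G'(X)\}]$ by $\Gamma_n(A)\lVert\tau\rVert_{L_1(P_X)}$, take suprema, and then use the Lipschitz property of $\delta$ to get $\Gamma_n(A)\leq L_\delta s_n$. Your integrability remarks are a small addition the paper leaves implicit.
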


The lemma makes welfare stability verifiable through well-known algorithmic stability results.\footnote{Here, $\Gamma_n$ is standard, while $s_n$ follows from e.g. \citet{sun_xiang_2025}.} 
The next subsection turns this welfare sensitivity into variance and risk guarantees.

\subsection{Main result}

Our central result translates welfare stability into direct control of sampling risk.

\begin{theorem}[Stable policy learning]\label[theorem]{thm:stability}
If $S_1,\ldots,S_n$ are independent and $\beta_n(A)<\infty$, then
\begin{equation*}\label{eq:stable_ra}
R_{\RA}(A)
\leq
R(A)
+
\rho \frac{n\beta_n(A)^2}{4}.
\end{equation*}
Moreover, for every $t>0$,
\[
\Pp\left[
W^\star-W\{A(S)\}
\geq
R(A)+t
\right]
\leq
\exp\left\{-\frac{2t^2}{n\beta_n(A)^2}\right\}.
\]
\end{theorem}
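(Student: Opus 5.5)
The plan is to treat $f(S_1,\ldots,S_n):=W\{A(S)\}=Z_A$ as a function of independent arguments with bounded differences, and then invoke two classical concentration tools: the Efron--Stein inequality for the variance bound, and McDiarmid's bounded-differences inequality for the tail bound. By Definition~\ref{def:stability}, for every $S$, $i$ and $s'$ we have $|f(S)-f(S^{i\leftarrow s'})|\leq\beta_n(A)=:\beta$, so $f$ has bounded differences with constants $c_i=\beta$ for all $i$. Measurability of $f$ follows from measurability of $A$ and of $G\mapsto W(G)$. Under Assumption~\ref{ass:main}(i), $|W(G)|\leq \E|Y(0)|+\E|Y(1)|$ for all $G\in\cG$, so $f$ is bounded and all moments exist. (Even without this, finiteness of $\beta$ together with one finite value would bound oscillations.)

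For the first claim, I would use the sharp form of the variance bound for bounded differences. Write $f=\E f+\sum_i \Delta_i$ with martingale differences $\Delta_i=\E[f\mid S_1,\ldots,S_i]-\E[f\mid S_1,\ldots,S_{i-1}]$, so that $\V(f)=\sum_i\E\Delta_i^2$. Conditional on $S_1,\ldots,S_{i-1}$, the increment $\Delta_i$ is a mean-zero function of $S_i$. By independence, this function takes values in an interval of length at most $\beta$, since changing $S_i$ alone moves the conditional expectation by at most $\beta$. Popoviciu's inequality then gives $\E[\Delta_i^2\mid S_1,\ldots,S_{i-1}]\leq\beta^2/4$. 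Summing yields $\V(Z_A)\leq n\beta^2/4$. Substituting into Definition~\ref{def:riskaware} gives $R_{\RA}(A)=R(A)+\rho\V(Z_A)\leq R(A)+\rho n\beta^2/4$.

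For the tail bound, note that the event $W^\star-W\{A(S)\}\geq R(A)+t$ is exactly the event $\E(Z_A)-Z_A\geq t$, because $R(A)=W^\star-\E Z_A$. McDiarmid's inequality applied to $-f$ with $c_i=\beta$ gives probability at most $\exp\{-2t^2/\sum_i c_i^2\}=\exp\{-2t^2/(n\beta^2)\}$. This follows the same martingale route: Hoeffding's lemma applied to each $\Delta_i$, which lies in a conditional interval of length $\beta$, bounds the conditional moment generating function by $\exp(\lambda^2\beta^2/8)$, and a Chernoff bound optimized at $\lambda=4t/(n\beta^2)$ finishes the argument.

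The main obstacle is the careful justification that, conditional on the past, $\Delta_i$ ranges over an interval of length $\beta$ rather than $2\beta$. This matters because it is what gives the constant $1/4$ rather than $1$, and the exponent $2$ rather than $1/2$. To handle it, I would define $\sup_{s}$ and $\inf_{s}$ of $\E[f\mid S_{1:i-1},S_i=s]$, where the expectation is taken over the independent $S_{i+1:n}$. Their difference is at most $\beta$ by pointwise application of the bounded-difference property inside the integral. The case $\beta=0$ is trivial, since then $Z_A$ is constant.
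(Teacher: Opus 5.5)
Your proof is correct, and your tail-bound argument is the same as the paper's. Both of you identify the event $\{W^\star-W\{A(S)\}\geq R(A)+t\}$ with $\{\E(Z_A)-Z_A\geq t\}$, and both apply the exponential form of McDiarmid's inequality with a Chernoff bound optimized at $u=4t/\{n\beta_n(A)^2\}$.

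The variance step uses a different decomposition. The paper uses the Efron--Stein inequality, $\V(Z_A)\leq\sum_i\E\{\V(Z_A\mid S_{-i})\}$, and applies Popoviciu to the law of $Z_A$ given \emph{all other} observations. In that conditional law, $Z_A$ is a function of $S_i$ alone, so the range bound $\beta_n(A)$ follows directly from Definition~\ref{def:stability}, with no integration step. You instead use the exact Doob-martingale identity $\V(Z_A)=\sum_i\E\Delta_i^2$ and apply Popoviciu to $\Delta_i$ given the \emph{past}. That requires your extra argument that integrating out $S_{i+1},\ldots,S_n$ keeps the oscillation at most $\beta_n(A)$, which you handle correctly.

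Your route unifies the two claims: the single fact that each $\Delta_i$ lies in a conditional interval of length $\beta_n(A)$ drives both the variance bound (via Popoviciu) and the moment-generating-function bound (via Hoeffding's lemma). In fact you could drop the separate variance argument. Because $Z_A$ is bounded, $\E[\exp\{u(Z_A-\E Z_A)\}]\leq\exp\{u^2n\beta_n(A)^2/8\}$ for all real $u$ already forces $\V(Z_A)\leq n\beta_n(A)^2/4$ once you compare second-order terms as $u\to0$.

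The paper's route keeps the replace-one quantity $\tfrac12\sum_i\E[(Z_A-W\{A(S^{i\leftarrow S_i'})\})^2]$ in view. The same proof would therefore accept an average-case, rather than worst-case, notion of stability.
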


The theorem translates algorithmic stability into two policy learning guarantees. First, it bounds the variance term in risk-aware regret. Second, it bounds the probability that the learned policy performs poorly. The welfare stability parameter must vanish faster than $n^{-1/2}$ for this worst-case variance bound to disappear.

\subsection{Utility interpretation}

Constant absolute risk aversion gives an exact link between stability and expected utility. For $\gamma>0$, define the CARA utility function $\psi_\gamma(w)=-\exp(-\gamma w)$ and the CARA certainty equivalent $\CE_\gamma(A)=-(1/\gamma)\log[\E\{\exp(-\gamma Z_A)\}]$.\footnote{By definition, the certainty equivalent solves $\psi_\gamma\{\CE_\gamma(A)\}
=\E\{\psi_\gamma(Z_A)\}$.}

\begin{proposition}[Exact CARA guarantee from stability]\label[proposition]{prop:cara_stability}
Suppose $Z_A=W\{A(S)\}$ is integrable and $A$ has welfare-stability parameter $\beta_n(A)<\infty$. Then
$
W^\star-\CE_\gamma(A)
\leq
R(A)
+
\frac{\gamma n}{8}\beta_n(A)^2.
$
\end{proposition}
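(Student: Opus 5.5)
The bound will come from the moment-generating-function form of McDiarmid's bounded-differences inequality, applied to the function $f(S)=W\{A(S)\}$. Note that $W^\star-\CE_\gamma(A)=W^\star-\E(Z_A)+\{\E(Z_A)-\CE_\gamma(A)\}=R(A)+\{\E(Z_A)-\CE_\gamma(A)\}$. So it suffices to show that the certainty-equivalent discount satisfies $\E(Z_A)-\CE_\gamma(A)\leq \gamma n\beta_n(A)^2/8$. By the definition of $\CE_\gamma$, this is equivalent to
\[
\log\E\left[\exp\left\{-\gamma\bigl(Z_A-\E Z_A\bigr)\right\}\right]\leq \frac{\gamma^2 n\beta_n(A)^2}{8}.
\]
This is exactly the sub-Gaussian MGF bound with variance proxy $n\beta_n(A)^2/4$, evaluated at $\lambda=-\gamma$.

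To establish the MGF bound, I would use the Doob martingale argument that also underlies the tail bound in \Cref{thm:stability}. Set $\mathcal F_k=\sigma(S_1,\ldots,S_k)$ and $M_k=\E(Z_A\mid\mathcal F_k)$, with increments $\Delta_k=M_k-M_{k-1}$.

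1. By independence of the $S_i$, $M_k$ is obtained by integrating out $S_{k+1},\ldots,S_n$. Conditional on $\mathcal F_{k-1}$, therefore, $\Delta_k$ lies in an interval $[L_k,U_k]$ whose endpoints are $\mathcal F_{k-1}$-measurable.
2. The length $U_k-L_k$ is at most $\sup_{s,s'}\E[\,f(\ldots,s,\ldots)-f(\ldots,s',\ldots)\,]\leq\beta_n(A)$, using \Cref{def:stability}.
3. Since $\E(\Delta_k\mid\mathcal F_{k-1})=0$, Hoeffding's lemma gives $\E\{e^{\lambda\Delta_k}\mid\mathcal F_{k-1}\}\leq e^{\lambda^2\beta_n(A)^2/8}$ for every real $\lambda$.
4. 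Iterating conditional expectations from $k=n$ down to $k=1$ yields $\E e^{\lambda(Z_A-\E Z_A)}\leq e^{\lambda^2 n\beta_n(A)^2/8}$.

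Taking $\lambda=-\gamma$ and then logarithms gives the required bound, and hence the proposition.

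The main obstacle is the technical side of step 1, not the algebra. Three points need care:

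- \emph{Integrability.} I need $Z_A$ integrable, which is assumed, so that the conditional expectations exist. I also need $e^{-\gamma Z_A}$ to have finite expectation; this follows because finite $\beta_n$ makes $Z_A-\E Z_A$ bounded by $n\beta_n(A)$.
- \emph{Measurability.} The conditional-range endpoints must be measurable. Should this become delicate, I would instead bound $\Delta_k$ between an infimum and a supremum over $S_k$ of the partially integrated function, which differ by at most $\beta_n(A)$ pointwise.
- \emph{Degenerate case.} If $\beta_n(A)=0$, then $Z_A$ is almost surely constant and the bound holds trivially.
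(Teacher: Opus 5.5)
Your proposal is correct and follows essentially the paper's route. The paper also writes $W^\star-\CE_\gamma(A)=R(A)+\gamma^{-1}\log\E\exp[-\gamma\{Z_A-\E(Z_A)\}]$ (its CARA decomposition lemma) and bounds the log-moment generating function by $\gamma^2 n\beta_n(A)^2/8$ via the exponential form of McDiarmid's inequality at $u=-\gamma$; the only difference is that the paper cites McDiarmid directly, whereas you re-derive it through the Doob martingale and Hoeffding's lemma, which is the standard proof of that inequality.
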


The CARA guarantee gives a direct utility interpretation to welfare stability. It applies to any learner with algorithmic stability and requires no approximation. The next result instead uses welfare variance itself, which can be much smaller than its worst-case bound.

\begin{proposition}[Exact CARA guarantee from welfare variance]\label[proposition]{prop:cara_variance}
Suppose Assumption~\ref{ass:main}(i) holds, $\gamma>0$, and $T=\lVert\tau\rVert_{L_1(P_X)}>0$. Then every policy-learning algorithm satisfies
\[
R(A)
\leq
W^\star-\CE_\gamma(A)
\leq
R(A)+
\frac{e^{\gamma T}-1-\gamma T}{\gamma T^2}\V(Z_A).
\]
If $T=0$, expected regret, welfare variance, and certainty-equivalent loss are all zero.
\end{proposition}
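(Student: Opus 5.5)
We first bound the support of $Z_A$ and then apply a standard exponential-moment inequality for bounded random variables. For any $G\in\cG$, $W(G)=\E\{Y(0)\}+\E\{\tau(X)G(X)\}$ with $0\le G\le1$. Writing $\tau^+=\max(\tau,0)$ and $\tau^-=\max(-\tau,0)$, every feasible welfare lies in
\[
[\,\E\{Y(0)\}-\E\{\tau^-(X)\},\ \E\{Y(0)\}+\E\{\tau^+(X)\}\,],
\]
an interval of length $\E|\tau(X)|=T$. Integrability under Assumption~\ref{ass:main}(i) makes this interval finite, and the upper endpoint is $W^\star$. So $Z_A\in[W^\star-T,W^\star]$ almost surely for every algorithm. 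If $T=0$, then $\tau=0$ $P_X$-a.s., so $Z_A\equiv W^\star$ is constant and all three quantities vanish. From here on we take $T>0$.

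The lower bound $R(A)\le W^\star-\CE_\gamma(A)$ is Jensen's inequality. Since $w\mapsto e^{-\gamma w}$ is convex, $\E e^{-\gamma Z_A}\ge e^{-\gamma\E Z_A}$, which gives $\CE_\gamma(A)\le\E Z_A$.

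For the upper bound, center at the mean. Let $\mu=\E Z_A$ and $U=\mu-Z_A$. Then $\E U=0$, and $U\le \mu-(W^\star-T)=T-R(A)\le T$, using $R(A)\ge0$. We have
\[
W^\star-\CE_\gamma(A)=R(A)+\tfrac1\gamma\log\E e^{\gamma U}.
\]
The key step is the Bennett-type inequality: the function $g(u)=(e^{u}-1-u)/u^2$, with $g(0)=1/2$, is nondecreasing on $\mathbb R$. Hence for $u\le \gamma T$,
\[
e^{u}\le 1+u+u^2 g(\gamma T).
\]
Apply this with $u=\gamma U$ and take expectations. Since $\E U=0$,
\[
\E e^{\gamma U}\le 1+\gamma^2\V(Z_A)\,\frac{e^{\gamma T}-1-\gamma T}{\gamma^2T^2}.
\]
Then $\log(1+x)\le x$ yields
\[
\tfrac1\gamma\log\E e^{\gamma U}\le \frac{e^{\gamma T}-1-\gamma T}{\gamma T^2}\V(Z_A),
\]
which is exactly the claimed bound.

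The main obstacle is the monotonicity of $g$. I would prove it via the integral representation
\[
g(u)=\int_0^1(1-s)e^{su}\dd s,
\]
which is manifestly nondecreasing in $u$ because each integrand is. A second point needs care: the one-sided bound $U\le T$ must hold using only first moments. It does, because $G$ ranges over $[0,1]$-valued rules, so no moment assumption beyond Assumption~\ref{ass:main}(i) is needed.
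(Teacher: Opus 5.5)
Your proof is correct and follows essentially the same route as the paper. You bound $\E(Z_A)-Z_A\le W^\star-Z_A\le T$ and dominate $e^{\gamma x}$ for $x\le T$ by $1+\gamma x+\gamma^2x^2\int_0^1(1-s)e^{\gamma sT}\dd s$ (your monotonicity of $g$ is the paper's integral-remainder step $e^{\gamma ux}\le e^{\gamma uT}$); then you take expectations and finish with $\log(1+y)\le y$, the CARA mean decomposition, Jensen for the lower bound, and the same treatment of $T=0$.
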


The variance coefficient approaches $\gamma/2$ as $\gamma T\to0$, matching the local mean-variance calculation. The inequality itself holds for every $\gamma>0$ and every sample size. Any sharper bound on welfare variance can therefore also sharpen the exact utility guarantee; policy-vote bagging will provide such a bound.

Under the smooth preference assumptions of \Cref{lem:taylor}, stability also controls the error in the local mean-variance approximation. The following corollary gives a sufficient rate for that error to be negligible uniformly over the policy learning algorithms.

\begin{corollary}[Stability validates the approximation]\label[corollary]{cor:stable_taylor}
Suppose the conditions of \Cref{lem:taylor} hold for a class $\cA_n$, and $\sup_{A\in\cA_n}\beta_n(A)\leq\bar\beta_n$. Then the optimizer-level approximation error in \Cref{cor:optimizer} satisfies
$
\epsilon_n
\lesssim
(n\bar\beta_n^2)^{3/2}
+
(n\bar\beta_n^2)^{2}.
$
Consequently, if $n\bar\beta_n^2\to0$, the certainty-equivalent and mean-variance objectives are uniformly asymptotically equivalent, and the remainder is of smaller order than the generic variance scale $n\bar\beta_n^2$.
\end{corollary}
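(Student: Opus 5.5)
The plan is to combine \Cref{lem:taylor}, \Cref{cor:optimizer}, and the bounded-differences machinery behind \Cref{thm:stability}. By \Cref{cor:optimizer}, $\epsilon_n=\sup_{A\in\cA_n}|r_\psi(A)|$. By \Cref{lem:taylor}, each remainder satisfies
\[
|r_\psi(A)|\leq C_\psi\left\{\E|Z_A-\E(Z_A)|^3+\V(Z_A)^2\right\},
\]
where $C_\psi$ depends only on $\psi$ and the feasible welfare interval, and not on $A$. So it suffices to bound, uniformly over $\cA_n$, the centered third absolute moment by a multiple of $(n\bar\beta_n^2)^{3/2}$ and the squared variance by a multiple of $(n\bar\beta_n^2)^2$.

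For the variance term, I will use the Efron--Stein inequality, exactly as in the first part of \Cref{thm:stability}. Since $Z_A=W\{A(S)\}$ is a function of independent $S_1,\ldots,S_n$ whose bounded differences are at most $\beta_n(A)\leq\bar\beta_n$, it gives $\V(Z_A)\leq n\bar\beta_n^2/4$. Squaring yields $\V(Z_A)^2\leq (n\bar\beta_n^2)^2/16$.

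For the third moment, I will use the tail bound implicit in the second part of \Cref{thm:stability}. McDiarmid's inequality applies in both directions, giving
\[
\Pp\{|Z_A-\E Z_A|\geq t\}\leq 2\exp\{-2t^2/(n\beta_n(A)^2)\}.
\]
Integrating the tail, $\E|Z_A-\E Z_A|^3=\int_0^\infty 3t^2\,\Pp(|Z_A-\E Z_A|\geq t)\,dt$. Substituting $t=u\sqrt{n}\beta_n(A)$ shows this is at most $c\,(n\beta_n(A)^2)^{3/2}\leq c\,(n\bar\beta_n^2)^{3/2}$, where $c=6\int_0^\infty u^2e^{-2u^2}\,du$ is a universal constant. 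Combining the two bounds gives $\epsilon_n\leq C_\psi\{c(n\bar\beta_n^2)^{3/2}+(n\bar\beta_n^2)^2/16\}$, which is the stated bound.

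For the consequence, suppose $n\bar\beta_n^2\to0$. Then both terms vanish, so the third-moment condition of \Cref{lem:taylor} holds uniformly and the expansion is uniform. \Cref{cor:optimizer} then gives that the certainty-equivalent and mean-variance optimizers differ by $O(\epsilon_n)\to0$. Finally, dividing the bound by the generic variance scale, $\epsilon_n/(n\bar\beta_n^2)\lesssim (n\bar\beta_n^2)^{1/2}+n\bar\beta_n^2\to0$, so the remainder is of smaller order than that scale.

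The main subtlety I expect is ensuring that the constant $C_\psi$ from \Cref{lem:taylor} is uniform over $\cA_n$. This holds because it depends only on $\psi$ and the feasible welfare interval. The tail integration itself is routine.
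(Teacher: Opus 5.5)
Your proposal is correct and is essentially the paper's own argument. The paper bounds the centered third moment by integrating the two-sided McDiarmid tail (packaged as \Cref{lem:mcdiarmid_welfare}, with the same substitution $t=\sqrt{n}\beta_n(A)v$), bounds $\V(Z_A)$ by $n\bar\beta_n^2/4$ via the Efron--Stein step of \Cref{thm:stability}, and plugs both into the uniform remainder bound of \Cref{lem:taylor}; the only detail to add is the case $\beta_n(A)=0$, where your substitution is undefined but $Z_A$ is constant, so both moments vanish trivially.
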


Generality is also the limitation of the stability theorem. Because it uses only worst-case analysis, it can be conservative for structured algorithms. Policy-vote bagging has additional symmetry that yields a sharper variance guarantee.
\section{A concrete proposal}\label{sec:bagging}

Policy-vote bagging creates a stable final policy by averaging decisions learned on overlapping subsamples. Each base learner casts a binary treatment vote, and the fraction voting for treatment becomes the final treatment probability. The base learner may be unstable on its own $m$-unit subample. The aggregate can nevertheless be stable because replacing one unit leaves every vote from a subsample that omits it unchanged. Under complete bagging, only the $m/n$ fraction of subsamples containing that unit can respond. The procedure thus builds stability across subsamples into stability across the full experimental sample. The inclusion-frequency argument appears in the bagging stability analysis of \citet{elisseeff_evgeniou_pontil_2005} and is used by \citet{chernozhukov_newey_singh_syrgkanis_2026} and \citet{chen_syrgkanis_austern_2022} to stabilize nuisance estimators for debiased inference. Here it controls the welfare of a learned treatment policy.

\subsection{Policy-vote bagging}

Complete policy-vote bagging averages over every subsample of a fixed size. For $1\leq m\leq n$, write $\cI_{n,m}=\{I\subseteq\{1,\ldots,n\}:|I|=m\}$ and $S_I=(S_i:i\in I)$. A base learner maps $S_I$ into a CATE estimate $\widehat\tau_I$. We assume the learner is symmetric: permuting the observations within the subsample leaves the fitted function $\widehat{\tau}_I$ unchanged. This estimate casts the vote $\widehat G_I(x)=\1\{\widehat\tau_I(x)\geq0\}$. The complete policy is
\begin{equation}\label{eq:bagged_policy}
\{A_m(S)\}(x)
=:
\widehat G_m^{\mathrm{comp}}(x)
=
\binom{n}{m}^{-1}
\sum_{I\in\cI_{n,m}}\widehat G_I(x)
=
\E_I\{\widehat G_I(x)\mid S\}.
\end{equation}
Retaining vote shares as treatment probabilities is essential to the method. Policy-vote bagging averages the decisions $\1\{\widehat\tau_I(x)\geq0\}$, whereas score bagging averages numerical CATE estimates before applying a threshold. Classical bagged classification also aggregates votes, but typically takes a final majority decision \citep{breiman_1996}. We retain the fractional average, since applying a final threshold can destroy its stability  \citep{soloff_barber_willett_argmax_2024}. The candidate class here is $\cA=\{A_m:1\leq m\leq n\}$: a smaller $m$ makes each unit influential in fewer votes, while a larger $m$ gives each base learner more information.

Computational policy-vote bagging approximates the complete average with a bounded number of subsamples. Let $I_1,\ldots,I_B$ be conditionally i.i.d. uniform draws from $\cI_{n,m}$, independent of $S$. Define the realized bag collection $\mathbf I=(I_1,\ldots,I_B)$ and 
\begin{equation}\label{eq:finite_bagging}
\{A_{m,B}(S,\mathbf I)\}(x)
=:
\widehat G_{m,B}(x)
=
\frac{1}{B}\sum_{b=1}^B\widehat G_{I_b}(x).
\end{equation}
Computational bagging introduces computational randomness in addition to experimental-sampling randomness. Risk-aware regret for $A_{m,B}$ averages over both sources. The displayed definitions use a deterministic base learner. For a randomized base learner, complete bagging also averages over its seed, and computational bagging draws an independent seed with each bag; this seed variation can persist even at $m=n$. Conditional on $S$, the policy in \eqref{eq:bagged_policy} is the expectation of the policy in \eqref{eq:finite_bagging}, so complete bagging removes Monte Carlo variation from aggregation. Classical random forests are computational bagging procedures with randomized tree learners \citep{breiman_2001}.

Averaging policy votes has a direct expected-utility benefit at any fixed bag size. Because welfare is affine in treatment probabilities, averaging preserves mean welfare while reducing the variation created by selecting a particular collection of bags. The next result makes this comparison for any concave evaluation of welfare.

\begin{proposition}[Averaging votes improves concave welfare evaluations]\label[proposition]{prop:bag_concave}
Suppose Assumption~\ref{ass:main}(i) holds. Let $\psi$ be increasing and concave on an open interval containing the feasible welfare interval. For every $m$ and $B\geq1$, conditional on the experimental sample,
\begin{align*}
\psi[W\{A_m(S)\}]
&\geq
\E_{\mathbf I}\left(\psi\left[W\{A_{m,B}(S,\mathbf I)\}\right]\mid S\right)
\geq
\E_I[\psi\{W(\widehat G_I)\}\mid S].
\end{align*}
All three policies have the same conditional mean welfare:
\[
W\{A_m(S)\}
=
\E_{\mathbf I}[W\{A_{m,B}(S,\mathbf I)\}\mid S]
=
\E_I\{W(\widehat G_I)\mid S\}.
\]
The same utility ordering and mean equality hold after averaging over the experimental sample.
\end{proposition}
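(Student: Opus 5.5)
The argument rests on one structural fact: welfare is affine in the treatment probability. From the welfare display in \Cref{sec:compound}, $W(G)=\E\{Y(0)\}+\E\{\tau(X)G(X)\}$, and Assumption~\ref{ass:main}(i) makes $\tau\in L_1(P_X)$. So for any finite or countable family of policies and nonnegative weights summing to one, $W(\sum_k w_kG_k)=\sum_k w_kW(G_k)$. The same holds for the integral average $\E_I\{\widehat G_I(\cdot)\mid S\}$, by Fubini, because $|\tau(x)\widehat G_I(x)|\le|\tau(x)|$ is integrable.

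The conditional mean equalities follow first. $A_m(S)$ is the uniform average of the $\binom nm$ votes $\widehat G_I$, so $W\{A_m(S)\}=\E_I\{W(\widehat G_I)\mid S\}$ with $I$ uniform on $\cI_{n,m}$. For computational bagging, affinity gives $W\{A_{m,B}(S,\mathbf I)\}=B^{-1}\sum_b W(\widehat G_{I_b})$. Each $I_b$ is uniform on $\cI_{n,m}$ given $S$, so the conditional expectation over $\mathbf I$ equals $\E_I\{W(\widehat G_I)\mid S\}$. All feasible welfare values lie in the compact interval $[\E Y(0)-\lVert\tau\rVert_{L_1},\,\E Y(0)+\lVert\tau\rVert_{L_1}]$. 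Since $\psi$ is defined on an open interval containing it, every composition below is well defined and bounded, because a concave function is continuous on the interior.

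Both inequalities are then Jensen's inequality applied conditionally on $S$.

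\emph{Left inequality.} Write $W\{A_m(S)\}=\E_{\mathbf I}[W\{A_{m,B}\}\mid S]$ and use concavity of $\psi$ to get $\psi(\E[\cdot\mid S])\ge\E[\psi(\cdot)\mid S]$.

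\emph{Right inequality.} Apply finite Jensen to the average of $B$ numbers: $\psi\{B^{-1}\sum_bW(\widehat G_{I_b})\}\ge B^{-1}\sum_b\psi\{W(\widehat G_{I_b})\}$. Take $\E_{\mathbf I}[\cdot\mid S]$ and use that the $I_b$ are identically distributed to obtain $\E_I[\psi\{W(\widehat G_I)\}\mid S]$.

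Monotonicity of $\psi$ is not actually needed; concavity suffices.

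The unconditional statements follow by taking $\E_{P_{\tr}^n}$ of the conditional ones, using the tower property and boundedness of the integrands. For a randomized base learner, the same argument goes through with $I$ replaced by the pair (subsample, seed).

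The only delicate point is the well-definedness and integrability needed for Jensen and Fubini, and boundedness of the welfare range settles it. So I expect no real obstacle; the main care is to state affinity of $W$ precisely for conditional expectations rather than finite sums.
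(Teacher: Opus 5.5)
Your proposal is correct and follows the paper's proof essentially step for step. Affinity of $W$ in the policy gives the three conditional mean equalities. Conditional Jensen gives the complete-versus-computational inequality. Finite Jensen over the $B$ bags, together with each $I_b$ being uniform on $\mathcal I_{n,m}$, gives the computational-versus-single-bag inequality. Bounded welfare plus the tower property then transfers everything to the unconditional statements. Your remark that only concavity of $\psi$ is used matches the paper's own comment in the joint-risk-aversion appendix.
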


Complete bagging is therefore weakly preferred to computational bagging, which is weakly preferred to deploying the policy from one randomly selected bag, under any increasing concave welfare evaluation. This improvement holds at a fixed $m$ and preserves expected welfare. Choosing the bag size raises the additional question of how much information each base learner should receive, to which we turn in Section~\ref{sec:rates}.

Policy-vote bagging acts like a data-adaptive smoother around the treatment threshold. At a covariate value $x$, each bag supplies a score $\widehat\tau_{I_b}(x)$, and the final policy averages the hard-threshold decision over the empirical distribution of those scores. That distribution behaves like a smoothing kernel whose shape and width are generated by subsampling rather than chosen in advance. Smaller bags can produce more dispersed scores and a wider fractional-treatment region, while larger bags can sharpen the decision boundary.\footnote{These are possible effects of bag size, rather than a monotonicity property required by our results.} 
The number of bags $B$ controls the Monte Carlo resolution. This score-space view complements the  covariate-space smoothing induced by forest weights \citep{wager_athey_2018}.

\subsection{Policy-vote bagging is stable}\label{sec:bag_stability}

A key benefit of policy-vote bagging is welfare stability. The next lemmas verify stability for both complete and computational variants of policy-vote bagging. 

\begin{lemma}[Welfare stability of complete policy-vote bagging]\label[lemma]{lem:bag_stability}
Under Assumption \ref{ass:main}(i), the complete policy-vote bagging algorithm in \eqref{eq:bagged_policy} satisfies
$
\beta_n(A_m)
\leq
\frac{m}{n}\lVert\tau\rVert_{L_1(P_X)}.
$
The factor $m/n$ is worst-case sharp over symmetric binary base learners and data-generating processes satisfying Assumption \ref{ass:main}.
\end{lemma}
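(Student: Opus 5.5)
The upper bound follows from \Cref{lem:policy_to_welfare} once we bound the algorithmic stability parameter $\Gamma_n(A_m)$ by $m/n$. Fix $S$, an index $i$, and a replacement $s'$, and write $S'=S^{i\leftarrow s'}$. Split $\cI_{n,m}$ into the subsamples that omit $i$ and those that contain it. If $i\notin I$, then $S_I=S'_I$, so the deterministic base learner gives the same estimate, $\widehat\tau_I=\widehat\tau'_I$, and the two votes agree at every $x$. If $i\in I$, the two votes lie in $\{0,1\}$, so they differ by at most one. The number of subsamples containing $i$ is $\binom{n-1}{m-1}=\frac{m}{n}\binom{n}{m}$. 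Hence, uniformly in $x$,
\[
\bigl|\widehat G_m^{\mathrm{comp}}(x)-\widehat G_m^{\mathrm{comp}\,\prime}(x)\bigr|
\leq
\binom{n}{m}^{-1}\binom{n-1}{m-1}
=
\frac{m}{n}.
\]
Taking suprema gives $\Gamma_n(A_m)\leq m/n$. Welfare is affine in the policy, $W(G)-W(G')=\E[\tau(X)\{G(X)-G'(X)\}]$, so \Cref{lem:policy_to_welfare} then yields $\beta_n(A_m)\leq (m/n)\lVert\tau\rVert_{L_1(P_X)}$. If the base learner is randomized, the same argument applies after conditioning on the seed attached to each subsample.

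For sharpness, the plan is to construct a symmetric binary base learner and a data-generating process for which the bound holds with equality. Take $\tau$ of constant sign, say $\tau(x)=\tau_0>0$ for all $x$, so that $\lVert\tau\rVert_{L_1}=\tau_0$. This choice satisfies the margin condition trivially, since $\Pp_X\{0<|\tau(X)|\leq t\}=0$ for $t<\tau_0$ and the bound can be arranged for larger $t$ through $C_\tau$. Use randomized treatment with propensity $1/2$ and outcomes with integrable support containing a distinguished point $s^\dagger\in\cS$. Define the base learner by $\widehat\tau_I\equiv 1$ if some observation in $S_I$ equals $s^\dagger$, and $\widehat\tau_I\equiv-1$ otherwise. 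This learner is symmetric and its votes are constant in $x$.

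Now choose $S$ with no entry equal to $s^\dagger$ and set $s'=s^\dagger$. On $S$ every vote is $0$, so $W\{A_m(S)\}=\E Y(0)$. On $S^{i\leftarrow s'}$, exactly the subsamples containing $i$ vote $1$, so the policy is constant and equal to $m/n$, giving welfare $\E Y(0)+(m/n)\tau_0$. The welfare difference is therefore exactly $(m/n)\lVert\tau\rVert_{L_1(P_X)}$.

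The main obstacle is making this construction legitimate within the paper's framework. The supremum in $\beta_n$ ranges over all $S\in\cS^n$ and $s'\in\cS$, not only over sample points of positive probability. So we only need $s^\dagger$ and a sample avoiding it to lie in $\cS$, which holds, for example, with binary outcomes and $\cS$ taken to be the support. If one prefers to avoid degenerate learners, the same argument works with any threshold rule that flips sign when a single extreme observation enters the subsample.
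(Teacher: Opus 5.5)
Your proposal is correct and follows essentially the same route as the paper. For the upper bound, you count the $\binom{n-1}{m-1}/\binom{n}{m}=m/n$ fraction of bags containing $i$ to get $\Gamma_n(A_m)\leq m/n$, then invoke \Cref{lem:policy_to_welfare}; for sharpness, you use a constant treatment effect and a learner that treats exactly when its bag contains a distinguished observation, which differs from the paper's construction ($X=0$, $Y(1)=1/2$, $Y(0)=-1/2$, removing the unique treated point $s^+$) only in that you insert $s^\dagger$ rather than remove it.
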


\begin{lemma}[Welfare stability of computational policy-vote bagging]\label[lemma]{lem:finite_stability}
Under Assumption \ref{ass:main}(i), the computational policy-vote bagging algorithm in \eqref{eq:finite_bagging} satisfies,
with probability at least $1-\eta$ over the computational bags,
$
\beta_n(A_{m,B}\mid\mathbf I)
\leq
\lVert\tau\rVert_{L_1(P_X)}
\left\{
\frac{m}{n}
+
\sqrt{\frac{\log(n/\eta)}{2B}}
\right\}.
$
\end{lemma}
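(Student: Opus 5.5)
The plan is to reduce welfare stability to pointwise stability of the vote shares via \Cref{lem:policy_to_welfare}, then control those shares uniformly in $x$ by splitting the bags into those that contain the replaced index and those that do not. Fix the realized bag collection $\mathbf I=(I_1,\ldots,I_B)$. For a sample $S$, an index $i$, and a replacement $s'$, every bag with $i\notin I_b$ sees the same subsample $S_{I_b}$ under $S$ and under $S^{i\leftarrow s'}$, so it casts the same vote at every $x$. Each vote lies in $\{0,1\}$. Hence, for every $x$,
\[
\bigl|\widehat G_{m,B}(x;S)-\widehat G_{m,B}(x;S^{i\leftarrow s'})\bigr|
\leq
\frac{1}{B}\sum_{b=1}^B\1\{i\in I_b\}=:\widehat p_i .
\]
This bound is uniform in $x$, $S$, and $s'$. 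Because $\mathbf I$ is independent of $S$, we get $\Gamma_n(A_{m,B}\mid\mathbf I)\leq\max_{1\leq i\leq n}\widehat p_i$. By \Cref{lem:policy_to_welfare}, it follows that $\beta_n(A_{m,B}\mid\mathbf I)\leq\lVert\tau\rVert_{L_1(P_X)}\max_i\widehat p_i$.

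It then remains to bound $\max_i\widehat p_i$ with high probability over $\mathbf I$. For a fixed $i$, the indicators $\1\{i\in I_b\}$, $b=1,\ldots,B$, are i.i.d.\ Bernoulli with mean $\binom{n-1}{m-1}/\binom{n}{m}=m/n$. Hoeffding's inequality therefore gives $\Pp(\widehat p_i\geq m/n+u)\leq\exp(-2Bu^2)$. A union bound over the $n$ indices gives $\Pp(\max_i\widehat p_i\geq m/n+u)\leq n\exp(-2Bu^2)$. Setting this equal to $\eta$ yields $u=\sqrt{\log(n/\eta)/(2B)}$, which is exactly the stated bound holding with probability at least $1-\eta$.

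The main point needing care is the order of quantifiers. The supremum over $S$, $i$, and $s'$ in $\beta_n$ must be taken after fixing $\mathbf I$, so the bad event has to be defined by $\mathbf I$ alone. The reduction in the first paragraph achieves this, since $\widehat p_i$ depends only on $\mathbf I$ and the union is taken over just $n$ events, not over samples.

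Two smaller checks remain. First, the conditional version of \Cref{lem:policy_to_welfare} applies because, for fixed $\mathbf I$, $A_{m,B}(\cdot,\mathbf I)$ is a deterministic map $\cS^n\to\cG$. Second, for a randomized base learner, the seeds are attached to bags and are therefore part of $\mathbf I$, so they are unchanged when $S$ is replaced by $S^{i\leftarrow s'}$.
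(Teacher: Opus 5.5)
Your proposal is correct and follows essentially the same route as the paper: fix $\mathbf I$, bound the pointwise change in vote shares by the realized inclusion frequency $\widehat p_i$ (the paper's $K_i(\mathbf I)/B$), pass to welfare via \Cref{lem:policy_to_welfare}, and then control $\max_i\widehat p_i$ using Hoeffding's inequality for the $\operatorname{Binomial}(B,m/n)$ counts, a union bound over the $n$ indices, and $t=\sqrt{\log(n/\eta)/(2B)}$. As a minor aside, the bound $\Gamma_n(A_{m,B}\mid\mathbf I)\leq\max_i\widehat p_i$ holds deterministically for every fixed $\mathbf I$, so that step does not actually need the independence of $\mathbf I$ and $S$.
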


The stability of bagging is governed by how often each unit appears. Complete bagging gives every unit the exact inclusion frequency $m/n$, whereas computational bagging uses its realized frequency among the $B$ sampled bags. These welfare guarantees complement the assumption-free prediction-stability results of \citet{soloff_barber_willett_2024}. The symmetry of complete bagging yields a sharper variance rate than worst-case sensitivity alone.

\subsection{Upper bounds for policy-vote bagging}\label{sec:bag_upper_bounds}

Stability is useful only if it can be achieved without sacrificing too much welfare. The next theorem expresses expected regret and welfare variance in terms of the bag-level CATE error, making the accuracy-stability tradeoff explicit.

Formally, the accuracy of the base learner enters through its integrated CATE error. For an independent target covariate $X$, define
\begin{equation*}\label{eq:Qmp}
\Delta_I(X)
=
\widehat\tau_I(X)-\tau(X),
\quad
Q_{m,p}
=
\E_{S,I,X}\{|\Delta_I(X)|^p\},
\quad p>1.
\end{equation*}
Here, $Q_{m,2}$ is the integrated mean-squared CATE error of a learner trained on one size-$m$ bag. We initially require only that it be finite. \Cref{sec:learning_examples} shows how standard learning rates bound this quantity in parametric, H\"older-smooth, and Gaussian-kernel examples.

\begin{theorem}[Upper bound for complete policy-vote bagging]\label[theorem]{thm:bag_upper}
Suppose Assumption \ref{ass:main} holds, the base learner is symmetric, and $Q_{m,2}<\infty$. Then
\begin{equation*}\label{eq:main_upper}
R_{\RA}(A_m)
\lesssim
Q_{m,2}^{\frac{\kappa+1}{\kappa+2}}
+
\rho\frac{m}{n}Q_{m,2}.
\end{equation*}
\end{theorem}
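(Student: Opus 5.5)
The argument splits $R_{\RA}(A_m)=R(A_m)+\rho\V(Z_{A_m})$ into its two terms and bounds each by the bag-level error $Q_{m,2}$.

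\emph{Expected regret.} By the mean equality in \Cref{prop:bag_concave}, $\E[W\{A_m(S)\}]=\E_{S,I}\{W(\widehat G_I)\}$. So $R(A_m)$ equals the expected regret of a single size-$m$ vote. For a hard-threshold rule,
\[
W^\star-W(\widehat G_I)=\E_X\bigl[|\tau(X)|\,\1\{\widehat G_I(X)\neq G^\star(X)\}\bigr].
\]
A disagreement requires $|\Delta_I(X)|\geq|\tau(X)|$. Fix a threshold $t>0$ and split on whether $|\tau(X)|\leq t$:
\[
\E\bigl[|\tau|\1\{|\Delta_I|\geq|\tau|\}\bigr]
\leq t\,\Pp\{0<|\tau(X)|\leq t\}+\E\bigl[|\tau|\1\{|\tau|>t,\ |\Delta_I|\geq|\tau|\}\bigr].
\]
Points with $\tau(X)=0$ contribute nothing. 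The margin condition bounds the first term by $C_\tau t^{\kappa+1}$. On the second event $|\tau|\leq|\Delta_I|^2/|\tau|\leq|\Delta_I|^2/t$, so the second term is at most $Q_{m,2}/t$. This crude split would give the exponent $(\kappa+1)/(\kappa+2)$ after optimizing $t$.

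To be safe I would instead use the standard peeling argument over shells $\{2^{j-1}t<|\tau|\leq 2^jt\}$. On each shell, apply Markov as $\Pp(|\Delta_I|\geq 2^{j-1}t)\leq Q_{m,2}/(2^{j-1}t)^2$, combined with the margin mass $C_\tau(2^jt)^\kappa$. Each shell then contributes at most $\min\{C_\tau(2^jt)^{\kappa+1},\,Q_{m,2}/(2^{j-1}t)\}$, and the sum is dominated by the balance point. Either route, taking $t\asymp Q_{m,2}^{1/(\kappa+2)}$ gives $R(A_m)\lesssim Q_{m,2}^{(\kappa+1)/(\kappa+2)}$.

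\emph{Variance.} I would not use \Cref{thm:stability} here, since it gives $\rho\,n\beta_n^2/4\lesssim\rho\,m^2/n$ with no dependence on $Q$. The better route uses that $W(A_m(S))=\E\{Y(0)\}+\binom nm^{-1}\sum_I\E_X\{\tau(X)\widehat G_I(X)\}$ is a U-statistic of order $m$ with kernel $h(S_I)=\E_X\{\tau\widehat G_I\}$, which is symmetric because the learner is. The standard U-statistic (Efron--Stein / Hoeffding) variance bound gives
\[
\V\{W(A_m)\}\leq \frac mn\V\{h(S_I)\}.
\]
Subtracting the oracle, $h(S_I)-W(G^\star)+\E Y(0)=-\E_X[|\tau|\1\{\widehat G_I\neq G^\star\}]=:-L_I$. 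Hence
\[
\V\{h(S_I)\}\leq\E L_I^2\leq \E\Bigl(\E_X\bigl[|\tau|\1\{|\Delta_I|\geq|\tau|\}\bigr]^2\Bigr)\leq\E\Bigl(\E_X|\Delta_I|\Bigr)^2\leq\E_{S,I,X}\{\Delta_I^2\}=Q_{m,2},
\]
using $|\tau|\1\{|\Delta_I|\geq|\tau|\}\leq|\Delta_I|$ and then Jensen. Therefore $\rho\V(Z_{A_m})\leq\rho\,(m/n)Q_{m,2}$.

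\emph{Main obstacle.} The delicate step is the U-statistic variance inequality $\V(U)\leq(m/n)\V(h)$ for a degree-$m$ complete U-statistic. It follows from the Hoeffding decomposition: $\V(U)=\sum_c\binom mc^2\binom nc^{-1}\sigma_c^2$, and $\binom mc\binom nc^{-1}\leq m/n$ for $c\geq1$, so $\V(U)\leq(m/n)\sum_c\binom mc\sigma_c^2=(m/n)\V(h)$. The exact form of the regret step depends on which peeling variant one writes out. Combining the two bounds yields the theorem.
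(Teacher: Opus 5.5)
Your proposal is correct and follows essentially the same route as the paper: the single-threshold split giving $C_\tau t^{\kappa+1}+Q_{m,2}/t$ with $t=Q_{m,2}^{1/(\kappa+2)}$ for regret, a Jensen bound of single-bag welfare variance by $Q_{m,2}$, and the $U$-statistic contraction $\V(U)\leq(m/n)\V(h)$, which the paper cites from Shao and you derive from the Hoeffding decomposition. Two small points: the peeling detour is unnecessary, since the crude split already yields the exact exponent. When $Q_{m,2}=0$, you should let $t\downarrow0$ rather than set $t=Q_{m,2}^{1/(\kappa+2)}$.
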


The theorem separates accuracy from stability. The first bound converts CATE error into expected welfare regret: mistakes matter most when treatment effects are large, while the margin condition limits how often small estimation errors can reverse the correct decision. The second bound shows that complete averaging reduces welfare variance by the inclusion frequency $m/n$. 
Only the regret bound uses the margin condition. 

\begin{corollary}[Upper bound for computational policy-vote bagging]\label[corollary]{cor:finite_bagging}
Under the conditions of \Cref{thm:bag_upper}, for every $B\geq1$,
\[
R_{\RA}(A_{m,B})
\lesssim
Q_{m,2}^{\frac{\kappa+1}{\kappa+2}}
+
\rho\left(\frac{m}{n}+\frac{1-m/n}{B}\right)Q_{m,2}.
\]
\end{corollary}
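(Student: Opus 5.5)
The plan is to reduce \Cref{cor:finite_bagging} to \Cref{thm:bag_upper} by splitting the risk-aware regret of $A_{m,B}$ into its mean and variance parts and treating each separately. The randomness now includes the bags $\mathbf I$, so $Z=W\{A_{m,B}(S,\mathbf I)\}$ depends on both $S$ and $\mathbf I$.

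For the mean, \Cref{prop:bag_concave} gives
\[
\E_{\mathbf I}[W\{A_{m,B}(S,\mathbf I)\}\mid S]=W\{A_m(S)\}.
\]
Hence $R(A_{m,B})=R(A_m)$. The proof of \Cref{thm:bag_upper} bounds this regret by $\lesssim Q_{m,2}^{(\kappa+1)/(\kappa+2)}$, and that is the first term.

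For the variance, I apply the law of total variance conditioning on $S$:
\[
\V(Z)=\V[W\{A_m(S)\}]+\E\{\V(Z\mid S)\}.
\]
The first term is the complete-bagging variance, which the proof of \Cref{thm:bag_upper} bounds by $\lesssim (m/n)Q_{m,2}$. For the second term, conditional on $S$ the bags are i.i.d. and $W$ is affine in the policy, so
\[
Z=\E\{Y(0)\}+\frac1B\sum_{b=1}^B W_b,\qquad W_b:=\E_X\{\tau(X)\widehat G_{I_b}(X)\}.
\]
Therefore $\V(Z\mid S)=B^{-1}\V_I\{W(\widehat G_I)\mid S\}$. Taking the expectation over $S$ yields
\[
\E\{\V(Z\mid S)\}=\frac1B\Bigl(\V\{W(\widehat G_I)\}-\V[W\{A_m(S)\}]\Bigr),
\]
where the first variance is the single-bag unconditional variance over $(S,I)$.

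The remaining task is to obtain the factor $(1-m/n)Q_{m,2}$. I would center $W(\widehat G_I)$ at the oracle and use the decomposition from \Cref{thm:bag_upper}'s proof. The key Hoeffding/ANOVA-type fact is that the complete-bagging average is a U-statistic of the kernel $h(S_I)=W(\widehat G_I)$. By Hoeffding's variance inequality for U-statistics, $\V(U)\ge (m/n)\V(h)$ would let me bound
\[
\V(h)-\V(U)\le (1-m/n)\V(h).
\]
I expect this direction to be the main obstacle. That inequality is not what Hoeffding gives: the standard result is $\V(U)\le (m/n)\V(h)$. So I would instead bound $\V(h)-\V(U)$ directly via the Efron–Stein/ANOVA decomposition of $h$ over the $n$ coordinates. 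The functional-ANOVA components of the symmetric kernel contribute weights $1-\binom{n-k}{m-k}/\binom{n}{m}\cdot(\ldots)$, each at most $1$. Failing a sharp version, I would settle for the crude bound
\[
\V(h)-\V(U)\le\V(h)\le \E[\{W(\widehat G_I)-W^\star\}^2].
\]

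To finish, I bound $\E[\{W(\widehat G_I)-W^\star\}^2]\lesssim Q_{m,2}$. Wrong decisions satisfy $|\tau|\le|\Delta_I|$, so $|W(\widehat G_I)-W^\star|\le\E_X|\Delta_I(X)|$, and Jensen gives the $Q_{m,2}$ bound. Summing the pieces with $\rho$ gives the stated display. The factor $1-m/n$ is a mild refinement: it reflects that at $m=n$ all bags coincide for a deterministic learner, so the Monte Carlo variance vanishes. This follows by noting $\V(Z\mid S)=0$ when $m=n$, and by interpolating via the ANOVA weights in between.
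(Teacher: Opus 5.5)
Up to the identity $\V(Z)=\V(U)+B^{-1}\{\V(h)-\V(U)\}$, with $U=W\{A_m(S)\}$ and $h=W(\widehat G_I)$, your argument is the paper's. The shared steps are the mean equality from \Cref{prop:bag_concave}, the law of total variance given $S$, the identity $\V(Z\mid S)=B^{-1}\V_I\{W(\widehat G_I)\mid S\}$, and a second total-variance step for a single bag. The gap is in the last step, which you justify incorrectly.

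You try to bound $\V(h)-\V(U)$ on its own by $(1-m/n)\V(h)$. This is equivalent to the lower bound $\V(U)\geq(m/n)\V(h)$. You rightly suspect that this fails, but the ANOVA interpolation you propose cannot rescue it. Write $\sigma_k^2$ for the variance of the $k$-th order Hoeffding component of $h$. Then $\V(h)=\sum_{k=1}^m\binom{m}{k}\sigma_k^2$ and $\V(U)=\sum_{k=1}^m\binom{m}{k}^2\binom{n}{k}^{-1}\sigma_k^2$. Since $\binom{m}{k}/\binom{n}{k}\leq m/n$ for $k\geq1$, every weight $1-\binom{m}{k}/\binom{n}{k}$ is at least $1-m/n$. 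Hence $\V(h)-\V(U)\geq(1-m/n)\V(h)$. This is the reverse of what you need, with equality only for a purely first-order kernel. The observation that $\V(Z\mid S)=0$ at $m=n$ fixes one endpoint and says nothing about intermediate $m$.

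The missing step is not to separate the two occurrences of $\V(U)$:
\[
\V(Z)=\Bigl(1-\frac{1}{B}\Bigr)\V(U)+\frac{1}{B}\V(h).
\]
Because $1-1/B\geq0$, you can now use the upper bound $\V(U)\leq(m/n)Q_{m,2}$ together with $\V(h)\leq Q_{m,2}$. This gives exactly $\{(1-\frac1B)\frac mn+\frac1B\}Q_{m,2}=\{\frac mn+\frac{1-m/n}{B}\}Q_{m,2}$, which is how the paper concludes.

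Alternatively, your crude bound $(\frac mn+\frac1B)Q_{m,2}$ already suffices for a $\lesssim$ statement. With $a=m/n$ and $b=1/B$ in $(0,1]$, one has $2\{a+(1-a)b\}-(a+b)=a(1-b)+b(1-a)\geq0$. If you take this route, say so explicitly rather than claiming the $1-m/n$ factor via interpolation.
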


The computational-bag variance bound interpolates between $Q_{m,2}$ at $B=1$ and $(m/n)Q_{m,2}$ as $B$ grows. For the deterministic base learner defined above, the Monte Carlo component vanishes as $B\to\infty$ and is zero at $m=n$, when there is only one subsample. Without a cost of computation, the bound weakly favors more bags; an interior choice of $B$ would require an explicit computation constraint.

The same variance bounds control exact CARA utility loss. Combining them with \Cref{prop:cara_variance} gives a direct preference-based guarantee for both complete and computational policy-vote bagging.

\begin{corollary}[Exact CARA guarantee for policy-vote bagging]\label[corollary]{cor:cara_bagging}
Under the conditions of \Cref{thm:bag_upper}, let $\gamma>0$ and $T=\lVert\tau\rVert_{L_1(P_X)}>0$. Then
\[
W^\star-\CE_\gamma(A_m)
\leq
R(A_m)+
\frac{e^{\gamma T}-1-\gamma T}{\gamma T^2}
\frac{m}{n}Q_{m,2}.
\]
For computational bagging, the same inequality holds with $A_m$ replaced by $A_{m,B}$ and $m/n$ replaced by $m/n+(1-m/n)/B$, with the certainty equivalent taken over both sampling and computational randomness. When $T=0$, both certainty-equivalent losses are zero.
\end{corollary}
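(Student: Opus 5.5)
The plan is to combine \Cref{prop:cara_variance} with the variance bounds that underlie \Cref{thm:bag_upper} and \Cref{cor:finite_bagging}. \Cref{prop:cara_variance} is stated for every policy-learning algorithm. Applying it to $A=A_m$, with $T=\lVert\tau\rVert_{L_1(P_X)}>0$, immediately gives
\[
W^\star-\CE_\gamma(A_m)\leq R(A_m)+\frac{e^{\gamma T}-1-\gamma T}{\gamma T^2}\V(Z_{A_m}).
\]
The coefficient is nonnegative, since $e^u\geq1+u$. It therefore suffices to insert an explicit bound $\V(Z_{A_m})\leq (m/n)Q_{m,2}$, with constant one rather than up to $\lesssim$. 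So the first step is to extract from the proof of \Cref{thm:bag_upper} the exact variance inequality.

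To obtain that inequality, I would write
\[
Z_{A_m}=\E\{Y(0)\}+\binom{n}{m}^{-1}\sum_I\E_X\{\tau(X)\widehat G_I(X)\},
\]
which is a U-statistic of order $m$ with kernel $h(S_I)=\E_X\{\tau(X)\widehat G_I(X)\}$; symmetry of the base learner makes this kernel well defined. The classical Hoeffding bound for U-statistics gives $\V(U)\leq (m/n)\V\{h(S_I)\}$. Next, $h(S_I)$ can be centered at the oracle value $\E_X\{\tau G^\star\}$. Since $\widehat G_I$ and $G^\star$ differ only where $\operatorname{sign}\widehat\tau_I\neq\operatorname{sign}\tau$, and there $|\tau(X)|\leq|\Delta_I(X)|$, we get
\[
|h(S_I)-\E\{\tau G^\star\}|\leq\E_X|\Delta_I(X)|.
\]
Jensen's inequality then gives $\V\{h\}\leq\E[(\E_X|\Delta_I|)^2]\leq Q_{m,2}$. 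This yields the claimed display for complete bagging.

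For computational bagging, the key tool is the law of total variance over $\mathbf I$. The between-sample term is $\V(Z_{A_m})$, because the conditional mean equals $W\{A_m(S)\}$ by \Cref{prop:bag_concave}. The within-sample term is $\frac{1}{B}\E\,\V_I\{h(S_I)\mid S\}$, which equals $\frac1B[\E\V(h)-\V(U)]$. Using the finer Hoeffding decomposition, the total is $\V(U)(1-1/B)+\V(h)/B$. Since $\V(U)\leq(m/n)\V(h)$, this is at most $\{m/n+(1-m/n)/B\}Q_{m,2}$. \Cref{prop:cara_variance} is then applied to the algorithm $(S,\mathbf I)\mapsto A_{m,B}(S,\mathbf I)$, whose input sample is drawn with independent external randomization; its proof uses only boundedness of $Z$ within an interval of length on the order of $T$, so it is unaffected by this.

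For $T=0$, $\tau=0$ almost everywhere, so welfare is constant and every loss vanishes, as in \Cref{prop:cara_variance}. The main obstacle I expect is the constant-free U-statistic variance inequality, including the exact accounting in the total-variance step for computational bagging. If the sharp form proves delicate, I would instead cite it directly from the variance part of the proofs of \Cref{thm:bag_upper} and \Cref{cor:finite_bagging}, provided those show the bounds hold without constants.
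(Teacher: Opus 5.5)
Your proposal is correct and follows the paper's proof. You apply \Cref{prop:cara_variance} to $A_m$, and to the joint law of $(S,\mathbf I)$ for $A_{m,B}$. You then insert the constant-free variance bounds $\V[W\{A_m(S)\}]\leq (m/n)Q_{m,2}$ and $\V_{S,\mathbf I}[W\{A_{m,B}(S,\mathbf I)\}]\leq\{m/n+(1-m/n)/B\}Q_{m,2}$. The paper takes these directly from the proofs of \Cref{thm:bag_upper} and \Cref{cor:finite_bagging}, via the same $U$-statistic contraction, Jensen step, and law-of-total-variance identity you re-derive.

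One phrase is slightly imprecise. The exact coefficient requires the one-sided bound $\E(Z)-Z\leq W^\star-Z\leq T$ with constant one, which holds for any $[0,1]$-valued policy. An interval of length merely ``on the order of $T$'' would not give it.
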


For fixed $\gamma$ and $T$, exact CARA utility loss thus has the same upper-bound rate as risk-aware regret, including for the full-sample policy $m=n$.

\begin{remark}[Other CATE moments]\label[remark]{rmk:p_moment}
For every $p>1$, the regret conclusion has the corresponding moment form
$
R(A_m)
\lesssim
Q_{m,p}^{\frac{\kappa+1}{\kappa+p}}.
$
When the available primitive is an $L_2$ CATE rate, setting $p=2$ directly is sharper than first converting the MSE into a different moment. This is the route used in \Cref{thm:bag_upper}.
\end{remark}

The upper-bound analysis therefore reduces policy-vote bagging to two quantities: the base learner's CATE error and each unit's bag-inclusion frequency. The next section asks whether this decomposition can be improved and what bag size it recommends when CATE accuracy follows a standard learning rate.

\section{Lower bounds and optimality}\label{sec:rates}

Are the bounds sharp? What bag sizes do they recommend? How do the resulting rates compare with what any algorithm can achieve? We answer these questions in turn. First, we establish matching lower bounds over a high-level class of populations and base learners. Second, we characterize rate-optimal bag sizes. Third, we give a minimax benchmark over all policy-learning algorithms and use it to assess the parametric and Gaussian-kernel rates.

\subsection{Sharpness over a high-level class}\label{sec:bag_lower}

A single learning-rate parameter summarizes how the base learner improves with bag size. Suppose that, for some $\alpha \in(0,1]$, $
Q_{m,2}
\lesssim
m^{-\alpha}
$ uniformly over $1\leq m\leq n.
$
For empirical-risk minimizers, $Q_{m,2}$ is often the square of a critical radius, so localized complexity calculations can characterize $\alpha$ for particular CATE learners \citep{van_der_laan_2026}. Parametric learners have $\alpha=1$. In nonparametric problems, $\alpha<1$ reflects smoothness and dimension.

The learning rate $\alpha$ determines how bag size $m$ affects the two terms of our upper bound. Substituting the learning rate into \Cref{thm:bag_upper} gives
\begin{equation}\label{eq:rate_upper}
R_{\RA}(A_m)\lesssim m^{-\frac{\alpha(\kappa+1)}{\kappa+2}}+\rho\frac{m^{1-\alpha}}{n}.
\end{equation}
The first term bounds expected regret and decreases with bag size. The second term bounds the sampling-risk. Larger bags improve estimation accuracy, but each observation also enters a larger fraction of the bags. For $\rho>0$ and $\alpha<1$, accuracy improves too slowly to offset this increased influence, so the second term increases with $m$. When $\alpha=1$, the two effects exactly offset, leaving the second term constant.

Computational bagging adds variation from sampling the bags. With $B$ bags, \Cref{cor:finite_bagging} adds $\rho\{(1-m/n)/B\}m^{-\alpha}$ to the complete-bagging bound in \eqref{eq:rate_upper}. This additional term vanishes at $m=n$, where every bag is the full sample.

Our lower bound shows that neither term in \eqref{eq:rate_upper} can be improved without restricting the data-generating process or the base learner further.

\begin{theorem}[Lower bound for complete policy-vote bagging]\label[theorem]{thm:sharpness}
Fix $\kappa>0$ and $\alpha>0$. There exist a data-generating process satisfying Assumption \ref{ass:main},
and a sequence of symmetric bag-level base learners, such that for every $n\geq m\geq1$, we have
$
Q_{m,2}
\asymp
m^{-\alpha}
$ while 
\begin{equation}\label{eq:sharp_lower}
R_{\RA}(A_m)
\gtrsim
m^{-\frac{\alpha(\kappa+1)}{\kappa+2}}
+
\rho\frac{m^{1-\alpha}}{n}.
\end{equation}
The implicit constants are independent of $n$, $m$, and $\rho$. Together, \eqref{eq:rate_upper} and \eqref{eq:sharp_lower} show that the risk-aware-regret bound is sharp up to constants over the stated high-level class.
\end{theorem}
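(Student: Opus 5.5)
The plan is to build one data-generating process with two disjoint covariate regions: one pins down the regret term and the other pins down the variance term. Since $R_{\RA}(A_m)=R(A_m)+\rho\V(Z_{A_m})$ and both pieces are nonnegative, it suffices to show $R(A_m)\gtrsim m^{-\alpha(\kappa+1)/(\kappa+2)}$ and $\V(Z_{A_m})\gtrsim m^{1-\alpha}/n$ separately, with constants free of $n,m,\rho$.

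Take $X\sim\operatorname{Unif}[0,1]$, fixed potential-outcome noise, and treatment randomized with $e=1/2$.
\begin{itemize}
\item On $[0,1/2]$, set $\tau(x)=c\,(x-1/4)_+^{1/\kappa}$ plus a symmetric negative branch. Then $\Pp_X\{0<|\tau(X)|\le t\}\asymp t^\kappa$, so the margin condition holds.
\item On $(1/2,1]$, set $\tau\equiv1$, which contributes nothing to the margin set.
\end{itemize}

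\textbf{Regret term.} On $[0,1/2]$ the base learner is deterministic and ignores the data: $\widehat\tau_I(x)=\tau(x)-2\tau(x)\1\{0<\tau(x)\le h_m\}$. It therefore votes wrongly exactly where $0<\tau\le h_m$, a set of mass $\asymp h_m^\kappa$ on which the error is $\asymp h_m$. This gives
\[
Q^{(1)}\asymp h_m^{\kappa+2},\qquad R^{(1)}=\E\{\tau(X)\1(0<\tau\le h_m)\}\asymp h_m^{\kappa+1}.
\]
Choosing $h_m=m^{-\alpha/(\kappa+2)}$ yields $Q^{(1)}\asymp m^{-\alpha}$ and $R(A_m)\gtrsim m^{-\alpha(\kappa+1)/(\kappa+2)}$. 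This piece is nonrandom, so it adds nothing to the welfare variance.

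\textbf{Variance term.} On $(1/2,1]$, fix a trigger set $E_m\subseteq(1/2,1]$ of $X$-mass $q_m=\min\{1/4,\,c\,m^{-1-\alpha}\}$. The learner sets $\widehat\tau_I(x)=1-2h(S_I)$ with $h(S_I)=\1\{\exists i\in I:X_i\in E_m\}$, so every vote on this region flips to $0$ exactly when the bag contains a trigger point. The learner is symmetric in $I$, and with $p_m=1-(1-q_m)^m\asymp mq_m$ (for $m\ge2$) we get:
\begin{itemize}
\item $Q^{(2)}\asymp p_m\asymp m^{-\alpha}$, so the total $Q_{m,2}\asymp m^{-\alpha}$;
\item an extra regret contribution of order $p_m\lesssim m^{-\alpha}$, which is harmless for a lower bound.
\end{itemize}

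For welfare, note that $W(A_m)=\text{const}-\tfrac12 U_{n,m}$, where $U_{n,m}=\binom nm^{-1}\sum_I h(S_I)$ is a U-statistic with kernel $h$. The key step is the Hoeffding/Efron--Stein lower bound: the variance of a symmetric statistic is at least the sum of the variances of its first-order projections, so
\[
\V(U_{n,m})\ \ge\ n\,\V\bigl[\E(U_{n,m}\mid S_1)\bigr]=n\Bigl(\frac mn\Bigr)^2\V\{h_1(S_1)\}.
\]
Here $h_1(s)=\E\{h\mid S_1=s\}=\1\{s\in E_m\}+\1\{s\notin E_m\}p_{m-1}$, so $\V\{h_1(S_1)\}=q_m(1-q_m)(1-p_{m-1})^2\gtrsim q_m$ once $p_{m-1}\le1/2$. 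That condition holds after choosing $c$ small. Combining, $\V(Z_{A_m})\gtrsim (m^2/n)\,m^{-1-\alpha}=m^{1-\alpha}/n$. The case $m=1$ is covered because $q_1$ is then a constant.

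\textbf{Main obstacle.} The delicate part is this variance step. I need to check that the projection bound is not lost to the $(1-p_{m-1})$ factor, and that $q_m$ stays admissible for every $1\le m\le n$ with constants uniform in $m$; the $\min$ in $q_m$ and a small $c$ handle both. I also need to confirm that a bag-size-dependent trigger set is allowed. It is: only the learner sequence varies with $m$, while the population $P$ stays fixed.
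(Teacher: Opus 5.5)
Your proposal is correct and follows the paper's proof. A data-independent learner that reverses the sign of $\tau$ wherever $0<\tau\le h_m$, with $h_m=m^{-\alpha/(\kappa+2)}$, gives the regret term, and a unit-effect stratum with sample-dependent vote flips gives the variance term via the first-order Hoeffding projection bound $\V(U_{n,m})\ge (m^2/n)\V\{h_1(S_1)\}$. The only difference is the flip mechanism: the paper lets a single selected observation (the one with the smallest auxiliary coordinate) flip the vote with probability $\frac12 m^{-\alpha}$, and the selection factor $(1-u)^{m-1}$ brings the projection variance down to order $m^{-1-\alpha}$. Your any-trigger-in-the-bag rule, with per-observation trigger mass of order $m^{-1-\alpha}$, reaches that order directly, using a one-dimensional covariate and no tie-breaking.
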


The theorem establishes sharpness when only an integrated CATE-error rate is known. It constructs a population and a sequence of base learners for which both terms attain their upper-bound orders. Particular learners can perform better. The matching bounds nevertheless allow a worst-case comparison of bag sizes over a class with common constants.

\subsection{Optimal bag size}\label{sec:bag_choice}\label{sec:learning_examples}

The matching bounds tell us how to choose bag size when only the learning rate is known. The comparison uses the same class of populations and base learners at every bag size.

\begin{corollary}[Optimal bag size over the high-level class]\label[corollary]{cor:bag_choice}
Fix $\kappa>0$ and $\alpha>0$. Let $\mathcal H_{\alpha,\kappa}$ satisfy the uniform class conditions in Appendix~\ref{app:bag_class}. Write $R_{\RA,H}(A_m)$ for the risk-aware regret of complete policy-vote bagging under the population and base learners specified by $H$. Uniformly over $n\geq1$ and $\rho\geq0$,
\[
\inf_{1\leq m\leq n}
\sup_{H\in\mathcal H_{\alpha,\kappa}}
R_{\RA,H}(A_m)
\asymp
\inf_{1\leq m\leq n}
\left\{
m^{-\frac{\alpha(\kappa+1)}{\kappa+2}}
+\rho\frac{m^{1-\alpha}}{n}
\right\},
\]
where the comparison constants depend only on $\alpha,\kappa$ and the class bounds. Consequently:
\begin{enumerate}[label=(\roman*),leftmargin=2.1em]
\item If $\alpha=1$, choosing $m=n$ is minimax-rate optimal.

\item If $0<\alpha<1$ and $\rho>0$, let $m_0=
\left(\frac n\rho\right)^{
\frac{\kappa+2}{\kappa+2-\alpha}}.$ Any integer bag size with $m\asymp\min\{n,\max(1,m_0)\}$ is minimax-rate optimal.

\item For every fixed $\rho\geq0$ and every $\alpha>0$, the minimized worst-case criterion has order $n^{-\alpha(\kappa+1)/(\kappa+2)}$ as $n\to\infty$. Every sequence $m\asymp n$, including $m=n$, attains this order.
\end{enumerate}
\end{corollary}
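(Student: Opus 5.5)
The argument has three parts: a two-sided sandwich, a case analysis of the deterministic function $f_n(m)=m^{-a}+\rho m^{1-\alpha}/n$ with $a=\alpha(\kappa+1)/(\kappa+2)$, and the large-$n$ limit.

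\textbf{Sandwich.} Take Appendix~\ref{app:bag_class} to require three things of every $H\in\mathcal H_{\alpha,\kappa}$, all with common constants: the population satisfies Assumption~\ref{ass:main} with uniformly bounded $C_\tau$ and $\lVert\tau\rVert_{L_1(P_X)}$; the learner is symmetric; and $Q_{m,2}\le C m^{-\alpha}$. The class must also contain the extremal construction of \Cref{thm:sharpness}.

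For the upper bound, I would go back through the proof of \Cref{thm:bag_upper} and check that its implicit constant depends only on $\kappa$, $C_\tau$, $\underline e$ and $\lVert\tau\rVert_{L_1}$. Substituting the rate then gives $\sup_H R_{\RA,H}(A_m)\lesssim f_n(m)$ for each $m$, as in \eqref{eq:rate_upper}.

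For the lower bound, \Cref{thm:sharpness} supplies a single $H_0\in\mathcal H_{\alpha,\kappa}$ with $R_{\RA,H_0}(A_m)\gtrsim f_n(m)$. This holds for every $n\ge m\ge1$, with constants free of $n,m,\rho$. So $\sup_H R_{\RA,H}(A_m)\asymp f_n(m)$ pointwise in $m$, uniformly in $n$ and $\rho$. Because the two sides are comparable for each $m$ with the same constants, taking $\inf_{1\le m\le n}$ preserves $\asymp$, which is the displayed equivalence.

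\textbf{Cases.} By the sandwich, any $m$ with $f_n(m)\lesssim\inf_{m'} f_n(m')$ is minimax-rate optimal.

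For (i), $\alpha=1$, we have $f_n(m)=m^{-a}+\rho/n$. The first term decreases in $m$ and the second is constant, so $m=n$ minimizes $f_n$ exactly.

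For (ii), $0<\alpha<1$ and $\rho>0$, the first term decreases and the second increases in $m$, so $f_n(m)\asymp\max\{m^{-a},\rho m^{1-\alpha}/n\}$. The two terms balance when $m^{1-\alpha+a}=n/\rho$. Since $1-\alpha+a=(\kappa+2-\alpha)/(\kappa+2)$, this gives exactly $m_0$. Three cases follow.
\begin{itemize}
\item If $1\le m_0\le n$, any $m\asymp m_0$ makes both terms of order $m_0^{-a}$. For any other $m'$, the first term dominates when $m'\le m_0$ and the second dominates when $m'\ge m_0$, so $f_n(m')\gtrsim m_0^{-a}$.
\item If $m_0>n$, the first term dominates on all of $[1,n]$, so $f_n(m)\asymp m^{-a}$ is minimized at $m\asymp n$.
\item If $m_0<1$, the second term dominates on all of $[1,n]$, so the minimum is at $m\asymp 1$.
\end{itemize}
Rounding to an integer changes the constants only by factors depending on $\alpha,\kappa$, because $f_n(cm)\asymp f_n(m)$ for fixed $c$.

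\textbf{Large $n$.} For (iii), fix $\rho$. At $m\asymp n$, $f_n\asymp n^{-a}+\rho n^{-\alpha}\asymp n^{-a}$, since $a<\alpha$. Conversely, for every $m\le n$, $f_n(m)\ge m^{-a}\ge n^{-a}$, so the infimum is $\asymp n^{-a}$ and any $m\asymp n$ attains it. When $\alpha=1$ this agrees with (i). When $\alpha<1$ it agrees with (ii), because for fixed $\rho$ eventually $m_0\gtrsim n^{(\kappa+2)/(\kappa+2-\alpha)}>n$. For $\alpha>1$, the second term $\rho m^{1-\alpha}/n\le\rho/n\le\rho n^{-a}$ once $a\le1$; when $a>1$ it is still $\le\rho n^{-1}$, and at $m\asymp n$ it equals $\rho n^{-\alpha}\le\rho n^{-a}$, so the conclusion holds in all cases.

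\textbf{Main obstacle.} The delicate step is the uniformity. The constants in \Cref{thm:bag_upper} and \Cref{thm:sharpness} must not depend on $\rho$, and the lower-bound instance must lie in the class. I would handle this by defining the class bounds explicitly and tracking the constants of the margin-based regret argument, which converts CATE error into regret through Hölder and the margin condition, together with the variance argument that yields the factor $m/n$.
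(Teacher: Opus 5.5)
Your proposal is correct and follows essentially the same route as the paper. It establishes the per-$m$ sandwich $\sup_{H}R_{\RA,H}(A_m)\asymp m^{-\alpha(\kappa+1)/(\kappa+2)}+\rho m^{1-\alpha}/n$ from \Cref{thm:bag_upper} (whose constants depend only on the class bounds) and from the in-class construction of \Cref{thm:sharpness}, and then minimizes the envelope. The one difference is cosmetic: in (ii) you find the optimum by balancing the two terms via $f_n\asymp\max\{\cdot,\cdot\}$, whereas the paper differentiates to get the exact stationary point, which differs from $m_0$ by a fixed factor, before clipping to $[1,n]$.
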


\Cref{cor:bag_choice} clarifies when to use a bag size $m$ that is smaller than the sample size $n$. 
If estimation accuracy improves quickly due to a parametric CATE $(\alpha=1)$, the researcher should use the full sample, taking $m=n$. 
If estimation accuracy improves slowly due to a nonparametric CATE $(\alpha<1)$, the researcher should typically use subsamples, taking $m=\left(\frac n\rho\right)^{
\frac{\kappa+2}{\kappa+2-\alpha}}$. 
Intuitively, if there is simple heterogeneity with a known functional form ($\alpha=1$), then using the full sample is fine; if there is complex heterogeneity with an unknown functional form ($\alpha<1$), then subsampling helps.
Moreover, placing a higher price $\rho$ on sampling risk favors smaller bags. 
For a fixed and large price $\rho$ of sampling risk, the researcher should use $m<n$ in finite samples and $m=n$ in the large sample limit $n\rightarrow \infty$. \Cref{fig:bag_size_envelope} illustrates these comparisons. 

\begin{figure}[t]
\centering
\includegraphics[width=\textwidth]{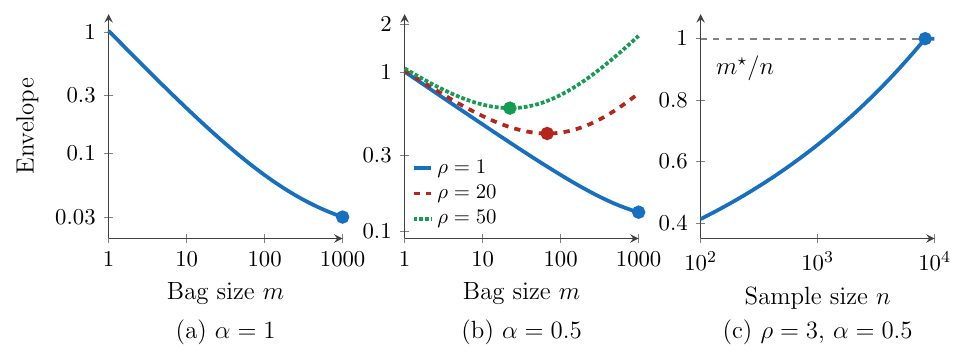}
\caption{Bag size and the risk-aware-regret bound}
\label{fig:bag_size_envelope}
\caption*{\footnotesize \textit{Notes:}
All panels use the envelope in \eqref{eq:rate_upper}
with unit leading constants and $\kappa=1$.
Panels (a) and (b) fix $n=1000$; panel (a) uses $\rho=20$.
Dots in these panels mark the minimizing bag sizes,
treating $m$ as continuous and imposing $1\leq m\leq n$.
Panel (b) shows how greater weight on sampling risk
can favor smaller bags.
Panel (c) fixes $\rho=3$ and plots the minimizing
fraction $m^\star/n$ as the sample grows.
The dashed line marks the full-sample choice,
which eventually minimizes the envelope.}
\end{figure}

The next two examples show how the learning rate affects optimal bag size.

\begin{example}[Parametric CATE]\label[example]{ex:parametric}
Under a parametric model and correctly-specified learner, $Q_{m,2}\lesssim m^{-1}$ and
$
R_{\RA}(A_m)
\lesssim
m^{-(\kappa+1)/(\kappa+2)}
+
\frac{\rho}{n}.
$
The rate-optimal bag size is $m=n$.
\end{example}

Slower learning leaves more room for smaller bags to help. The next example makes this dependence explicit in terms of smoothness and dimension.

\begin{example}[H\"older-smooth CATE]\label[example]{cor:holder}
Under the H\"older model and local-polynomial learner specified in Appendix~\ref{app:holder_conditions}, let $s$ denote smoothness and $d$ the covariate dimension. Then
$
Q_{m,2}\lesssim m^{-\frac{2s}{2s+d}}
$ and $
R_{\RA}(A_m)\lesssim
m^{-\frac{2s(\kappa+1)}{(2s+d)(\kappa+2)}}
+\rho\frac{m^{\frac{d}{2s+d}}}{n}.
$
The rate-optimal bag size is $m=\left(\frac n\rho\right)^{
\frac{(2s+d)(\kappa+2)}
{2s(\kappa+1)+d(\kappa+2)}}$.
\end{example}

This bound need not be sharp for smooth treatment effects. More detailed guarantees for the estimator can give faster regret rates \citep{audibert_tsybakov_2007}.

\begin{remark}[Finite-sample bag choice]\label[remark]{rem}\label[remark]{rem:finite_sample_bag_choice}
A rate-optimal bag size need not minimize finite-sample risk-aware regret. At fixed $n$, the upper bound depends on the actual error curve $m\mapsto Q_{m,2}$ and the constants omitted by rate statements. Finding the best bag size requires comparing expected welfare and sampling variance. Estimating the frontier is for future work.
\end{remark}

CARA utility permits an exact comparison in simple experiments. Even with parametric estimation accuracy, sufficiently strong risk aversion can favor smaller bags.

\begin{example}[Finite-sample bag choice in a toy example]\label[example]{ex:cara_bag_choice}
Consider the experiment and base learner in Appendix~\ref{app:cara_experiment}, with $n=3$ and $Q_{m,2}=75/(64m)$. Under $\psi_\gamma(w)=-\exp(-\gamma w)$, complete policy-vote bagging satisfies $\argmax_{m\in\{1,2,3\}}\CE_\gamma(A_m)=3$ if $0<\gamma<4\log 3$, and $\argmax_{m\in\{1,2,3\}}\CE_\gamma(A_m)=1$ if $\gamma>4\log 3$.
\end{example}

The full sample has higher expected welfare, but strong enough risk aversion favors small bags. We conclude that the best finite-sample choice depends on more than the learning rate.

\subsection{Minimax benchmark over all algorithms}\label{sec:minimax_local}

The final question is what any policy-learning algorithm can achieve. We partially answer the question, focusing on the parametric case $(\alpha=1)$. The next theorem uses two populations that differ only in whether treatment helps a small group.

\begin{theorem}[Local minimax rate]\label[theorem]{thm:minimax}
Fix $\kappa>0$, and let $\mathcal P_n$ be the parametric experiments constructed in Appendix~\ref{app:local_experiment}. For sufficiently small $c>0$ and every $\rho\geq0$,
\begin{equation*}\label{eq:minimax}
\inf_{A:\cS^n\to\cG}
\sup_{P\in\mathcal P_n}
R_{\RA,P}(A)
\asymp
n^{-(\kappa+1)/(\kappa+2)}.
\end{equation*}
The constants are independent of $n$ and $\rho$.
\end{theorem}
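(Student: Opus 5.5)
We prove the two-sided rate separately. For the upper bound, we exhibit one algorithm. Take complete policy-vote bagging with $m=n$, which is simply the plug-in rule $\1\{\widehat\tau(x)\geq0\}$. Here $\widehat\tau$ is the correctly specified parametric least-squares (or IPW-regression) CATE estimator on the experiments in $\mathcal P_n$. Assumption~\ref{ass:main}(ii) keeps the propensity bounded away from zero and one, so uniformly over $\mathcal P_n$ we have $Q_{n,2}\lesssim n^{-1}$. \Cref{thm:bag_upper} with $m=n$ then gives
\[
\sup_{P\in\mathcal P_n}R_{\RA,P}(A_n)\lesssim n^{-(\kappa+1)/(\kappa+2)}+\rho\,n^{-1}.
\]
The second term is not of the claimed order when $\rho$ is large relative to $n^{1/(\kappa+2)}$. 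The statement has constants independent of $\rho$, so the construction in Appendix~\ref{app:local_experiment} must make the welfare variance itself small enough. I will bound $\V(Z_A)$ directly rather than through \Cref{thm:bag_upper}.

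My plan for the variance is to design $\mathcal P_n$ so that the only nontrivial decision concerns a small group of mass of order $h^\kappa$, with $h\asymp n^{-1/(\kappa+2)}$. On that group $|\tau|\asymp h$, and outside it $|\tau|$ is bounded below by a constant. The parametric learner then gets the sign right outside the group except on an event of exponentially small probability. Hence $W\{A(S)\}$ fluctuates by at most $h\cdot h^\kappa$, plus exponentially small terms, which gives $\V(Z_A)\lesssim h^{2(\kappa+1)}$. If the $\rho$-term cannot be removed this way, I would instead use a $\rho$-dependent choice: the constant policy that is optimal outside the group and fixed inside it, together with an $n$-dependent family in which the group's sign is the only unknown. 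A data-independent rule has zero variance and regret at most $h^{\kappa+1}\asymp n^{-(\kappa+1)/(\kappa+2)}$, which gives a $\rho$-free upper bound outright. I will check which of these two routes matches the appendix construction. The second is cleaner, and I would try it first.

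For the lower bound, $R_{\RA}\geq R$ because $\rho\geq0$, so it suffices to lower-bound worst-case expected regret uniformly in $\rho$. I use a two-point (Le Cam) argument. The two populations $P_+$ and $P_-$ agree except that on a covariate set of mass $\asymp h^\kappa$ the CATE equals $+h$ or $-h$. This is implemented by shifting the mean of $Y(1)$ there, with Gaussian or bounded outcomes. The margin condition holds with constant $C_\tau$ because $\Pp_X\{0<|\tau|\leq t\}\leq h^\kappa\1\{t\geq h\}\leq t^\kappa$.

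The next step is to lower-bound the regret sum. For any policy $G$, the regrets add up as
\[
\{W_+^\star-W_+(G)\}+\{W_-^\star-W_-(G)\}=h\,\E_X\bigl[\1_{\text{group}}\{(1-G)+G\}\bigr]\asymp h^{\kappa+1}.
\]
So $\max_\pm R_\pm(A)\geq\tfrac12 h^{\kappa+1}\{1-\TV(P_{+,\tr}^n,P_{-,\tr}^n)\}$. For the total variation, the KL divergence per observation is of order $h^\kappa\cdot h^2$, so over $n$ observations it is of order $nh^{\kappa+2}$. Choosing $h=c^{1/(\kappa+2)}n^{-1/(\kappa+2)}$ with $c$ small keeps the KL divergence at most $c$, so Pinsker's inequality bounds $\TV$ away from one. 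This yields the lower bound $\gtrsim n^{-(\kappa+1)/(\kappa+2)}$, with constants free of $n$ and $\rho$.

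The main obstacle is the upper bound's independence from $\rho$. \Cref{thm:bag_upper} and \Cref{thm:stability} both carry a $\rho$-weighted variance term, so the argument must rely on the specific structure of $\mathcal P_n$, namely known behavior off the small group, to find an algorithm whose welfare variance is itself $O(n^{-(\kappa+1)/(\kappa+2)})$ or zero.
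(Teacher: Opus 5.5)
Your lower bound and your second upper-bound route are essentially the paper's argument. The paper's construction takes a rare group of mass $p_n=\varepsilon_n^\kappa$ with $\tau=\pm\varepsilon_n$ and bounds $\KL(P_{+,n}^n\Vert P_{-,n}^n)\le 8n\varepsilon_n^{\kappa+2}$, then uses Pinsker to obtain $\sup_P R_P(A)\ge p_n\varepsilon_n/4$. It attains the rate with a sample-independent policy that treats with probability one half on the rare group; this policy has zero welfare variance, so its bound does not depend on $\rho$. You can drop the plug-in and direct-variance routes entirely, because for any fixed data-dependent rule with $\V(Z_A)>0$ the term $\rho\V(Z_A)$ is unbounded in $\rho$.
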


The lower bound comes from mistakes in treatment choice. It matches the parametric rate in \Cref{ex:parametric} for fixed $\rho$.

Gaussian kernels offer a flexible learning method with nearly the same rate. The next example uses \Cref{thm:minimax} to show when this rate is close to the best possible.

\begin{example}[Gaussian-RKHS CATE]\label[example]{cor:gaussian_rkhs}
Under the Gaussian-RKHS model and kernel ridge learner specified in Appendix~\ref{app:gaussian_conditions}, let $A_n$ be the full-sample policy and $d$ the covariate dimension. Uniformly over the stated class,
$
Q_{n,2}\lesssim\frac{(\log n)^d}{n}
$ and $
R_{\RA}(A_n)\lesssim
\left\{\frac{(\log n)^d}{n}\right\}^{\frac{\kappa+1}{\kappa+2}}
+\rho\frac{(\log n)^d}{n}.
$
Under the additional lower-bound conditions stated there, this rate is minimax for fixed $\rho$, up to the factor $(\log n)^{d(\kappa+1)/(\kappa+2)}$.
\end{example}

The Gaussian comparison allows the covariate distribution to vary within the class. It does not establish the same lower bound for every fixed distribution.
\section{Discussion}\label{sec:discussion}

This paper connects sampling risk in statistical treatment choice to algorithmic stability. 
The compound lottery includes both variation across experimental samples and variation across the full population under the implemented policy. 
The main text focuses on aversion to the first source of uncertainty, while Appendix~\ref{app:joint_concavity} allows risk aversion at both stages. 
If replacing one experimental unit changes a policy recommendation only slightly, then welfare cannot vary much across experimental samples. 
Policy-vote bagging builds this stability into the overall policy learning algorithm.

Future work may estimate the frontier of expected welfare and sampling risk to guide the choice of bag size. Accounting for computational costs could also guide how many bags to use. Other extensions include alternative risk measures and treatment capacity constraints.

\begingroup
\raggedright
\setlength{\bibsep}{4pt plus 1pt minus 1pt}

\endgroup

\clearpage
\appendix
\begin{center}
{\Large\bfseries Appendix}
\end{center}
\vspace{1.5em}
\appcontentsline{app:joint_concavity}{Appendix A. Extension to joint risk aversion}
\appcontentsline{app:sec2_proofs}{Appendix B. Proofs for Section 2}
\appcontentsline{app:sec3_proofs}{Appendix C. Proofs for Section 3}
\appcontentsline{app:sec4_proofs}{Appendix D. Proofs for Section 4}
\appcontentsline{app:sec5_proofs}{Appendix E. Proofs for Section 5}
\appcontentsline{app:sec6_proofs}{Appendix F. Details and proofs for Section 6}
\appcontentsline{app:secA_proofs}{Appendix G. Proofs for Appendix A}
%\clearpage

\section{Extension to joint risk aversion}
\label{app:joint_concavity}

Risk aversion at both stages can be handled by applying
the main analysis to transformed outcomes.
The function $\phi$ evaluates outcomes across population
members, while $\psi$ evaluates how the resulting policy
value varies across experimental samples.
The criterion retains the paper's two-agent interpretation,
or can represent a single social evaluation that
distinguishes the two sources of uncertainty.
No Taylor expansion of $\phi$ is needed.
The mean-variance approximation, when used, applies
only to the outer utility $\psi$.

The transformed treatment contrast becomes the relevant
target for policy learning.
Retain the sampling and randomized-treatment design
of the main text.
Fix a known increasing and concave function $\phi$,
finite on an interval containing the potential outcomes,
such that
$
\E|\phi\{Y(0)\}|+\E|\phi\{Y(1)\}|<\infty.
$
Recall
$
V_\phi(G)=\E\{\phi(Y^G)\}$ and $
\tau_\phi(x)
=
\E\left[
\phi\{Y(1)\}-\phi\{Y(0)\}\mid X=x
\right].
$
Stratified randomization identifies $\tau_\phi$ by
comparing conditional means of the transformed observed
outcome $\phi(Y)$ across treatment arms.
The corresponding oracle is
$G_\phi^\star(x)=\1\{\tau_\phi(x)\geq0\}$.

Risk-aware regret now measures the expected loss and
sampling variance of transformed welfare.
For $\rho\geq0$, define
\[
R_{\phi,\RA}(A)
=
V_\phi(G_\phi^\star)
-
\E[V_\phi\{A(S)\}]
+
\rho\V[V_\phi\{A(S)\}].
\]
For computational bagging, the expectation and variance
include both the training sample and the bag draws.
Given an estimator $\widehat\tau_{\phi,I}$, define
the bag-specific policy by
$\widehat G_I(x)=\1\{\widehat\tau_{\phi,I}(x)\geq0\}$,
and let
$
Q_{\phi,m,p}
=
\E_{S,I,X}\left\{
|\widehat\tau_{\phi,I}(X)-\tau_\phi(X)|^p
\right\}$ for $ p>1.
$
Complete and computational policy-vote bagging average
these policies as in the main text.

\begin{theorem}[Joint risk aversion]
\label[theorem]{thm:joint_concavity}
Let $\psi$ be increasing and concave on an open interval
containing the feasible transformed welfare values.
Suppose that, for some $\kappa>0$ and $C_\phi<\infty$, the generalized margin condition
$
\Pp_X\{0<|\tau_\phi(X)|\leq t\}
\leq C_\phi t^\kappa,
$ holds for $t>0
$
The main-text results transfer under their corresponding
assumptions, with
$(W,\tau,\widehat\tau_I,Q_{m,p},G^\star)$ replaced by
$(V_\phi,\tau_\phi,\widehat\tau_{\phi,I},
Q_{\phi,m,p},G_\phi^\star)$.
All welfare-based quantities, including regret,
certainty equivalents, and stability parameters,
are computed using $V_\phi$.

The averaging comparison in \Cref{prop:bag_concave}
requires only the stated concavity of $\psi$.
The certainty-equivalent expansion and its consequences
require the additional conditions of \Cref{lem:taylor},
imposed on the transformed welfare interval.
The exact CARA results apply to CARA outer utility,
with welfare range
$\lVert\tau_\phi\rVert_{L_1(P_X)}$.

In particular, suppose $Q_{\phi,m,2}<\infty$ and use
deterministic symmetric base learners.
For complete bagging and for computational bagging
with independent uniform bag draws,
\[
\begin{aligned}
R_{\phi,\RA}(A_m)
&\lesssim
Q_{\phi,m,2}^{\frac{\kappa+1}{\kappa+2}}
+
\rho\frac mn Q_{\phi,m,2},\\
R_{\phi,\RA}(A_{m,B})
&\lesssim
Q_{\phi,m,2}^{\frac{\kappa+1}{\kappa+2}}
+
\rho\left\{
\frac mn+\frac{1-m/n}{B}
\right\}Q_{\phi,m,2}.
\end{aligned}
\]
If $Q_{\phi,m,2}\lesssim m^{-\alpha}$ uniformly in $m$,
then
$
R_{\phi,\RA}(A_m)
\lesssim
m^{-\frac{\alpha(\kappa+1)}{\kappa+2}}
+
\frac{\rho}{n}m^{1-\alpha}.
$
This rate is sharp over the high-level
class whenever $\phi$ is strictly increasing on a
nondegenerate compact interval, and the fixed class
bounds admit a rescaled lower-bound construction.
\end{theorem}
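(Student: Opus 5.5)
The plan is a reduction argument. I will show that every main-text proof uses the outcome only through three objects: the welfare functional, the treatment contrast, and the identification of that contrast from the training experiment. Replacing $Y$ by $\phi(Y)$ leaves each of these structurally unchanged.

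\textbf{Step 1 (structure of transformed welfare).} Write
\[
V_\phi(G)=\E[\phi\{Y(0)\}]+\E\{\tau_\phi(X)G(X)\}.
\]
This has the same affine form as $W(G)=\E\{Y(0)\}+\E\{\tau(X)G(X)\}$, with integrability of $\phi\{Y(d)\}$ in the role of Assumption~\ref{ass:main}(i). It follows that:
\begin{enumerate}[label=(\alph*),leftmargin=2.1em]
\item $V_\phi$ is affine in $G$;
\item $G_\phi^\star$ maximizes $V_\phi$ over $\cG$;
\item for any $G,G'\in\cG$, $|V_\phi(G)-V_\phi(G')|\leq\sup_x|G-G'|\,\lVert\tau_\phi\rVert_{L_1(P_X)}$;
\item the regret identity reads $V_\phi(G_\phi^\star)-V_\phi(G)=\E\{|\tau_\phi(X)|\,|G^\star_\phi(X)-G(X)|\}$.
\end{enumerate}
The feasible transformed welfare interval therefore has length $\lVert\tau_\phi\rVert_{L_1(P_X)}$.

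\textbf{Step 2 (identification).} Stratified randomization gives $D\perp\{\phi(Y(0)),\phi(Y(1))\}\mid X$, and the overlap condition is unchanged. So $(X,D,\phi(Y))$ is an ordinary randomized experiment whose CATE is $\tau_\phi$. The base learners $\widehat\tau_{\phi,I}$ are learners on this transformed sample, and $Q_{\phi,m,p}$ is their integrated error.

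\textbf{Step 3 (transferring each result).} With Steps 1 and 2 in hand, I go through the main-text results one by one.
\begin{enumerate}[label=(\alph*),leftmargin=2.1em]
\item \Cref{thm:stability} and \Cref{prop:cara_stability} use only the bounded-difference/McDonald/Hoeffding argument applied to the scalar $V_\phi\{A(S)\}$, so they transfer directly.
\item \Cref{lem:policy_to_welfare}, \Cref{lem:bag_stability} and \Cref{lem:finite_stability} use only the Lipschitz bound in Step 1(c).
\item \Cref{prop:bag_concave} uses only the affinity in Step 1(a) plus Jensen's inequality for concave $\psi$.
\item \Cref{prop:cara_variance} and \Cref{cor:cara_bagging} use the range $\lVert\tau_\phi\rVert_{L_1}$ from Step 1.
\item \Cref{lem:taylor} and its corollaries depend only on $\psi$ on the transformed interval.
\item The upper bound in \Cref{thm:bag_upper}: the regret term comes from Step 1(d) combined with the generalized margin condition, via the usual split $|\tau_\phi|\le t$ versus $|\tau_\phi|>t$ with Markov's inequality on $Q_{\phi,m,2}$. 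The variance term comes from the Hoeffding/ANOVA decomposition of the complete U-statistic-type average, which bounds $\V$ by $(m/n)$ times the variance of a single-bag welfare. That single-bag variance is at most $\E\{|\tau_\phi||\1(\widehat\tau_{\phi,I}\ge0)-G_\phi^\star|\}^2$-type quantities, and these are controlled by $Q_{\phi,m,2}$ as before.
\item The computational bound adds the bag-sampling variance $(1-m/n)/B$ times the same single-bag quantity.
\item The rate display follows by substituting $Q_{\phi,m,2}\lesssim m^{-\alpha}$.
\end{enumerate}

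\textbf{Step 4 (sharpness), the main obstacle.} The main-text lower-bound constructions prescribe the CATE $\tau$ and outcome distributions directly. Here one must realize a prescribed $\tau_\phi$ through outcomes passed through $\phi$. I would take the construction of \Cref{thm:sharpness} with outcomes supported in a small interval, and rescale it affinely into the compact interval $[a,b]$ on which $\phi$ is strictly increasing. Then I would choose potential outcomes $Y(d)=\phi^{-1}(\cdot)$ so that $\phi\{Y(d)\}$ reproduces the rescaled construction exactly. This is possible because $\phi$ is continuous and strictly increasing there: it is concave on an open interval, hence continuous, and therefore a homeomorphism onto its image.

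The base learners are defined on the transformed data, so $Q_{\phi,m,2}\asymp m^{-\alpha}$. Both lower-bound terms scale by fixed powers of the rescaling factor. That factor depends only on $\phi$ and $[a,b]$, which is exactly the condition that the class bounds admit the rescaled construction. The delicate point is checking that the margin constant and the $Q$ constants stay within the fixed class bounds after rescaling; I would verify this by tracking the scale factor explicitly through each term.
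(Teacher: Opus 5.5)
Your proposal is correct and follows the paper's proof essentially step for step. It uses the affine representation $V_\phi(G)=\E[\phi\{Y(0)\}]+\E\{\tau_\phi(X)G(X)\}$ with feasible range of length $\lVert\tau_\phi\rVert_{L_1(P_X)}$, transfers the stability, CARA, $U$-statistic and margin arguments verbatim (the key fact being $|\tau_\phi(X)|\leq|\widehat\tau_{\phi,I}(X)-\tau_\phi(X)|$ on the disagreement event), and gets sharpness from potential outcomes $\phi^{-1}\{b+aY(d)\}$ with scores multiplied by $a>0$. Then $V_\phi=b+aW$, so regret scales by $a$, variance by $a^2$, and the margin constant by $a^{-\kappa}$.

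Two small repairs are needed. First, the rescaled values $b+aY(d)$ must lie in the image under $\phi$ of a compact subinterval of the interior of $\phi$'s domain, not in the domain interval itself; that interior choice is what gives continuity and a well-defined $\phi^{-1}$. Second, your list omits the local minimax bound of \Cref{thm:minimax}, which the paper carries over with the same injective data map, since that map preserves total variation.
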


In summary, joint risk aversion changes the outcome scale and
the treatment contrast that the policy-learning algorithm estimates.
The connection between stability and sampling risk
remains the same: it is applied to the population
value $V_\phi$ of the learned policy.

\section{Proofs for Section 2}\label{app:sec2_proofs}

\begin{proof}[Proof of \Cref{lem:taylor}]
We proceed in steps.
\begin{enumerate}
\item We begin with bounds that are uniform over algorithms.
Assumption~\ref{ass:main}(i) and $0\leq G\leq1$ place all feasible
welfare values in a common compact subinterval of $\mathcal J$.
Monotonicity of $\psi$ places their certainty equivalents in the
same interval. On this interval, $\psi'$ is bounded away from
zero, and $\psi''$ and $\psi'''$ are bounded. All
suprema and infima are over this interval.
We use $C_\psi$ for a sufficiently large finite constant
depending only on these bounds and the interval. It may change from line to line.

\item We expand expected utility around mean welfare.
For each realization of $Z_A$, Taylor's formula gives
\[
\begin{aligned}
\psi(Z_A)
={}&\psi\{\E(Z_A)\}
+\psi'\{\E(Z_A)\}\{Z_A-\E(Z_A)\}\\
&+\frac12\psi''\{\E(Z_A)\}\{Z_A-\E(Z_A)\}^2
+\text{remainder},
\end{aligned}
\]
where the absolute remainder is bounded by
$C_\psi|Z_A-\E(Z_A)|^3$.
Taking expectations eliminates the linear term because its
coefficient is nonrandom and $\E\{Z_A-\E(Z_A)\}=0$.
The squared deviation in the quadratic term becomes the variance.
Thus,
\[
\E\{\psi(Z_A)\}-\psi\{\E(Z_A)\}
=
\frac12\psi''\{\E(Z_A)\}\V(Z_A)
+\text{error}_1,
\]
where $\text{error}_1$ is the expected remainder.
The triangle inequality gives
$|\text{error}_1|\leq C_\psi\E|Z_A-\E(Z_A)|^3$.

\item Next, we bound the distance between the certainty
equivalent and mean welfare.
Since $Z_A$ and $\E(Z_A)$ lie in the common compact interval,
$|Z_A-\E(Z_A)|$ is uniformly bounded. Consequently,
\[
\E|Z_A-\E(Z_A)|^3
\leq C_\psi\V(Z_A).
\]
Together with the bound on $\psi''$, Step~2 therefore implies
\[
\left|\E\{\psi(Z_A)\}-\psi\{\E(Z_A)\}\right|
\leq C_\psi\V(Z_A).
\]
The positive lower bound on $\psi'$ converts this utility gap
into a welfare gap. By the mean value theorem and the definition
$\psi\{\CE_\psi(A)\}=\E\{\psi(Z_A)\}$,
\[
\left|\CE_\psi(A)-\E(Z_A)\right|
\leq
\frac{
\left|\E\{\psi(Z_A)\}-\psi\{\E(Z_A)\}\right|
}{\inf\psi'}
\leq C_\psi\V(Z_A).
\]
For concave $\psi$, this distance is the amount of mean welfare
the researcher would give up to eliminate sampling risk.

\item Finally, we obtain the certainty-equivalent expansion.
A first-order Taylor expansion of $\psi\{\CE_\psi(A)\}$
around $\E(Z_A)$ gives
\[
\psi\{\CE_\psi(A)\}-\psi\{\E(Z_A)\}
=
\psi'\{\E(Z_A)\}\{\CE_\psi(A)-\E(Z_A)\}
+\text{error}_2.
\]
The bound on $\psi''$ and Step~3 imply
\[
|\text{error}_2|
\leq C_\psi\{\CE_\psi(A)-\E(Z_A)\}^2
\leq C_\psi\V(Z_A)^2.
\]

The definition of the certainty equivalent gives
$\psi\{\CE_\psi(A)\}=\E\{\psi(Z_A)\}$.
Thus this expansion and the expansion in Step~2 describe
the same utility gap, and their right-hand sides must agree:
\[
\psi'\{\E(Z_A)\}\{\CE_\psi(A)-\E(Z_A)\}
+\text{error}_2
=
\frac12\psi''\{\E(Z_A)\}\V(Z_A)
+\text{error}_1.
\]
We now solve for $\CE_\psi(A)-\E(Z_A)$.
Subtracting $\text{error}_2$ from both sides gives
\[
\psi'\{\E(Z_A)\}\{\CE_\psi(A)-\E(Z_A)\}
=
\frac12\psi''\{\E(Z_A)\}\V(Z_A)
+\text{error}_1-\text{error}_2.
\]
Dividing by the positive coefficient $\psi'\{\E(Z_A)\}$
therefore yields
\[
\begin{aligned}
\CE_\psi(A)-\E(Z_A)
&=
\frac{\psi''\{\E(Z_A)\}}{2\psi'\{\E(Z_A)\}}\V(Z_A)
+
\frac{\text{error}_1-\text{error}_2}
     {\psi'\{\E(Z_A)\}}\\
&=
-\frac{\gamma}{2}\V(Z_A)+r_\psi(A),
\end{aligned}
\]
where we use $-\psi''/\psi'=\gamma$ and identify $r_\psi(A)$
with the fraction containing the two errors.
The positive lower bound on $\psi'$ and the triangle inequality
therefore give
\[
|r_\psi(A)|
\leq C_\psi
\left\{
\E|Z_A-\E(Z_A)|^3+\V(Z_A)^2
\right\}.
\]

Both terms vanish uniformly when the centered third moments
vanish uniformly. Indeed,
$\V(Z_A)^2
\leq\{\E|Z_A-\E(Z_A)|^3\}^{4/3}$.
Since all constants are independent of $A$,
$\sup_{A\in\mathcal A_n}\E|Z_A-\E(Z_A)|^3\to0$
implies
$\sup_{A\in\mathcal A_n}|r_\psi(A)|\to0$, as claimed.
\end{enumerate}
\end{proof}

\begin{proof}[Proof of \Cref{cor:optimizer}]
We proceed in steps. Throughout, we use
$\CE_\psi(A)=M_\rho(A)+r_\psi(A)$ and
$|r_\psi(A)|\leq\epsilon_n$ for every $A\in\mathcal A_n$.
\begin{enumerate}
\item We bound the loss in certainty equivalent from choosing
the mean-variance maximizer.
Substituting the expansion for each algorithm gives
\[
\begin{aligned}
\CE_\psi(A_n^{\CE})-\CE_\psi(A_n^{\mathrm{MV}})
={}&M_\rho(A_n^{\CE})-M_\rho(A_n^{\mathrm{MV}})
+r_\psi(A_n^{\CE})-r_\psi(A_n^{\mathrm{MV}}).
\end{aligned}
\]
The first difference on the right is nonpositive because
$A_n^{\mathrm{MV}}$ maximizes $M_\rho$.
The difference on the left is nonnegative because
$A_n^{\CE}$ maximizes $\CE_\psi$.
Consequently,
\[
0
\leq
\CE_\psi(A_n^{\CE})-\CE_\psi(A_n^{\mathrm{MV}})
\leq
r_\psi(A_n^{\CE})-r_\psi(A_n^{\mathrm{MV}})
\leq 2\epsilon_n,
\]
where the last inequality follows because each remainder
has absolute value at most $\epsilon_n$.

\item We bound the loss in mean-variance value from choosing
the certainty-equivalent maximizer.
Using $M_\rho(A)=\CE_\psi(A)-r_\psi(A)$ gives
\[
\begin{aligned}
M_\rho(A_n^{\mathrm{MV}})-M_\rho(A_n^{\CE})
={}&\CE_\psi(A_n^{\mathrm{MV}})-\CE_\psi(A_n^{\CE})
+r_\psi(A_n^{\CE})-r_\psi(A_n^{\mathrm{MV}}).
\end{aligned}
\]
The first difference on the right is nonpositive by
certainty-equivalent optimality.
The difference on the left is nonnegative by mean-variance
optimality. Applying the same remainder bound therefore gives
\[
0
\leq
M_\rho(A_n^{\mathrm{MV}})-M_\rho(A_n^{\CE})
\leq
r_\psi(A_n^{\CE})-r_\psi(A_n^{\mathrm{MV}})
\leq 2\epsilon_n.
\]

\item We show that a sufficiently large objective gap forces
the maximizers to coincide.
Suppose $A_n^{\mathrm{MV}}$ is the unique mean-variance maximizer
and every other candidate $A\in\mathcal A_n$ satisfies
\[
M_\rho(A_n^{\mathrm{MV}})-M_\rho(A)>2\epsilon_n.
\]
If $A_n^{\CE}$ were a different candidate, this inequality
would contradict the bound in Step~2.
Thus $A_n^{\CE}=A_n^{\mathrm{MV}}$.
Since the argument applies to every certainty-equivalent
maximizer, that maximizer is also unique.
\end{enumerate}
\end{proof}

\section{Proofs for Section 3}\label{app:sec3_proofs}

For $\rho>0$, holding risk-aware regret
$R(A)+\rho\V(Z_A)$ constant gives
$\dd R(A)+\rho\,\dd\V(Z_A)=0$.
Thus lines of constant risk-aware regret have slope
\[
\frac{\dd\V(Z_A)}{\dd R(A)}=-\frac1\rho.
\]
Where the efficient boundary expresses variance as a
differentiable function of expected regret, an interior
optimum satisfies
\[
\frac{\dd}{\dd R(A)}
\left\{R(A)+\rho\V(Z_A)\right\}
=
1+\rho\frac{\dd\V(Z_A)}{\dd R(A)}
=0.
\]
Therefore, at a smooth interior optimum, the efficient boundary
has the same slope as the line of constant risk-aware regret.
\section{Proofs for Section 4}\label{app:sec4_proofs}

\subsection{Welfare stability}

\begin{proof}[Proof of \Cref{lem:policy_to_welfare}]
We proceed in steps.
\begin{enumerate}
\item We express the welfare difference in terms of treatment
probabilities.
Fix $S,i,s'$. By $W(G)=
\E\{Y(0)\}+\E\{\tau(X)G(X)\}$, the common baseline
$\mathbb E\{Y(0)\}$ cancels, leaving
\[
W\{A(S)\}-W\{A(S^{i\leftarrow s'})\}
=
\mathbb E_X
\left(
\tau(X)
\left[\{A(S)\}(X)-\{A(S^{i\leftarrow s'})\}(X)\right]
\right).
\]

\item We bound this difference using the algorithmic
stability parameter $\Gamma_n(A)$.
By definition,
$|\{A(S)\}(x)-\{A(S^{i\leftarrow s'})\}(x)|\leq\Gamma_n(A)$
for every $x\in\mathcal X$.
Hence
\[
\begin{aligned}
\left|W\{A(S)\}-W\{A(S^{i\leftarrow s'})\}\right|
&\leq
\mathbb E_X
\left[
|\tau(X)|
\left|\{A(S)\}(X)-\{A(S^{i\leftarrow s'})\}(X)\right|
\right]\\
&\leq
\Gamma_n(A)\mathbb E_X|\tau(X)|\\
&=
\Gamma_n(A)\|\tau\|_{L_1(P_X)}.
\end{aligned}
\]
We can take $\Gamma_n(A)$ outside of the expectation because it
does not depend on the population covariate $X$.
The resulting bound holds for every $S,i,s'$, so taking their
supremum gives
$\beta_n(A)\leq\Gamma_n(A)\|\tau\|_{L_1(P_X)}$.

\item For a score-based rule, Lipschitz continuity converts
score stability into algorithmic stability.
Suppose $\{A(S)\}(x)=\delta\{\zeta(S,x)\}$, where $\delta$ is
$L_\delta$-Lipschitz and replacing one observation changes
the score by at most $s_n$, uniformly over $S,i,s',x$.
Then
\[
\begin{aligned}
\left|\{A(S)\}(x)-\{A(S^{i\leftarrow s'})\}(x)\right|
&=
\left|
\delta\{\zeta(S,x)\}
-\delta\{\zeta(S^{i\leftarrow s'},x)\}
\right|\\
&\leq
L_\delta
\left|\zeta(S,x)-\zeta(S^{i\leftarrow s'},x)\right|\\
&\leq L_\delta s_n.
\end{aligned}
\]
Taking the supremum over $S,i,s',x$ gives
$\Gamma_n(A)\leq L_\delta s_n$.
Substituting into the bound from Step~2 yields
$
\beta_n(A)
\leq L_\delta s_n\|\tau\|_{L_1(P_X)}.
$
\end{enumerate}
\end{proof}

\subsection{Main result}

\begin{lemma}[McDiarmid bounds for welfare]
\label{lem:mcdiarmid_welfare}
Let $Z_A=W\{A(S)\}$, where the observations in $S$ are
independent and replacing any one observation changes
$Z_A$ by at most $\beta_n(A)<\infty$.
Then, for every $u\in\mathbb R$,
\[
\E\left(
\exp\left[u\{Z_A-\E(Z_A)\}\right]
\right)
\leq
\exp\left\{
\frac{u^2n}{8}\beta_n(A)^2
\right\}.
\]
For $\beta_n(A)>0$ and every $t>0$,
\[
\Pr\{\E(Z_A)-Z_A\geq t\}
\leq
\exp\left\{
-\frac{2t^2}{n\beta_n(A)^2}
\right\},\quad 
\Pr\{|Z_A-\E(Z_A)|\geq t\}
\leq
2\exp\left\{
-\frac{2t^2}{n\beta_n(A)^2}
\right\}.
\]
Moreover, for a universal finite constant $C$,
$
\E|Z_A-\E(Z_A)|^3
\leq C\{n\beta_n(A)^2\}^{3/2}.
$
If $\beta_n(A)=0$, then $Z_A$ is constant.
We interpret both tail bounds as zero in this case.
\end{lemma}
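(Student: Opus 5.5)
The plan is to derive everything from the bounded-differences structure via the Doob martingale decomposition. Let $\mathcal F_k=\sigma(S_1,\ldots,S_k)$ and define $M_k=\E(Z_A\mid\mathcal F_k)$. Then $Z_A-\E(Z_A)=\sum_{k=1}^n D_k$ with martingale differences $D_k=M_k-M_{k-1}$.

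Because the observations are independent, $M_k$ can be written as an integral over $S_{k+1},\ldots,S_n$ of $W\{A(S)\}$ with the first $k$ coordinates fixed. Conditional on $\mathcal F_{k-1}$, the difference $D_k$ therefore lies in an interval $[L_k,U_k]$ whose endpoints are $\mathcal F_{k-1}$-measurable. The width $U_k-L_k$ is the supremum over two values $s,s'$ of the $k$th coordinate of an averaged difference. Definition~\ref{def:stability} bounds each such difference pointwise by $\beta_n(A)$, so $U_k-L_k\leq\beta_n(A)$.

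Measurability of $W\{A(\cdot)\}$ is implicit, since $A$ is a measurable map into $\cG$ and $W$ is an integral of $\tau G$. Integrability of $Z_A$ follows from $|W(G)-W(G')|\leq\lVert\tau\rVert_{L_1}$, so all conditional expectations are well defined.

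\textbf{MGF bound.} Apply Hoeffding's lemma conditionally: $\E(e^{uD_k}\mid\mathcal F_{k-1})\leq\exp\{u^2(U_k-L_k)^2/8\}\leq\exp\{u^2\beta_n(A)^2/8\}$. Iterating the tower property from $k=n$ down to $k=1$ gives $\E\exp[u\{Z_A-\E(Z_A)\}]\leq\exp\{u^2n\beta_n(A)^2/8\}$ for every real $u$.

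\textbf{Tail bounds.} Apply Chernoff's bound to $-(Z_A-\E Z_A)$: $\Pr\{\E(Z_A)-Z_A\geq t\}\leq\exp\{-ut+u^2n\beta_n^2/8\}$. Optimizing at $u=4t/(n\beta_n^2)$ gives $\exp\{-2t^2/(n\beta_n^2)\}$. The same argument applied to $+(Z_A-\E Z_A)$ handles the upper tail, and a union bound gives the two-sided statement with the factor $2$.

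\textbf{Third moment.} Integrate the two-sided tail: $\E|Z_A-\E Z_A|^3=\int_0^\infty 3t^2\Pr(|Z_A-\E Z_A|\geq t)\dd t\leq 6\int_0^\infty t^2e^{-2t^2/\sigma^2}\dd t$ with $\sigma^2=n\beta_n(A)^2$. The change of variables $t=\sigma v$ turns this into $C\sigma^3=C\{n\beta_n(A)^2\}^{3/2}$, where $C=6\int_0^\infty v^2e^{-2v^2}\dd v$.

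\textbf{Degenerate case.} If $\beta_n(A)=0$, each $D_k$ has zero conditional range, so $D_k=0$ almost surely. Hence $Z_A=\E Z_A$ almost surely, and the tail probabilities are zero for every $t>0$.

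The main obstacle is the conditional-range step. One has to make sure the interval for $D_k$ has width at most $\beta_n(A)$, not $2\beta_n(A)$; this is what gives the sharp constant $1/8$. The fix is to use independence to write $M_k$ as $g_k(S_1,\ldots,S_k)$ with $g_k(\cdot,s)=\int W\{A(\cdot,s,\cdot)\}$. Then set $L_k=\inf_s g_k-M_{k-1}$ and $U_k=\sup_s g_k-M_{k-1}$, where the infimum and supremum are over the $k$th coordinate. Their difference is bounded by $\sup_{s,s'}|g_k(\cdot,s)-g_k(\cdot,s')|\leq\beta_n(A)$. Inf and sup measurability can be avoided by applying Hoeffding's lemma to the range bound directly.
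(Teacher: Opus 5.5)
Your proposal is correct and follows the paper's proof: the same Chernoff optimization at $u=4t/\{n\beta_n(A)^2\}$, the union bound for the two-sided tail, and the tail-integral identity for the third moment with the substitution $t=\sqrt{n}\,\beta_n(A)v$. The only difference is that you derive the exponential bounded-differences bound yourself, via the Doob martingale and a conditional Hoeffding's lemma with the width-$\beta_n(A)$ conditional range, whereas the paper cites McDiarmid's inequality in its exponential form.
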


\begin{proof}
If $\beta_n(A)=0$, replacing observations one at a time
never changes welfare, so $Z_A$ is constant.
The exponential moment is then one, while the centered
third moment and both tail probabilities are zero.
For $\beta_n(A)>0$, we proceed in steps.
\begin{enumerate}
\item We apply the exponential bounded-differences inequality.
The observations are independent, and replacing each
observation changes welfare by at most $\beta_n(A)$.
The exponential form of McDiarmid's inequality
\citep{mcdiarmid1989} therefore gives, for every
$u\in\mathbb R$,
\[
\begin{aligned}
\E\left(
\exp\left[u\{Z_A-\E(Z_A)\}\right]
\right)
&\leq
\exp\left\{
\frac{u^2}{8}\sum_{i=1}^n\beta_n(A)^2
\right\}
=
\exp\left\{
\frac{u^2n}{8}\beta_n(A)^2
\right\}.
\end{aligned}
\]
The equality uses the same replacement bound for all
$n$ observations.

\item We turn the exponential-moment bound into tail bounds.
Fix $t>0$ and $u>0$.
Because the exponential is increasing,
$\E(Z_A)-Z_A\geq t$ exactly when
$\exp[u\{\E(Z_A)-Z_A\}]\geq e^{ut}$.
Markov's inequality bounds this probability by the
expectation of the nonnegative exponential divided
by its threshold:
\[
\begin{aligned}
\Pr\{\E(Z_A)-Z_A\geq t\}
&\leq
e^{-ut}\,
\E\left(
\exp\left[u\{\E(Z_A)-Z_A\}\right]
\right)\\
&=
e^{-ut}\,
\E\left(
\exp\left[-u\{Z_A-\E(Z_A)\}\right]
\right).
\end{aligned}
\]

Substituting Step~1 with parameter $-u$ gives
\[
\Pr\{\E(Z_A)-Z_A\geq t\}
\leq
\exp\left\{
-ut+\frac{u^2n}{8}\beta_n(A)^2
\right\}.
\]
The exponent is a quadratic function of $u$ with positive
quadratic coefficient. Its derivative is
$-t+un\beta_n(A)^2/4$, so its minimum occurs at
$u=4t/\{n\beta_n(A)^2\}$.
Substituting,
\[
\Pr\{\E(Z_A)-Z_A\geq t\}
\leq
\exp\left\{
-\frac{2t^2}{n\beta_n(A)^2}
\right\}.
\]

For the upper tail, apply Markov's inequality to
$\exp[u\{Z_A-\E(Z_A)\}]$ and use Step~1 with parameter $u$.
The resulting quadratic exponent is the same, so the
upper-tail probability has the same bound.
Since $|Z_A-\E(Z_A)|\geq t$ occurs when either tail event
occurs, adding their bounds gives
\[
\Pr\{|Z_A-\E(Z_A)|\geq t\}
\leq
2\exp\left\{
-\frac{2t^2}{n\beta_n(A)^2}
\right\}.
\]

\item We express the third moment as an integral
of tail probabilities.
For each realization, $|Z_A-\E(Z_A)|$ is a nonnegative
number. Integrating $3t^2$ up to this number gives
\[
\int_0^{|Z_A-\E(Z_A)|}3t^2\,\dd t
=
\left.t^3\right|_0^{|Z_A-\E(Z_A)|}
=
|Z_A-\E(Z_A)|^3.
\]
We can instead integrate over all $t\geq0$ by including
an indicator that restricts the integral to values below
the realized absolute deviation:
\[
|Z_A-\E(Z_A)|^3
=
\int_0^\infty
3t^2\mathbf 1\{t<|Z_A-\E(Z_A)|\}\,\dd t.
\]
The integrand is nonnegative, so Tonelli's theorem allows
us to interchange expectation and integration:
\[
\begin{aligned}
\E|Z_A-\E(Z_A)|^3
&=
\E\left[
\int_0^\infty
3t^2\mathbf 1\{t<|Z_A-\E(Z_A)|\}\,\dd t
\right]\\
&=
3\int_0^\infty
t^2\E\left[
\mathbf 1\{t<|Z_A-\E(Z_A)|\}
\right]\,\dd t\\
&=
3\int_0^\infty
t^2\Pr\{t<|Z_A-\E(Z_A)|\}\,\dd t.
\end{aligned}
\]

\item We integrate the tail bound to obtain the
third-moment bound.
Substituting the two-sided bound from Step~2 into
the identity from Step~3 gives
\[
\E|Z_A-\E(Z_A)|^3
\leq
6\int_0^\infty
t^2\exp\left\{
-\frac{2t^2}{n\beta_n(A)^2}
\right\}\,\dd t.
\]
Set $t=\sqrt{n}\beta_n(A)v$, so that
$\dd t=\sqrt{n}\beta_n(A)\,\dd v$.
The squared term contributes a factor $n\beta_n(A)^2$,
and the change in the integration variable contributes
another factor $\sqrt{n}\beta_n(A)$:
\[
\E|Z_A-\E(Z_A)|^3
\leq
6\{n\beta_n(A)^2\}^{3/2}
\int_0^\infty v^2e^{-2v^2}\,\dd v.
\]
The remaining integral is finite and independent of
$A$ and $n$. Absorbing it and the factor of six into
a universal constant $C$ proves
$\E|Z_A-\E(Z_A)|^3
\leq C\{n\beta_n(A)^2\}^{3/2}$.
\end{enumerate}
\end{proof}

\begin{lemma}[Variance bound on an interval \citep{popoviciu1935}]
\label{lem:interval_variance}
Let $Z$ be a real-valued random variable satisfying
$a\leq Z\leq b$ almost surely, where
$-\infty<a\leq b<\infty$. Then
$
\V(Z)\leq \frac{(b-a)^2}{4}.
$
\end{lemma}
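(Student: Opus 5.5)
The plan is to center the variance at the midpoint of the interval and use the fact that the mean minimizes mean-squared deviation. Set $c=(a+b)/2$. Since $a\leq Z\leq b$ almost surely, $Z$ is bounded and hence square integrable, so $\E(Z)$ and $\V(Z)$ are finite. The first step is the elementary decomposition
\[
\E\{(Z-c)^2\}
=
\V(Z)+\{\E(Z)-c\}^2,
\]
obtained by expanding $Z-c=\{Z-\E(Z)\}+\{\E(Z)-c\}$. The cross term has expectation zero because $\{\E(Z)-c\}$ is nonrandom and $\E\{Z-\E(Z)\}=0$. Dropping the nonnegative square then gives $\V(Z)\leq\E\{(Z-c)^2\}$.

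The second step bounds the right-hand side pointwise. On the almost-sure event $a\leq Z\leq b$, the distance from $Z$ to the midpoint satisfies $|Z-c|\leq(b-a)/2$. Hence $(Z-c)^2\leq(b-a)^2/4$ almost surely. Taking expectations gives $\E\{(Z-c)^2\}\leq(b-a)^2/4$. Chaining this with the first step yields $\V(Z)\leq(b-a)^2/4$.

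No step is a genuine obstacle. The only point needing care is that all moments are finite, which follows from $-\infty<a\leq b<\infty$, and the degenerate case $a=b$ is covered automatically because it forces $\V(Z)=0$. In the paper, this lemma will be applied with $Z=Z_A=W\{A(S)\}$. The natural interval there is $[\E\{Y(0)\}-\E\{\tau(X)^-\},\,\E\{Y(0)\}+\E\{\tau(X)^+\}]$, whose length is $\lVert\tau\rVert_{L_1(P_X)}$; this is the ingredient behind the variance bounds used together with \Cref{prop:cara_variance}.
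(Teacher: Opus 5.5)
Your proof is correct and complete. With $c=(a+b)/2$, the chain $\V(Z)=\E\{(Z-c)^2\}-\{\E(Z)-c\}^2\leq\E\{(Z-c)^2\}\leq(b-a)^2/4$ is the standard argument. The paper gives no proof of its own here; it simply cites \citet{popoviciu1935}, and your midpoint argument is what that citation stands for.

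Your closing remark about where the lemma is used is inaccurate, though it does not affect the proof. The paper does not apply the lemma to $Z_A$ on the full welfare range of length $\lVert\tau\rVert_{L_1(P_X)}$. \Cref{prop:cara_variance} does not rely on it either: that proof works with $\V(Z_A)$ directly, together with the one-sided bound $\E(Z_A)-Z_A\leq T$.

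The lemma is actually invoked in Step~2 of the proof of \Cref{thm:stability}, inside the Efron--Stein argument. There it is applied to the conditional law of $Z_A$ given $S_{-i}$, whose support lies in an interval of length at most $\beta_n(A)$; this interval may depend on $S_{-i}$. Your proof covers that use as well. By independence, for fixed $S_{-i}$ the welfare is a function of $S_i$ alone, taking values in such an interval. The unconditional statement therefore applies with $b-a\leq\beta_n(A)$ and gives $\V(Z_A\mid S_{-i})\leq\beta_n(A)^2/4$.
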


\begin{proof}[Proof of \Cref{thm:stability}]
We proceed in steps, writing $Z_A=W\{A(S)\}$.
The integrability assumption bounds population welfare uniformly
over $S$, so $Z_A$ has finite variance.
\begin{enumerate}
\item We relate welfare variance to the effect of replacing
one observation.
Let $S_{-i}$ denote the sample with observation $i$ omitted,
and let $S_i'$ be an independent copy of $S_i$, independent
of the full sample.
Conditional on $S_{-i}$, the welfare values $Z_A$ and
$W\{A(S^{i\leftarrow S_i'})\}$ are independent and have the
same distribution. Expanding their squared difference gives
\[
\begin{aligned}
\mathbb E\left(
\left[Z_A-W\{A(S^{i\leftarrow S_i'})\}\right]^2
\mid S_{-i}
\right)
&=
2\mathbb E(Z_A^2\mid S_{-i})
-2\{\mathbb E(Z_A\mid S_{-i})\}^2\\
&=2\V(Z_A\mid S_{-i}).
\end{aligned}
\]
Here conditional independence makes the expected cross-product
equal to the squared conditional mean.
The Efron-Stein inequality gives
\[
\begin{aligned}
\V(Z_A)
&\leq
\frac12\sum_{i=1}^n
\mathbb E\left(
\left[Z_A-W\{A(S^{i\leftarrow S_i'})\}\right]^2
\right)\\
&=
\frac12\sum_{i=1}^n
\mathbb E\left\{
\mathbb E\left(
\left[Z_A-W\{A(S^{i\leftarrow S_i'})\}\right]^2
\mid S_{-i}
\right)
\right\}\\
&=
\frac12\sum_{i=1}^n
\mathbb E\{2\V(Z_A\mid S_{-i})\}\\
&=
\sum_{i=1}^n
\mathbb E\{\V(Z_A\mid S_{-i})\}.
\end{aligned}
\]
The first equality uses the law of iterated expectations. 
The second equality substitutes the identity established above.
The last equality cancels factors.

\item We apply the interval variance bound to conditional welfare.
Fix $i$ and hold $S_{-i}$ constant.
Any two welfare values obtained by varying $S_i$ differ by
at most $\beta_n(A)$, by the definition of welfare stability.
Therefore, the conditional distribution of $Z_A$ lies in an
interval of length at most $\beta_n(A)$.
The interval may depend on $S_{-i}$, but the bound on its
length is uniform.

Applying \Cref{lem:interval_variance} to the conditional
distribution gives
$
\V(Z_A\mid S_{-i})
\leq \frac{\beta_n(A)^2}{4}
$ almost surely.
Taking expectations and substituting into Step~1 yields
\[
\begin{aligned}
\V(Z_A)
&\leq
\sum_{i=1}^n
\mathbb E\{\V(Z_A\mid S_{-i})\}
\leq
\sum_{i=1}^n\frac{\beta_n(A)^2}{4}
=
\frac{n}{4}\beta_n(A)^2.
\end{aligned}
\]
Finally, the definition of risk-aware regret implies
\[
R_{\mathrm{RA}}(A)
=
R(A)+\rho\V(Z_A)
\leq
R(A)+\rho\,\frac{n}{4}\beta_n(A)^2.
\]

\item We control the probability that welfare falls below
its expectation.
The lower-tail bound in \Cref{lem:mcdiarmid_welfare} gives, for $t>0$,
\[
\Pr\{\E(Z_A)-Z_A\geq t\}
\leq
\exp\left\{
-\frac{2t^2}{n\beta_n(A)^2}
\right\}.
\]
Here $\E(Z_A)$ averages over possible experimental samples,
while $Z_A$ is the welfare produced by the observed sample.

\item We translate the welfare shortfall into excess regret.
Realized regret is $W^\star-Z_A$, while expected regret is
$R(A)=W^\star-\mathbb E(Z_A)$.
Subtracting gives
\[
\begin{aligned}
W^\star-Z_A-R(A)
&=
W^\star-Z_A-\{W^\star-\mathbb E(Z_A)\}
=
\mathbb E(Z_A)-Z_A.
\end{aligned}
\]
Using $Z_A=W\{A(S)\}$ and applying Step~3,
\[
\begin{aligned}
\Pr\left[
W^\star-W\{A(S)\}\geq R(A)+t
\right]
&\leq
\exp\left\{
-\frac{2t^2}{n\beta_n(A)^2}
\right\}.
\end{aligned}
\]
\end{enumerate}
\end{proof}

\subsection{Utility interpretation}

\begin{lemma}[CARA decomposition]
\label{lem:cara_decomposition}
Let $\gamma>0$, and suppose
$\E|Z_A|<\infty$ and $\E\{\exp(-\gamma Z_A)\}<\infty$.
Then
\[
\CE_\gamma(A)
=
\E(Z_A)
-
\frac1\gamma
\log\E\left(
\exp\left[-\gamma\{Z_A-\E(Z_A)\}\right]
\right).
\]
Moreover, $\CE_\gamma(A)\leq\E(Z_A)$.
Consequently,
$W^\star-\CE_\gamma(A)\geq R(A)$.
\end{lemma}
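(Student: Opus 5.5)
The identity follows from pulling the deterministic mean out of the exponential. First, the integrability hypotheses make both sides well defined. Since $\E|Z_A|<\infty$, the mean $\mu:=\E(Z_A)$ is a finite real number. Since $\E\{\exp(-\gamma Z_A)\}<\infty$ and the exponential is strictly positive, $\log\E\{\exp(-\gamma Z_A)\}$ is a finite real number. Hence $\CE_\gamma(A)=-(1/\gamma)\log\E\{\exp(-\gamma Z_A)\}$ is finite.

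For the decomposition, write $\exp(-\gamma Z_A)=\exp(-\gamma\mu)\exp[-\gamma\{Z_A-\mu\}]$. Because $\exp(-\gamma\mu)$ is a deterministic positive constant, it factors out of the expectation:
\[
\E\{\exp(-\gamma Z_A)\}=e^{-\gamma\mu}\,\E\left(\exp\left[-\gamma\{Z_A-\mu\}\right]\right).
\]
Taking logarithms gives $\log\E\{\exp(-\gamma Z_A)\}=-\gamma\mu+\log\E(\exp[-\gamma\{Z_A-\mu\}])$. Multiplying by $-1/\gamma$ then yields the displayed formula for $\CE_\gamma(A)$.

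For the inequality $\CE_\gamma(A)\leq\E(Z_A)$, apply Jensen's inequality to the convex function $u\mapsto e^{u}$:
\[
\E\left(\exp\left[-\gamma\{Z_A-\mu\}\right]\right)\geq\exp\left[-\gamma\,\E\{Z_A-\mu\}\right]=e^0=1.
\]
Here we use that $Z_A-\mu$ is integrable with mean zero. The logarithm of this expectation is therefore nonnegative. Since $\gamma>0$, the correction term $-(1/\gamma)\log\E(\cdot)$ is nonpositive, which gives $\CE_\gamma(A)\leq\mu$.

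Finally, subtract both sides from $W^\star$. This gives $W^\star-\CE_\gamma(A)\geq W^\star-\E(Z_A)=R(A)$ by the definition of expected regret.

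No step is a genuine obstacle. The only point needing care is verifying finiteness, so that the logarithm and the mean are legitimate real numbers before we split and apply Jensen. The hypotheses are stated precisely to guarantee this.
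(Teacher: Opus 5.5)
Your proposal is correct and follows essentially the same route as the paper. The paper also factors $\exp\{-\gamma\E(Z_A)\}$ out of the expectation, takes logarithms, applies Jensen's inequality to the exponential to show the remaining logarithm is nonnegative, and subtracts from $W^\star$. Your explicit finiteness check is a small addition that the paper leaves implicit.
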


\begin{proof}
We proceed in steps.
\begin{enumerate}
\item We separate mean welfare from sampling variation.
Writing $Z_A=\E(Z_A)+\{Z_A-\E(Z_A)\}$,
\[
\E\{\exp(-\gamma Z_A)\}
=
\exp\{-\gamma\E(Z_A)\}
\E\left(
\exp\left[-\gamma\{Z_A-\E(Z_A)\}\right]
\right).
\]
The first factor can be taken outside the expectation
because $\E(Z_A)$ is nonrandom.
Taking the logarithm turns the product into a sum.
Substituting into the definition of the certainty
equivalent therefore gives
\[
\begin{aligned}
\CE_\gamma(A)
&=
-\frac1\gamma
\left\{
-\gamma\E(Z_A)
+
\log\E\left(
\exp\left[-\gamma\{Z_A-\E(Z_A)\}\right]
\right)
\right\}\\
&=
\E(Z_A)
-
\frac1\gamma
\log\E\left(
\exp\left[-\gamma\{Z_A-\E(Z_A)\}\right]
\right).
\end{aligned}
\]

\item We show that the certainty equivalent cannot
exceed mean welfare.
The exponential function is convex, so Jensen's
inequality gives
\[
\E\left(
\exp\left[-\gamma\{Z_A-\E(Z_A)\}\right]
\right)
\geq
\exp\left[-\gamma\E\{Z_A-\E(Z_A)\}\right]
=1.
\]
The equality uses $\E\{Z_A-\E(Z_A)\}=0$.
The logarithm in Step~1 is therefore nonnegative.
Since $\gamma>0$, subtracting this term gives
$\CE_\gamma(A)\leq\E(Z_A)$.
Subtracting both sides from $W^\star$ then yields
$W^\star-\CE_\gamma(A)
\geq W^\star-\E(Z_A)=R(A)$.
\end{enumerate}
\end{proof}

\begin{proof}[Proof of \Cref{prop:cara_stability}]
We proceed in steps, with $\gamma>0$.
\begin{enumerate}
\item We use welfare stability to bound the exponential
moment and its logarithm.
The exponential-moment bound in
\Cref{lem:mcdiarmid_welfare} holds for every real $u$.
Taking $u=-\gamma$ gives
\[
\E\left(
\exp\left[-\gamma\{Z_A-\E(Z_A)\}\right]
\right)
\leq
\exp\left\{
\frac{\gamma^2n}{8}\beta_n(A)^2
\right\}.
\]
Taking logarithms preserves the inequality.
Dividing by $\gamma>0$ then gives
\[
\frac1\gamma
\log\E\left(
\exp\left[-\gamma\{Z_A-\E(Z_A)\}\right]
\right)
\leq
\frac{\gamma n}{8}\beta_n(A)^2.
\]

\item We translate this bound into guarantees for the
certainty equivalent and regret.
The decomposition in \Cref{lem:cara_decomposition}
subtracts the logarithmic term from mean welfare.
Its upper bound from Step~1 therefore gives a lower
bound on the certainty equivalent:
\[
\begin{aligned}
\CE_\gamma(A)
&=
\E(Z_A)
-
\frac1\gamma
\log\E\left(
\exp\left[-\gamma\{Z_A-\E(Z_A)\}\right]
\right)
\geq
\E(Z_A)-\frac{\gamma n}{8}\beta_n(A)^2.
\end{aligned}
\]
Subtracting this inequality from $W^\star$ reverses
its direction:
\[
\begin{aligned}
W^\star-\CE_\gamma(A)
&\leq
W^\star-\E(Z_A)+\frac{\gamma n}{8}\beta_n(A)^2
=
R(A)+\frac{\gamma n}{8}\beta_n(A)^2.
\end{aligned}
\]
\end{enumerate}
\end{proof}

\begin{proof}[Proof of \Cref{prop:cara_variance}]
We proceed in steps, with $\gamma>0$.
\begin{enumerate}
\item We bound how far welfare can fall below its expectation.
Fix the experimental sample $S$.
At covariate value $x$, the oracle delivers a gain of
$\max\{\tau(x),0\}$ relative to no treatment.
The learned policy delivers the conditional expected gain
$\tau(x)\{A(S)\}(x)$.
Since $\{A(S)\}(x)\in[0,1]$, their difference satisfies
\[
0\leq
\max\{\tau(x),0\}-\tau(x)\{A(S)\}(x)
\leq |\tau(x)|.
\]
Taking expectations over the population covariate $X$,
while holding $S$ fixed, gives
\[
\begin{aligned}
0\leq W^\star-Z_A
&=
\E_X\left[
\max\{\tau(X),0\}-\tau(X)\{A(S)\}(X)
\right]\\
&\leq
\E_X|\tau(X)|=T.
\end{aligned}
\]
This bound holds for every experimental sample.

If $T=0$, then $Z_A=W^\star$ almost surely, so expected
regret, welfare variance, and the certainty-equivalent
shortfall all vanish. Henceforth, suppose $T>0$.
Since $\E(Z_A)\leq W^\star$, we also have
$\E(Z_A)-Z_A\leq W^\star-Z_A\leq T$.

\item We bound the exponential by a quadratic expression.
For every real $x\leq T$, a first-order Taylor expansion
of $e^{\gamma x}$ around zero, with its exact integral
remainder, gives
\[
e^{\gamma x}
=
1+\gamma x+
\gamma^2x^2\int_0^1(1-u)e^{\gamma ux}\,\dd u.
\]
Because $\gamma>0$ and $u\in[0,1]$, the inequality $x\leq T$
implies $e^{\gamma ux}\leq e^{\gamma uT}$.
Thus
\[
\begin{aligned}
e^{\gamma x}
&\leq
1+\gamma x+
\gamma^2x^2\int_0^1(1-u)e^{\gamma uT}\,\dd u\\
&=
1+\gamma x+
\frac{e^{\gamma T}-1-\gamma T}{T^2}x^2.
\end{aligned}
\]
The last equality uses integration by parts.
The inequality also holds for negative $x$, because
$\gamma^2x^2\geq0$.

\item We use welfare variance to bound the exponential
moment and its logarithm.
Step~1 allows us to substitute $x=\E(Z_A)-Z_A$ into
the bound from Step~2.
Taking expectations eliminates the linear term because
$\E\{\E(Z_A)-Z_A\}=0$.
The expected squared term is $\V(Z_A)$.
Consequently,
\[
\E\left(
\exp\left[\gamma\{\E(Z_A)-Z_A\}\right]
\right)
\leq
1+
\frac{e^{\gamma T}-1-\gamma T}{T^2}\V(Z_A).
\]

Taking logarithms preserves this inequality, as does
division by $\gamma>0$.
We then apply $\log(1+y)\leq y$ to obtain
\[
\begin{aligned}
\frac1\gamma
\log\E\left(
\exp\left[\gamma\{\E(Z_A)-Z_A\}\right]
\right)
&\leq
\frac1\gamma
\log\left\{
1+\frac{e^{\gamma T}-1-\gamma T}{T^2}\V(Z_A)
\right\}\\
&\leq
\frac{e^{\gamma T}-1-\gamma T}{\gamma T^2}\V(Z_A).
\end{aligned}
\]
Here the quantity added to one is nonnegative because
$e^{\gamma T}\geq1+\gamma T$ and $\V(Z_A)\geq0$.

\item We translate this bound into a
certainty-equivalent shortfall.
By \Cref{lem:cara_decomposition}, using
$-\gamma\{Z_A-\E(Z_A)\}
=\gamma\{\E(Z_A)-Z_A\}$ and
$R(A)=W^\star-\E(Z_A)$,
\[
\begin{aligned}
W^\star-\CE_\gamma(A)
&=
R(A)+\frac1\gamma
\log\E\left(
\exp\left[\gamma\{\E(Z_A)-Z_A\}\right]
\right)\\
&\leq
R(A)+
\frac{e^{\gamma T}-1-\gamma T}{\gamma T^2}\V(Z_A).
\end{aligned}
\]
The inequality substitutes the bound from Step~3.
The lower bound
$W^\star-\CE_\gamma(A)\geq R(A)$
is also established by the lemma.
\end{enumerate}
\end{proof}

\begin{proof}[Proof of \Cref{cor:stable_taylor}]
We proceed in steps.
\begin{enumerate}
\item We bound the centered third moments uniformly
over algorithms.
Since $\beta_n(A)\leq\bar\beta_n$ for every $A\in\cA_n$,
the third-moment bound in \Cref{lem:mcdiarmid_welfare} gives
\[
\sup_{A\in\cA_n}
\E|Z_A-\E(Z_A)|^3
\leq C(n\bar\beta_n^2)^{3/2}.
\]
The constant $C$ is independent of $A$ and $n$.

\item We substitute the moment bounds into the
certainty-equivalent remainder.
By \Cref{thm:stability},
$\V(Z_A)\leq n\bar\beta_n^2/4$ uniformly over $A\in\cA_n$.
Squaring gives
$\V(Z_A)^2\leq(n\bar\beta_n^2)^2/16$.
Together with Step~1, the remainder bound in
\Cref{lem:taylor} therefore implies
\[
\sup_{A\in\cA_n}|r_\psi(A)|
\leq
C_\psi
\left\{
(n\bar\beta_n^2)^{3/2}
+
(n\bar\beta_n^2)^2
\right\},
\]
after absorbing the numerical constants into $C_\psi$.
Both terms vanish when $n\bar\beta_n^2\to0$,
which gives the claimed uniform approximation.
\end{enumerate}
\end{proof}
\section{Proofs for Section 5}\label{app:sec5_proofs}

\subsection{Policy-vote bagging}

\begin{proof}[Proof of \Cref{prop:bag_concave}]
We proceed in steps.
\begin{enumerate}
\item Computational and complete bagging
have the same conditional mean welfare.
Fix the training sample $S$.
Welfare is affine in the policy, so averaging policies
also averages their welfare:
\[
\begin{aligned}
W\{A_{m,B}(S,\mathbf I)\}
&=
\frac1B\sum_{b=1}^B W(\widehat G_{I_b}),\quad 
W\{A_m(S)\}
=
\E_I\{W(\widehat G_I)\mid S\}.
\end{aligned}
\]
Conditional on $S$, each $I_b$ is uniform over the
size-$m$ bags. Its expected welfare therefore equals
that of a single uniform bag $I$.
Taking conditional expectations gives
\[
\begin{aligned}
\E_{\mathbf I}\left[
W\{A_{m,B}(S,\mathbf I)\}\mid S
\right]
&=
\frac1B\sum_{b=1}^B
\E_{I_b}\{W(\widehat G_{I_b})\mid S\}
&=
\E_I\{W(\widehat G_I)\mid S\}
=
W\{A_m(S)\}.
\end{aligned}
\]

\item Complete bagging yields at least
as much conditional expected utility as computational
bagging.
Step~1 expresses complete-bagging welfare as the
conditional mean of computational-bagging welfare.
Since $\psi$ is concave, conditional Jensen's
inequality gives
\[
\begin{aligned}
\psi[W\{A_m(S)\}]
&=
\psi\left(
\E_{\mathbf I}\left[
W\{A_{m,B}(S,\mathbf I)\}\mid S
\right]
\right)
\geq
\E_{\mathbf I}\left(
\psi\left[W\{A_{m,B}(S,\mathbf I)\}\right]
\mid S
\right).
\end{aligned}
\]
The averaging is over the random collection of
bags, with the training sample fixed.

\item Computational bagging yields at least
as much conditional expected utility as a single random bag.
Fix both $S$ and the realized collection $\mathbf I$.
Concavity applied to the average of the $B$ welfare
values gives
\[
\begin{aligned}
\psi\left[W\{A_{m,B}(S,\mathbf I)\}\right]
&=
\psi\left\{
\frac1B\sum_{b=1}^B W(\widehat G_{I_b})
\right\}
\geq
\frac1B\sum_{b=1}^B
\psi\{W(\widehat G_{I_b})\}.
\end{aligned}
\]
We now average over the random collection of bags.
Each $I_b$ has the same conditional distribution as
a single uniform bag $I$, so
\[
\begin{aligned}
\E_{\mathbf I}\left(
\psi\left[W\{A_{m,B}(S,\mathbf I)\}\right]
\mid S
\right)
&\geq
\frac1B\sum_{b=1}^B
\E_{I_b}\left[
\psi\{W(\widehat G_{I_b})\}\mid S
\right]
=
\E_I\left[
\psi\{W(\widehat G_I)\}\mid S
\right].
\end{aligned}
\]

\item We extend the comparisons to randomness in the
training sample.
Integrability of the potential outcomes places all
feasible welfare values in a fixed compact interval.
The assumptions on $\psi$ make its values bounded
on that interval, so all welfare and utility expectations
above are finite.
Taking expectations over $S$ and applying the law of
iterated expectations therefore preserves the mean
equality from Step~1 and both utility inequalities
from Steps~2 and~3.
\end{enumerate}
\end{proof}

\subsection{Policy-vote bagging is stable}

\begin{proof}[Proof of \Cref{lem:bag_stability}]
We proceed in steps.
\begin{enumerate}
\item We bound the change in treatment probabilities
by counting the affected bags.
Fix $S$, $i$, $s'$, and $x\in\cX$.
If a bag $I$ omits $i$, replacing $S_i$ leaves $S_I$
unchanged and therefore leaves its fitted policy unchanged.
If $I$ contains $i$, its original and replacement
binary policies can differ by at most one at $x$.

The complete policy averages over all $\binom{n}{m}$
size-$m$ bags.
Exactly $\binom{n-1}{m-1}$ of these contain $i$:
after including $i$, we choose the remaining $m-1$
indices from the other $n-1$ observations.
The triangle inequality therefore gives
\[
\begin{aligned}
\left|
\{A_m(S)\}(x)
-
\{A_m(S^{i\leftarrow s'})\}(x)
\right|
&\leq
\frac{1}{\binom{n}{m}}
\sum_{\substack{I\subseteq\{1,\ldots,n\}\\ |I|=m}}
\mathbf 1\{i\in I\}
=
\frac{\binom{n-1}{m-1}}{\binom{n}{m}}
=
\frac{m}{n}.
\end{aligned}
\]
Taking the supremum over $S,i,s',x$ gives
$\Gamma_n(A_m)\leq m/n$.

\item We convert the algorithmic stability bound into a welfare stability bound.
By \Cref{lem:policy_to_welfare},
\[
\beta_n(A_m)
\leq
\Gamma_n(A_m)\|\tau\|_{L_1(P_X)}
\leq
\frac{m}{n}\|\tau\|_{L_1(P_X)}.
\]

\item We construct a case that attains the bound.
Let $X=0$ almost surely, $Y(1)=1/2$, and $Y(0)=-1/2$,
with treatment randomized with probability one half.
Then $\tau(0)=1$ and $\|\tau\|_{L_1(P_X)}=1$.
Let $s^+$ denote the observed sample value for a treated
unit and $s^-$ the value for an untreated unit.
Each occurs with probability one half.

Let the base rule treat everyone exactly when its bag
contains $s^+$.
Choose a training sample containing one occurrence
of $s^+$, at index $i$, and $s^-$ at every other index.
Replace the observation at $i$ by $s^-$.
In the original sample, a bag treats everyone exactly
when it contains $i$.
After replacement, every bag treats no one.
Thus the complete treatment probabilities are $m/n$
and zero, respectively.

In this population, $W(G)=-1/2+G(0)$.
The welfare difference therefore equals the difference
in treatment probabilities:
\[
\left|
W\{A_m(S)\}
-
W\{A_m(S^{i\leftarrow s^-})\}
\right|
=
\frac{m}{n}
=
\frac{m}{n}\|\tau\|_{L_1(P_X)}.
\]
We attain the upper bound from Step~2, proving
sharpness.
\end{enumerate}
\end{proof}

\begin{proof}[Proof of \Cref{lem:finite_stability}]
For a realized bag collection $\mathbf I=(I_1,\ldots,I_B)$, let
$
K_i(\mathbf I)=\sum_{b=1}^B\1(i\in I_b).
$ 
\begin{enumerate}
\item We bound the change in treatment probabilities
for a fixed collection of bags.
Fix $\mathbf I$, $S$, $i$, and $s'$.
Replacing observation $i$ leaves every bag that omits
$i$ unchanged.
Each bag containing $i$ can change its binary policy
by at most one at any covariate value.
Since computational bagging averages these policies,  for every $x\in\cX$,
\[
\begin{aligned}
\left|
\{A_{m,B}(S,\mathbf I)\}(x)
-
\{A_{m,B}(S^{i\leftarrow s'},\mathbf I)\}(x)
\right|
&\leq
\frac1B\sum_{b=1}^B\mathbf 1\{i\in I_b\}
=
\frac{K_i(\mathbf I)}{B}.
\end{aligned}
\]
The same bag collection is used before and after
the replacement.

\item We convert the algorithmic stability bound into a welfare stability bound.
The bound in Step~1 holds uniformly over $S,s',x$.
Maximizing over $i$ therefore bounds the algorithmic stability parameter 
by $\max_i K_i(\mathbf I)/B$.
Applying \Cref{lem:policy_to_welfare} with $\mathbf I$
held fixed gives
\[
\begin{aligned}
\sup_{S,i,s'}
\left|
W\{A_{m,B}(S,\mathbf I)\}
-
W\{A_{m,B}(S^{i\leftarrow s'},\mathbf I)\}
\right|
\leq
\|\tau\|_{L_1(P_X)}
\max_{1\leq i\leq n}\frac{K_i(\mathbf I)}{B}.
\end{aligned}
\]

\item We bound the inclusion frequency of each observation.
The $B$ bags are sampled independently and uniformly
from the size-$m$ bags.
For each fixed $i$, the indicators
$\mathbf 1\{i\in I_b\}$ are therefore independent across
$b$, each with probability $m/n$ of equaling one.
Thus $K_i(\mathbf I)\sim\operatorname{Binomial}(B,m/n)$.
Hoeffding's inequality bounds the probability that
an average exceeds its mean: for $t>0$,
\[
\Pp_{\mathbf I}\left\{
\frac{K_i(\mathbf I)}{B}>
\frac{m}{n}+t
\right\}
\leq
\exp(-2Bt^2).
\]

\item We make the inclusion bound hold for all
observations at once.
The largest inclusion frequency exceeds $m/n+t$
exactly when at least one observation's frequency
exceeds that threshold.
The union bound therefore gives
\[
\begin{aligned}
\Pp_{\mathbf I}\left\{
\max_{1\leq i\leq n}\frac{K_i(\mathbf I)}{B}
>
\frac{m}{n}+t
\right\}
&\leq
\sum_{i=1}^n
\Pp_{\mathbf I}\left\{
\frac{K_i(\mathbf I)}{B}>
\frac{m}{n}+t
\right\}
\leq
n\exp(-2Bt^2).
\end{aligned}
\]
For $\eta\in(0,1)$, choose
$t=\sqrt{\log(n/\eta)/(2B)}$.
Then $n\exp(-2Bt^2)=n\exp\{-\log(n/\eta)\}=\eta$.
Consequently, with probability at least $1-\eta$,
\[
\max_{1\leq i\leq n}\frac{K_i(\mathbf I)}{B}
\leq
\frac{m}{n}
+
\sqrt{\frac{\log(n/\eta)}{2B}}.
\]
Substituting this bound into Step~2 gives the
claimed high-probability statement.
\end{enumerate}
\end{proof}

\subsection{Upper bounds for policy-vote bagging}

\begin{lemma}[$U$-statistic variance contraction; Corollary~3.2(i) of \citet{shao1999}]
\label[lemma]{lem:ustat}
Let $S_1,\ldots,S_n$ be i.i.d., let $1\leq m\leq n$,
and let $H_m$ be a symmetric kernel satisfying
$\E\{H_m(S_1,\ldots,S_m)^2\}<\infty$.
Let $\cI_{n,m}$ denote the collection of size-$m$
subsets of $\{1,\ldots,n\}$, and define
$
U_{n,m}
=
\binom{n}{m}^{-1}
\sum_{I\in\cI_{n,m}}H_m(S_I).
$
Then
$
\V(U_{n,m})
\leq
\frac{m}{n}\V\{H_m(S_1,\ldots,S_m)\}.
$
\end{lemma}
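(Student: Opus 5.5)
The plan is to prove the contraction through the Hoeffding (ANOVA) decomposition of the symmetric kernel. Then compare the variance of $U_{n,m}$ with that of $H_m$ component by component.

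\emph{Decomposition of the kernel.} Since $\E\{H_m(S_1,\ldots,S_m)^2\}<\infty$, write $\theta=\E H_m$. For $1\leq k\leq m$, define the degenerate projections
\[
h^{(k)}(s_1,\ldots,s_k)=\sum_{J\subseteq\{1,\ldots,k\}}(-1)^{k-|J|}\,\E\{H_m(S_1,\ldots,S_m)\mid S_j=s_j,\ j\in J\},
\]
where the $J=\emptyset$ term is $(-1)^k\theta$. Inclusion--exclusion gives
\[
H_m(S_I)=\theta+\sum_{k=1}^m\sum_{L\subseteq I,|L|=k}h^{(k)}(S_L).
\]
Symmetry of $H_m$ makes each $h^{(k)}$ symmetric and well defined. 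Each $h^{(k)}$ is canonical: its conditional mean given any $k-1$ of its arguments is zero. Hence the summands $h^{(k)}(S_L)$ are mutually uncorrelated across distinct index sets $L$. Writing $\delta_k=\V\{h^{(k)}(S_1,\ldots,S_k)\}\geq0$, this yields
\[
\V\{H_m\}=\sum_{k=1}^m\binom{m}{k}\delta_k.
\]

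\emph{Variance of the $U$-statistic.} Average the decomposition over $I\in\cI_{n,m}$. A fixed $L$ with $|L|=k$ lies in $\binom{n-k}{m-k}$ of the bags. Therefore
\[
U_{n,m}-\theta=\sum_{k=1}^m\binom{n-k}{m-k}\binom{n}{m}^{-1}\sum_{|L|=k}h^{(k)}(S_L).
\]
The coefficient simplifies to $\binom{m}{k}/\binom{n}{k}$. By orthogonality, and since there are $\binom{n}{k}$ sets $L$ of size $k$,
\[
\V(U_{n,m})=\sum_{k=1}^m\frac{\binom{m}{k}^2}{\binom{n}{k}}\delta_k.
\]

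\emph{Termwise comparison.} It then suffices to show $\binom{m}{k}/\binom{n}{k}\leq m/n$ for every $1\leq k\leq m$. Writing the ratio as a product,
\[
\frac{\binom{m}{k}}{\binom{n}{k}}=\prod_{j=0}^{k-1}\frac{m-j}{n-j}.
\]
Every factor is at most one because $m\leq n$, and the $j=0$ factor equals $m/n$. Multiplying by $\binom{m}{k}\delta_k\geq0$ and summing over $k$ gives $\V(U_{n,m})\leq (m/n)\sum_k\binom{m}{k}\delta_k=(m/n)\V\{H_m\}$.

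The only delicate step is the first one. It requires verifying that the projections are canonical and pairwise uncorrelated for distinct index sets, using only square integrability and independence. For this I would follow the standard argument: for $L\neq L'$, pick an index in $L\setminus L'$, condition on all other variables, and use degeneracy to conclude that the covariance vanishes. Alternatively, since the statement is exactly Corollary~3.2(i) of \citet{shao1999}, one may cite it directly. The plan above makes the $m/n$ factor transparent as the $k=1$ coefficient, which dominates all higher-order coefficients.
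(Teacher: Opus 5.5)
Your argument is correct and complete. The canonical decomposition, the identity $\binom{n-k}{m-k}/\binom{n}{m}=\binom{m}{k}/\binom{n}{k}$, the diagonal formula $\V(U_{n,m})=\sum_{k}\binom{m}{k}^2\binom{n}{k}^{-1}\delta_k$, and the termwise bound all check out. In the orthogonality step, take the index from whichever of $L\setminus L'$ or $L'\setminus L$ is nonempty.

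The paper gives no argument of its own here; it simply invokes Corollary~3.2(i) of \citet{shao1999}. The classical route to that corollary works differently. It counts overlapping pairs of index sets to get $\V(U_{n,m})=\binom{n}{m}^{-1}\sum_{k}\binom{m}{k}\binom{n-m}{m-k}\zeta_k$, where $\zeta_k$ is the variance of the non-canonical projection onto $k$ arguments. It then combines the monotonicity $\zeta_k/k\le\zeta_m/m$ with the hypergeometric mean $\binom{n}{m}^{-1}\sum_k k\binom{m}{k}\binom{n-m}{m-k}=m^2/n$. That monotonicity is itself usually derived from $\zeta_k=\sum_{j\le k}\binom{k}{j}\delta_j$. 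Your diagonalization is therefore the more direct path, and it exposes $m/n$ as the first-order coefficient.

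It also gives the other half of the cited corollary at no extra cost. Keeping only the $k=1$ term yields $\V(U_{n,m})\ge (m^2/n)\delta_1$. This first-order projection, with coefficient $m/n$, is exactly what Step~7 of the proof of \Cref{thm:sharpness} uses informally. Your decomposition would thus let the paper derive both the variance contraction in \Cref{thm:bag_upper} and the matching lower bound from one self-contained argument, at the cost of the space the citation saves.
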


\begin{proof}[Proof of \Cref{thm:bag_upper}]
We proceed in steps.
Recall that
$\Delta_I(X)=\widehat\tau_I(X)-\tau(X)$ and
$Q_{m,2}=\E_{S,I,X}\{\Delta_I(X)^2\}$.
Expectations over $X$ hold the training sample
and the selected bag fixed.
\begin{enumerate}
\item We relate each bag's welfare loss to its
treatment-effect estimation error.
The oracle and the bag-specific policy are
$G^\star(x)=\1\{\tau(x)\geq0\}$ and
$\widehat G_I(x)=\1\{\widehat\tau_I(x)\geq0\}$.
Their welfare difference is
\[
W^\star-W(\widehat G_I)
=
\E_X\left[
|\tau(X)|
\1\{\widehat G_I(X)\neq G^\star(X)\}
\right].
\]
Thus a treatment mistake costs $|\tau(X)|$ units
of conditional expected welfare.

When the two policies disagree, the estimation error
must be large enough to reach or cross the treatment
threshold at zero.
Consequently, $|\tau(X)|\leq|\Delta_I(X)|$ on the
disagreement event.
Points where $\tau(X)=0$ contribute no welfare loss.
We use this observation to bound both expected regret
and welfare variance.

\item We bound expected regret by separating small
treatment effects from large estimation errors.
Because welfare is affine in the policy, complete
averaging gives
\[
R(A_m)
=
\E_{S,I}\{W^\star-W(\widehat G_I)\}.
\]
Fix $t>0$.
When $0<|\tau(X)|\leq t$, the welfare loss is at most $t$.
When $|\tau(X)|>t$ and the policies disagree,
Step~1 implies
$|\Delta_I(X)|\geq|\tau(X)|>t$,
so the loss is at most $|\Delta_I(X)|$.
Therefore,
\[
\begin{aligned}
R(A_m)
&\leq
t\Pp_X\{0<|\tau(X)|\leq t\}
+
\E_{S,I,X}\left[
|\Delta_I(X)|\1\{|\Delta_I(X)|>t\}
\right]
\leq
C_\tau t^{\kappa+1}
+
\frac{Q_{m,2}}{t}.
\end{aligned}
\]
The last inequality uses the margin condition for
the first term and
$
|\Delta_I(X)|\1\{|\Delta_I(X)|>t\}
\leq
\frac{\Delta_I(X)^2}{t}
$
for the second.

If $Q_{m,2}>0$, choose
$t=Q_{m,2}^{1/(\kappa+2)}$.
Both terms then have the same power of $Q_{m,2}$,
giving
\[
R(A_m)
\leq
(C_\tau+1)
Q_{m,2}^{(\kappa+1)/(\kappa+2)}.
\]
If $Q_{m,2}=0$, letting $t\downarrow0$ in the preceding
bound gives $R(A_m)=0$.

The moment extension in \Cref{rmk:p_moment} follows
from the same calculation.
Using
$|\Delta_I(X)|\1\{|\Delta_I(X)|>t\}
\leq t^{1-p}|\Delta_I(X)|^p$
replaces the second term by
$t^{1-p}\E_{S,I,X}|\Delta_I(X)|^p$.
Balancing the two terms gives the stated extension.

\item We bound the variance of a single bag's welfare
through its squared welfare loss.
For each fixed sample and bag, Jensen's inequality
gives
\[
\begin{aligned}
\{W^\star-W(\widehat G_I)\}^2
&=
\left(
\E_X\left[
|\tau(X)|
\1\{\widehat G_I(X)\neq G^\star(X)\}
\right]
\right)^2\\
&\leq
\E_X\left[
\tau(X)^2
\1\{\widehat G_I(X)\neq G^\star(X)\}
\right]\\
&\leq
\E_X\{\Delta_I(X)^2\}.
\end{aligned}
\]
The second inequality uses the estimation-error bound from Step~1:
on the disagreement event,
$\tau(X)^2\leq\Delta_I(X)^2$.
Taking expectations over $S$ and $I$ therefore gives
\[
\E_{S,I}\left[
\{W^\star-W(\widehat G_I)\}^2
\right]
\leq
\E_{S,I,X}\{\Delta_I(X)^2\}
=
Q_{m,2}.
\]

This expected squared loss is at least
the welfare variance.
Adding and subtracting,
\[
\begin{aligned}
W^\star-W(\widehat G_I)
={}&
\left[W^\star-\E_{S,I}\{W(\widehat G_I)\}\right]
-
\left[W(\widehat G_I)-\E_{S,I}\{W(\widehat G_I)\}\right].
\end{aligned}
\]
Expanding the square and taking expectations yields
\[
\begin{aligned}
&\E_{S,I}\left[
\{W^\star-W(\widehat G_I)\}^2
\right]=
\left[W^\star-\E_{S,I}\{W(\widehat G_I)\}\right]^2\\
&\quad+
\E_{S,I}\left(
\left[
W(\widehat G_I)-\E_{S,I}\{W(\widehat G_I)\}
\right]^2
\right)\\
&\quad-
2\left[W^\star-\E_{S,I}\{W(\widehat G_I)\}\right]
\E_{S,I}\left[
W(\widehat G_I)-\E_{S,I}\{W(\widehat G_I)\}
\right].
\end{aligned}
\]
The first term is nonnegative. The second is $\V\{W(\widehat G_I)\}$.
The third is zero.
Consequently,
$
\V\{W(\widehat G_I)\}
\leq
\E_{S,I}\left[
\{W^\star-W(\widehat G_I)\}^2
\right]
\leq Q_{m,2}.
$

\item We apply variance contraction to complete bagging.
The base learner is symmetric in its bag observations,
and welfare is affine in the policy.
Thus
\[
W\{A_m(S)\}
=
\binom{n}{m}^{-1}
\sum_{I\in\cI_{n,m}}W(\widehat G_I)
\]
is a complete $U$-statistic with symmetric kernel
given by the welfare of a bag-specific policy.
Step~3 establishes that this kernel is square-integrable.

A uniformly selected size-$m$ bag has the same
distribution as $m$ independent training observations.
Its variance is therefore the kernel variance appearing
in \Cref{lem:ustat}.
Applying that lemma and then Step~3 gives
\[
\V[W\{A_m(S)\}]
\leq
\frac{m}{n}\V\{W(\widehat G_I)\}
\leq
\frac{m}{n}Q_{m,2}.
\]
Finally, combining this bound with Step~2 and the
definition of risk-aware regret yields
\[
\begin{aligned}
R_{\mathrm{RA}}(A_m)
&=
R(A_m)+\rho\V[W\{A_m(S)\}]
\leq
(C_\tau+1)
Q_{m,2}^{(\kappa+1)/(\kappa+2)}
+
\rho\frac{m}{n}Q_{m,2}.
\end{aligned}
\]
\end{enumerate}
\end{proof}

\begin{proof}[Proof of \Cref{cor:finite_bagging}]
The conditional-mean equality in \Cref{prop:bag_concave}
implies $R(A_{m,B})=R(A_m)$ after taking expectations
over $S$. It remains to bound the variance.
\begin{enumerate}
\item We separate sampling variance from the
variance due to finite bag draws.
Conditional on $S$, finite-bag welfare averages $B$
independent bag-specific welfare values. Hence,
\[
\V_{\mathbf I}\left[
W\{A_{m,B}(S,\mathbf I)\}\mid S
\right]
=
\frac1B\V_I\{W(\widehat G_I)\mid S\}.
\]
By \Cref{prop:bag_concave}, its conditional mean is
$W\{A_m(S)\}$. The law of total variance gives
\[
\begin{aligned}
\V_{S,\mathbf I}[W\{A_{m,B}(S,\mathbf I)\}]
={}&
\V_S[W\{A_m(S)\}]
+\frac1B\E_S[\V_I\{W(\widehat G_I)\mid S\}].
\end{aligned}
\]

\item We express the additional variance using the
variance of a single bag.
A second application of the law of total variance gives
\[
\V_{S,I}\{W(\widehat G_I)\}
=
\V_S[W\{A_m(S)\}]
+
\E_S[\V_I\{W(\widehat G_I)\mid S\}],
\]
because single-bag welfare also has conditional mean
$W\{A_m(S)\}$.
Substituting $\E_S[\V_I\{W(\widehat G_I)\mid S\}]$ into Step~1 gives
\[
\begin{aligned}
\V_{S,\mathbf I}[W\{A_{m,B}(S,\mathbf I)\}]
={}&
\left(1-\frac1B\right)\V_S[W\{A_m(S)\}]
+\frac1B\V_{S,I}\{W(\widehat G_I)\}.
\end{aligned}
\]

\item We apply the bounds established for complete
bagging and a single bag.
By \Cref{thm:bag_upper} and its proof,
$
\V_S[W\{A_m(S)\}]
\leq \frac mn Q_{m,2}$ and $
\V_{S,I}\{W(\widehat G_I)\}
\leq Q_{m,2}.
$
Since $B\geq1$, both coefficients in Step~2 are
nonnegative. Substitution therefore yields
\[
\begin{aligned}
\V_{S,\mathbf I}[W\{A_{m,B}(S,\mathbf I)\}]
&\leq
\left\{
\left(1-\frac1B\right)\frac mn+\frac1B
\right\}Q_{m,2}
=
\left\{
\frac mn+\frac{1-m/n}{B}
\right\}Q_{m,2}.
\end{aligned}
\]
Using $R(A_{m,B})=R(A_m)$, we conclude that
\[
R_{\mathrm{RA}}(A_{m,B})
\leq
R(A_m)
+
\rho\left\{
\frac mn+\frac{1-m/n}{B}
\right\}Q_{m,2}.
\]
Applying the expected-regret bound in
\Cref{thm:bag_upper} completes the proof.
\end{enumerate}
\end{proof}

\begin{proof}[Proof of \Cref{cor:cara_bagging}]
Fix $\gamma>0$. If $T=0$, all welfare losses and variances
vanish by \Cref{prop:cara_variance}. Henceforth, suppose
$T>0$. We proceed in steps.
\begin{enumerate}
\item We apply the CARA variance bound to complete bagging.
By \Cref{prop:cara_variance},
\[
R(A_m)
\leq
W^\star-\CE_\gamma(A_m)
\leq
R(A_m)
+
\frac{e^{\gamma T}-1-\gamma T}{\gamma T^2}
\V_S[W\{A_m(S)\}].
\]
The coefficient of the variance is nonnegative because
$e^{\gamma T}\geq1+\gamma T$.
We may therefore substitute the variance bound from
\Cref{thm:bag_upper} to obtain
\[
R(A_m)
\leq
W^\star-\CE_\gamma(A_m)
\leq
R(A_m)
+
\frac{e^{\gamma T}-1-\gamma T}{\gamma T^2}
\frac mn Q_{m,2}.
\]

\item We apply the same argument to computational bagging.
Here the certainty equivalent averages over both the
experimental sample $S$ and the random bag collection
$\mathbf I$.
For every realization of these two sources of randomness,
the implemented policy still takes values in $[0,1]$.
Thus \Cref{prop:cara_variance} applies to their joint law
and gives
\[
\begin{aligned}
R(A_{m,B})
&\leq W^\star-\CE_\gamma(A_{m,B})
\leq R(A_{m,B})
+
\frac{e^{\gamma T}-1-\gamma T}{\gamma T^2}
\V_{S,\mathbf I}[W\{A_{m,B}(S,\mathbf I)\}].
\end{aligned}
\]
By \Cref{cor:finite_bagging},
$R(A_{m,B})=R(A_m)$ and 
$\V_{S,\mathbf I}[W\{A_{m,B}(S,\mathbf I)\}]\leq \{m/n+(1-m/n)/B\}Q_{m,2}$.
Substituting both conclusions yields
\[
\begin{aligned}
R(A_m)
&\leq W^\star-\CE_\gamma(A_{m,B})
\leq R(A_m)
+
\frac{e^{\gamma T}-1-\gamma T}{\gamma T^2}
\left\{
\frac mn+\frac{1-m/n}{B}
\right\}Q_{m,2}.
\end{aligned}
\]
The expected-regret bound in \Cref{thm:bag_upper}
then gives the stated rates for both procedures.
\end{enumerate}
\end{proof}
\section{Details and proofs for Section 6}\label{app:sec6_proofs}

We give the conditions, constructions, and proofs supporting Section~\ref{sec:rates}. We begin with high-level sharpness. Then we turn to optimal bag size, with details for the H\"older and CARA examples. Finally, we provide the local minimax experiment and details for the Gaussian-RKHS comparison.

\subsection{Sharpness over a high-level class}\label{app:sharpness_proof}

The sharpness construction makes both terms of the upper bound necessary over the high-level class. One covariate stratum determines expected regret, while another generates welfare variation across samples.

\begin{figure}[h]
    \centering
    \includegraphics[width=\textwidth]{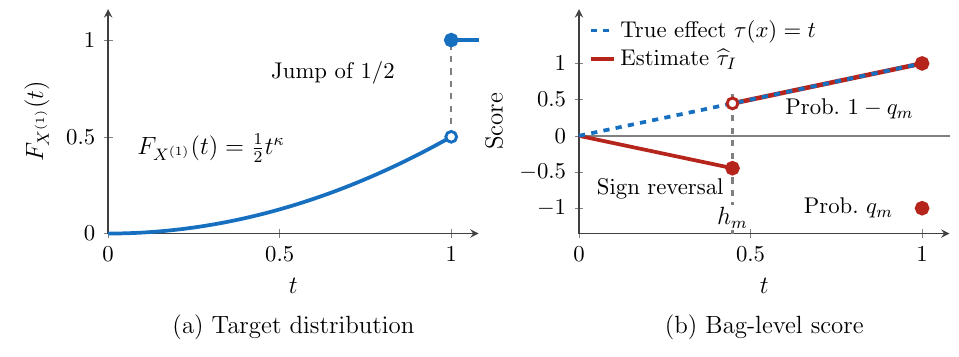}
\caption{Sharpness construction}\label{fig:sharpness-construction}

\caption*{\footnotesize \textit{Notes:}
Panel (a) shows the treatment-effect distribution: half the
population has effects in $(0,1)$, and half has effect one.
In panel (b), the estimate reverses the sign for
$0<t\leq h_m$, creating systematic treatment mistakes and
expected regret. Across samples, a uniformly selected bag's
estimate at $t=1$ is $-1$ with probability $q_m$ and $+1$
otherwise. These errors make welfare vary even after policy
votes are averaged across bags.
Here $\kappa=2$, $\alpha=1$, $m=25$,
$h_m=m^{-\alpha/(\kappa+2)}\approx0.447$, and
$q_m=\frac12m^{-\alpha}=0.02$.}
\end{figure}

\begin{proof}[Proof of \Cref{thm:sharpness}]
Fix $\kappa>0$ and $\alpha>0$.
We construct one population and a sequence of symmetric
base learners. Treatment mistakes among individuals with
small treatment effects create expected regret.
Sample-dependent mistakes among individuals with treatment
effect one create welfare variation.
We proceed in steps.
\begin{enumerate}

\item We construct a population satisfying the margin condition.
Let $X\in[0,1]^3$ have independent coordinates.
Its first coordinate has density $\kappa t^{\kappa-1}/2$
on $(0,1)$ and equals one with probability one half.
Its other two coordinates are uniform on $(0,1)$.
Writing $x=(t,u,v)$, set
$Y(1)=X^{(1)}/2$ and $Y(0)=-X^{(1)}/2$.
Thus $\tau(x)=t$, and both potential outcomes are bounded
in absolute value by one half.
In the training experiment, independently assign treatment
with probability one half.

The margin condition holds with constant one.
For $0<t<1$,
$
\Pp\{0<|\tau(X)|\leq t\}
=
\frac12 t^\kappa
\leq t^\kappa.
$
For $t\geq1$, the probability is one and is again at most
$t^\kappa$.
Since $\tau(X)>0$ almost surely, the oracle treats everyone.

\item We define a learner with two distinct sources of error.
For each bag size $m$, let
\[
h_m=m^{-\alpha/(\kappa+2)},
\qquad
q_m=\frac12h_m^{\kappa+2}
=\frac12m^{-\alpha}.
\]
For this bag size, define the indicator
$B_j=\1\{X_j^{(3)}\leq q_m\}$ for each observation $j$.
For a bag $I$, let $J_I$ identify the observation with
the smallest second coordinate.
The second coordinate therefore determines which observation
is selected. Its third coordinate determines whether the
learner makes a treatment mistake at $t=1$.

Define
\[
\widehat\tau_I(t,u,v)
=
\begin{cases}
-t, & t<1 \text{ and } t\leq h_m,\\
 t, & t<1 \text{ and } t>h_m,\\
1-2B_{J_I}, & t=1,
\end{cases}
\qquad
\widehat G_I(x)=\1\{\widehat\tau_I(x)\geq0\}.
\]
For $0<t<1$, every bag makes the same treatment mistakes
when $t\leq h_m$.
At $t=1$, the learner treats correctly if $B_{J_I}=0$:
its estimated treatment effect is then $+1$.
If $B_{J_I}=1$, the estimate is $-1$, so the learner
incorrectly withholds treatment.
Once the sample is fixed, these decisions are determined.

The learner is symmetric because selection depends on
observed coordinates rather than observation labels.
Ties have probability zero and can be resolved using a fixed
ordering of the observed tuples; identical tuples have the
same indicator.
Figure~\ref{fig:sharpness-construction} illustrates
the construction.

\item We verify the required mean-squared-error rate.
For $0<t<1$, the estimation error is $-2t$ when
$t\leq h_m$ and zero otherwise.
At $t=1$, the error is $-2B_{J_I}$.

The selected indicator equals one with probability $q_m$.
To see why, condition on all second coordinates in the bag.
These coordinates determine $J_I$.
The third coordinates remain independent and uniform,
so the selected observation's third coordinate is at most
$q_m$ with probability $q_m$.
Consequently, the expected squared error at $t=1$ is $4q_m$.

Averaging the squared error over the training bag and
an independent population observation gives
\[
\begin{aligned}
Q_{m,2}
&=
\frac12\int_0^{h_m}4t^2\kappa t^{\kappa-1}\,\dd t
+\frac12\cdot4q_m
=
\frac{2\kappa}{\kappa+2}h_m^{\kappa+2}
+2q_m
=
\left(1+\frac{2\kappa}{\kappa+2}\right)m^{-\alpha}.
\end{aligned}
\]
Thus $Q_{m,2}\asymp m^{-\alpha}$, with constants
independent of $n$ and $m$.

\item We obtain the regret lower bound from individuals
with small treatment effects.
Every bag withholds treatment when $0<t<1$ and $t\leq h_m$.
Averaging the bags' decisions also withholds
treatment from these individuals, even though the oracle treats.
The welfare loss at covariate value $t$ is exactly $t$.
Ignoring the nonnegative loss at $t=1$ gives
\[
\begin{aligned}
R(A_m)
&\geq
\frac12\int_0^{h_m}t\kappa t^{\kappa-1}\,\dd t
=
\frac{\kappa}{2(\kappa+1)}h_m^{\kappa+1}
=
\frac{\kappa}{2(\kappa+1)}
m^{-\alpha(\kappa+1)/(\kappa+2)}.
\end{aligned}
\]

\item We isolate the part of welfare that varies across samples.
The loss among individuals with $0<t<1$ is the same
for every bag and every training sample.
At $t=1$, the bag-specific treatment decision is $1-B_{J_I}$.
This group has population share one half and treatment
effect one, so its contribution to welfare loss is $B_{J_I}/2$.

Averaging the bag-specific treatment probabilities also
averages their welfare values, so
\[
W\{A_m(S)\}
=
W^\star
-
\frac{\kappa}{2(\kappa+1)}h_m^{\kappa+1}
-
\frac12\binom{n}{m}^{-1}
\sum_{I\in\cI_{n,m}}B_{J_I}.
\]
The first two terms are deterministic.
The complete average of $B_{J_I}$ is the fraction of bags
that incorrectly withhold treatment at $t=1$.
Welfare variance across training samples is therefore
one quarter of the variance of this fraction.

\item We calculate how one observation affects the probability
of a treatment mistake at $t=1$.
Fix a size-$m$ bag containing observation one.
Conditional on $X_1^{(2)}=u$ and $B_1=b$, observation one
is selected exactly when all other second coordinates
exceed $u$.
This event has probability $(1-u)^{m-1}$.

If observation one is selected, $B_{J_I}=b$.
Otherwise, another observation is selected using the
second coordinates. Its third coordinate remains independent
of that selection, so its indicator has expectation $q_m$.
Thus,
\[
\begin{aligned}
\E(B_{J_I}\mid X_1^{(2)}=u,B_1=b)
&=
b(1-u)^{m-1}
+q_m\{1-(1-u)^{m-1}\}\\
&=
q_m+(b-q_m)(1-u)^{m-1}.
\end{aligned}
\]
The decision at $t=1$ depends on $S_1$ only through
$B_1$ and $X_1^{(2)}$. Therefore,
\[
\E(B_{J_I}\mid S_1)-q_m
=
(B_1-q_m)(1-X_1^{(2)})^{m-1}.
\]
This is the change in the expected indicator from learning
the value of observation one.

The displayed expression has mean zero.
Since $B_1$ and $X_1^{(2)}$ are independent and
$\V(B_1)=q_m(1-q_m)$, its variance is
\[
\begin{aligned}
\V\{\E(B_{J_I}\mid S_1)\}
&=
\E\{(B_1-q_m)^2\}
\E\{(1-X_1^{(2)})^{2m-2}\}\\
&=
q_m(1-q_m)
\int_0^1(1-u)^{2m-2}\,\dd u\\
&=
\frac{q_m(1-q_m)}{2m-1}.
\end{aligned}
\]

\item We use these single-observation contributions to
lower-bound welfare variance.
The Hoeffding decomposition separates the centered complete
average of $B_{J_I}$ into contributions from individual
observations and additional terms involving several
observations.
The single-observation part is
\[
\frac mn\sum_{i=1}^n
(B_i-q_m)(1-X_i^{(2)})^{m-1}.
\]
The factor $m/n$ arises because each observation appears
in that fraction of all bags.
The $n$ summands are independent and have the variance
calculated in Step~6.
The variance of the displayed term is therefore
\[
\left(\frac mn\right)^2
n\,\frac{q_m(1-q_m)}{2m-1}
=
\frac{m^2}{n}\frac{q_m(1-q_m)}{2m-1}.
\]
The remaining Hoeffding terms have zero covariance with
this term. Adding them cannot reduce variance.

Step~5 shows that welfare variance is $1/4$ of
the variance of the complete average, so
\[
\V[W\{A_m(S)\}]
\geq
\frac{m^2}{4n}
\frac{q_m(1-q_m)}{2m-1}.
\]
Since $q_m\leq1/2$ and $2m-1\leq2m$,
\[
\begin{aligned}
\V[W\{A_m(S)\}]
&\geq
\frac{m^2}{4n}\frac{q_m/2}{2m}
=
\frac{mq_m}{16n}
=
\frac{m^{1-\alpha}}{32n}.
\end{aligned}
\]
Combining this with Step~4 gives, for every $\rho\geq0$,
\[
R_{\mathrm{RA}}(A_m)
\geq
\frac{\kappa}{2(\kappa+1)}
m^{-\alpha(\kappa+1)/(\kappa+2)}
+
\frac{\rho}{32n}m^{1-\alpha}.
\]
We have verified \eqref{eq:sharp_lower} for every pair of
integers $n\geq m\geq1$, with constants independent of
$n$, $m$, and $\rho$.
The construction has margin constant one and
mean-squared-error constant $1+2\kappa/(\kappa+2)$,
so it satisfies the uniform class restrictions whenever
their fixed bounds admit these values.

\end{enumerate}
\end{proof}

\subsection{Optimal bag choice}\label{app:bag_choice_proof}

\subsubsection{\Cref{cor:bag_choice}}

Matching upper and lower bounds reduce the bag-size comparison to minimizing the stated envelope. We specify the common class restrictions and the conditions for an interior choice.

\paragraph{Class for the bag-size comparison.}\label{app:bag_class}
The class holds both the statistical conditions and the learning guarantees fixed across bag sizes. Fix $\kappa>0$ and $\alpha>0$, and let $\mathcal H_{\alpha,\kappa}$ consist of pairs of a data-generating process and a sequence of base learners satisfying Assumption~\ref{ass:main}. For fixed constants $C_\tau,C_Q<\infty$, every pair has margin constant at most $C_\tau$ and satisfies $Q_{m,2}\leq C_Qm^{-\alpha}$ for every integer $m\geq1$. The class contains the construction of \Cref{thm:sharpness}. Write $R_{\RA,H}(A_m)$ for the risk-aware regret of complete policy-vote bagging under the population and base learners specified by $H$. All comparison constants may depend on $\alpha,\kappa,C_\tau,C_Q$, but are independent of $n,m,\rho$.

\paragraph{Interior bag sizes.}
The interior regime requires sampling-risk aversion to lie between two sample-size-dependent thresholds. For $0<\alpha<1$ and $\rho>0$, the range is $n^{\alpha/(\kappa+2)}\lesssim\rho\lesssim n$, up to constant factors. With $m_0$ as defined in \Cref{cor:bag_choice}, if $m_0\to\infty$ and $m_0/n\to0$, the condition $m\asymp m_0$ is necessary and sufficient for constant-factor minimax-rate optimality. The following proof establishes these additional details along with the corollary.

\begin{proof}[Proof of \Cref{cor:bag_choice}]
We proceed in steps.
\begin{enumerate}
\item We reduce the worst-case comparison to minimizing
the rate envelope.
The upper bound in \Cref{thm:bag_upper} holds uniformly
over the class because its members share the constants
$C_\tau$ and $C_Q$.
The construction in \Cref{thm:sharpness} belongs to the
class and supplies the matching lower bound.
Thus, uniformly over integers $n\geq m\geq1$ and
$\rho\geq0$,
\[
\sup_{H\in\mathcal H_{\alpha,\kappa}}
R_{\RA,H}(A_m)
\asymp
m^{-\frac{\alpha(\kappa+1)}{\kappa+2}}
+
\frac{\rho}{n}m^{1-\alpha}.
\]
The comparison constants depend only on
$\alpha,\kappa,C_\tau,C_Q$.
The same constants apply at every bag size, so taking
infima over integer $m$ preserves the comparison.

\item We identify the cases in which larger bags improve
both terms.
The expected-regret term
$m^{-\alpha(\kappa+1)/(\kappa+2)}$ decreases with $m$.
If $\alpha\geq1$, the variance term
$(\rho/n)m^{1-\alpha}$ is also nonincreasing.
The envelope is therefore minimized at $m=n$,
proving part~(i).
The same conclusion holds for every $\alpha>0$
when $\rho=0$, since the variance term vanishes.

\item We find the optimal order when the two terms move
in opposite directions.
Suppose $0<\alpha<1$ and $\rho>0$.
For the calculation, allow $m$ to vary over positive
real numbers.
Differentiating and factoring out a positive power of $m$
gives
\[
\begin{aligned}
&\frac{\dd}{\dd m}
\left\{
m^{-\frac{\alpha(\kappa+1)}{\kappa+2}}
+\frac{\rho}{n}m^{1-\alpha}
\right\}
=
m^{-\frac{\alpha(\kappa+1)}{\kappa+2}-1}
\left\{
-\frac{\alpha(\kappa+1)}{\kappa+2}
+\frac{\rho}{n}(1-\alpha)
m^{\frac{\kappa+2-\alpha}{\kappa+2}}
\right\}.
\end{aligned}
\]
The expression in braces is strictly increasing in $m$
and crosses zero once.
Solving for that zero gives the unique stationary point
\[
m=
\left\{
\frac{\alpha(\kappa+1)}
{(1-\alpha)(\kappa+2)}
\frac{n}{\rho}
\right\}^{\frac{\kappa+2}{\kappa+2-\alpha}}.
\]
The derivative is negative before this point and positive
after it, so this point minimizes the unconstrained envelope.

The constraint $1\leq m\leq n$ determines whether the
minimum occurs at an endpoint.
If the stationary point is below one, the constrained
minimum is at one; if it exceeds $n$, the minimum is at $n$.
Otherwise, it remains the continuous minimizer.
Comparing the neighboring admissible integers gives
an exact integer minimizer.

Only a fixed coefficient separates the stationary point
from
$
m_0=
\left(\frac n\rho\right)^{
\frac{\kappa+2}{\kappa+2-\alpha}}.
$
Restricting both choices to $[1,n]$ preserves their
comparability, and integer rounding changes their order
by at most a fixed factor.
Since both terms of the envelope are powers of $m$,
these changes also affect the objective by at most a
fixed factor.
Hence any integer choice satisfying
$m\asymp\min\{n,\max(1,m_0)\}$
is minimax-rate optimal.

\item We characterize when an interior bag size is
necessary for the optimal rate.
The endpoint comparisons for $m_0$ are
$
m_0\leq n
$ when $
\rho\geq n^{\alpha/(\kappa+2)}$ and $
m_0\geq1
$ when $
\rho\leq n.
$
Because the stationary point differs from $m_0$ by a
fixed factor, these thresholds give the interior range
$n^{\alpha/(\kappa+2)}\lesssim\rho\lesssim n$,
up to fixed constant factors.

Now consider an interior regime with
$m_0\to\infty$ and $m_0/n\to0$.
The definition of $m_0$ implies
$\rho/n=m_0^{-(\kappa+2-\alpha)/(\kappa+2)}$.
Substitution rewrites the envelope as
\[
\begin{aligned}
&m^{-\frac{\alpha(\kappa+1)}{\kappa+2}}
+\frac{\rho}{n}m^{1-\alpha}
=
m_0^{-\frac{\alpha(\kappa+1)}{\kappa+2}}
\left\{
\left(\frac m{m_0}\right)^{
-\frac{\alpha(\kappa+1)}{\kappa+2}}
+
\left(\frac m{m_0}\right)^{1-\alpha}
\right\}.
\end{aligned}
\]
The factor in braces is at least one:
its first term is at least one when $m\leq m_0$,
and its second term is at least one when $m\geq m_0$.
Choosing an integer near $m_0$ keeps both terms bounded.
The minimized envelope therefore has order
$m_0^{-\alpha(\kappa+1)/(\kappa+2)}$.

Both terms remain bounded precisely when
$m/m_0$ stays bounded above and away from zero.
The first term rules out arbitrarily small ratios,
and the second rules out arbitrarily large ratios.
Thus $m\asymp m_0$ is necessary and sufficient for
constant-factor optimality in this interior regime.
The uniform comparison in Step~1 transfers this conclusion
to the worst-case criterion, proving part~(ii).

\item We establish the optimal rate for fixed sampling-risk
aversion.
Fix $\rho\geq0$.
At $m=n$, 
\[
n^{-\frac{\alpha(\kappa+1)}{\kappa+2}}
+\rho n^{-\alpha}
=
n^{-\frac{\alpha(\kappa+1)}{\kappa+2}}
\left\{1+\rho n^{-\frac{\alpha}{\kappa+2}}\right\}
\asymp
n^{-\frac{\alpha(\kappa+1)}{\kappa+2}}.
\]
No admissible bag size can attain a smaller order,
because $m\leq n$ implies
$
m^{-\frac{\alpha(\kappa+1)}{\kappa+2}}
\geq
n^{-\frac{\alpha(\kappa+1)}{\kappa+2}},
$
and the variance term is nonnegative.
Every sequence $m\asymp n$ attains the same order in
both terms as the full sample.
Step~1 therefore proves part~(iii).

For fixed $\rho>0$ and $0<\alpha<1$, the stationary
point in Step~3 grows faster than $n$, because its
exponent on $n$ is
$(\kappa+2)/(\kappa+2-\alpha)>1$.
The envelope is consequently minimized at $m=n$
for all sufficiently large $n$.
This exact minimization statement concerns the envelope;
the rate comparison does not identify the exact minimizer
of the finite-sample worst-case criterion.
\end{enumerate}
\end{proof}

\subsubsection{\Cref{cor:holder}}\label{app:holder_conditions}

The H\"older example connects smoothness and dimension
to the choice of bag size. We obtain the CATE-error
rate, then substitute it into the welfare bound and
the bag-size calculation.

\paragraph{Model and learner.}
The regression conditions specify the learning rate.
Suppose Assumption~\ref{ass:main} holds with $\kappa>0$.
Let $s>0$, let $\cX=[0,1]^d$ with $d\geq1$, and suppose
$P_X$ has a density bounded above and away from zero
on $\cX$.
Treatment is randomized with known propensity bounded
away from zero and one.
The inverse-propensity pseudo-outcome has uniformly
conditionally sub-Gaussian noise, and $\tau$ belongs
to a fixed-radius $s$-H\"older ball.
Let $\widehat\tau_I$ be a rate-optimal local-polynomial
CATE estimator trained on a size-$m$ bag, with bandwidth
of order $m^{-1/(2s+d)}$.

\begin{proof}[Proof of \Cref{cor:holder}]
We proceed in steps.
\begin{enumerate}
\item We obtain the integrated CATE-error rate.
The local-polynomial regression bound balances
approximation error against sampling variation.
At the specified bandwidth, the squared approximation
error has order $m^{-2s/(2s+d)}$, while the variance has
order
$
\frac{1}{m\{m^{-1/(2s+d)}\}^d}
=
m^{-\frac{2s}{2s+d}}.
$
The specified learner therefore satisfies
$
Q_{m,2}\lesssim m^{-\frac{2s}{2s+d}}.
$
A uniformly selected size-$m$ bag has the same distribution
as $m$ independent experimental observations.
Thus the regression bound applies to $Q_{m,2}$ after
averaging over the sample and the selected bag.

\item We substitute the learning rate into the welfare bound.
Set $\alpha=2s/(2s+d)$, so $0<\alpha<1$ and
$1-\alpha=d/(2s+d)$.
By \Cref{thm:bag_upper},
\[
\begin{aligned}
R_{\RA}(A_m)
&\lesssim
Q_{m,2}^{\frac{\kappa+1}{\kappa+2}}
+\rho\frac mn Q_{m,2}
\lesssim
m^{-\frac{2s(\kappa+1)}{(2s+d)(\kappa+2)}}
+\frac{\rho}{n}m^{\frac{d}{2s+d}}.
\end{aligned}
\]
The first term decreases with bag size.
For $\rho>0$, the second increases with bag size.

\item We minimize the resulting upper envelope.
For $\rho>0$, the calculation in the proof of
\Cref{cor:bag_choice} gives the interior order
$(n/\rho)^{(\kappa+2)/(\kappa+2-\alpha)}$.
Here,
\[
\frac{\kappa+2}{\kappa+2-\alpha}
=
\frac{(2s+d)(\kappa+2)}
{2s(\kappa+1)+d(\kappa+2)}.
\]
Consequently, an interior minimizer has order
$
m^\star\asymp
\left(\frac n\rho\right)^{
\frac{(2s+d)(\kappa+2)}
{2s(\kappa+1)+d(\kappa+2)}}.
$
Substituting $\alpha=2s/(2s+d)$ into the corresponding
range for $\rho$ gives
$
n^{\frac{2s}{(2s+d)(\kappa+2)}}
\lesssim \rho\lesssim n,
$
up to fixed constant factors.
Outside this range, restricting the choice to $[1,n]$
gives the appropriate endpoint order.
Integer rounding preserves these orders.
When $\rho=0$, the envelope decreases throughout,
so its minimum is at $m=n$.
\end{enumerate}
\end{proof}

The calculation optimizes the displayed upper bound;
it does not establish minimax optimality within the
H\"older class.
Using only the mean-squared-error rate can give a looser
regret bound than using the full 
conditions \citep[Theorem~3.3]{audibert_tsybakov_2007}.

\subsubsection{\Cref{ex:cara_bag_choice}}\label{app:cara_bag_choice}

The CARA example determines the best bag size from the
exact distribution of welfare. We specify the experiment,
calculate the bagged policies, and compare their
certainty equivalents.

\paragraph{Model and learner.}
A bounded-outcome randomized experiment supplies the
observed scores.
Let $n=3$ and let $X$ be constant.
For each experimental unit, independently draw $U$ equal
to $1$ with probability $3/4$ and $-3/2$ with probability
$1/4$.
Set $Y(1)=U/2$ and $Y(0)=-U/2$, and randomize $D$
independently of $U$ with probability one half.
\label{app:cara_experiment}
Each base learner averages the observed inverse-propensity
scores $2(2D-1)Y$ over its bag and treats when the average
is nonnegative.
Complete policy-vote bagging averages these decisions
over all bags of size $m\in\{1,2,3\}$.

\begin{proof}[Proof of \Cref{ex:cara_bag_choice}]
We proceed in steps.
\begin{enumerate}
\item We verify the score distribution and its exact
mean-squared error.
The observed outcome is $Y=(2D-1)U/2$.
Since $(2D-1)^2=1$, the observed score satisfies
$
2(2D-1)Y=U.
$
Its moments are
$
\E(U)=\frac38,
$ $
\E(U^2)=\frac{21}{16},
$ and $
\V(U)=\frac{21}{16}-\left(\frac38\right)^2
=\frac{75}{64}.
$
Thus $\tau=3/8$, and the average of $m$ independent
scores is unbiased for $\tau$, with variance
$75/(64m)$.
A uniformly selected size-$m$ bag has the same distribution
as $m$ independent observations, so
$
Q_{m,2}=\frac{75}{64m}.
$

The experiment also satisfies the outcome and margin
conditions.
Potential outcomes are bounded, and treatment is
independently randomized with propensity one half.
Because $\tau=3/8$ is constant, the margin probability
is zero below $3/8$ and one at or above $3/8$.
The margin condition therefore holds for every $\kappa>0$
with $C_\tau=(8/3)^\kappa$.

\item We calculate the policies from the number of
positive scores.
Let $K$ count the positive scores among the three
experimental units.
Then $K\sim\operatorname{Binomial}(3,3/4)$.
A size-one bag votes for treatment exactly when its
score is positive.
A size-two bag votes for treatment only when both
scores are positive, because a mixed pair has mean
$-1/4$.
The full sample votes for treatment exactly when
$K\geq2$.

Identifying each policy with its treatment
probability, complete averaging gives
\[
\begin{array}{c|cccc}
K&0&1&2&3\\ \hline
\text{Probability}&1/64&9/64&27/64&27/64\\
A_1(S)&0&1/3&2/3&1\\
A_2(S)&0&0&1/3&1\\
A_3(S)&0&0&1&1
\end{array}
\]
Since $\E\{Y(0)\}=-\tau/2$, welfare and oracle welfare
are
$
W\{A_m(S)\}=-\frac{\tau}{2}+\tau A_m(S)
$ and $
W^\star=\frac{\tau}{2}.
$ 
Thus realized regret is
$\operatorname{Reg}(A_m,S)=\tau\{1-A_m(S)\}$.

\item We exclude size two because it delivers lower welfare.
The table shows that $A_2(S)\leq A_3(S)$ for every
sample, with strict inequality when $K=2$.
That event has probability $27/64$.
Since $\tau>0$, the same comparison holds for welfare.
CARA utility is strictly increasing for every $\gamma>0$,
so size three has strictly higher expected utility
and certainty equivalent:
$
\CE_\gamma(A_2)<\CE_\gamma(A_3).
$
It therefore remains to compare sizes one and three.

\item We calculate the exponential moments needed for
the remaining comparison.
Writing welfare as $W^\star-\operatorname{Reg}(A_m,S)$
and factoring out $\exp(-\gamma W^\star)$  gives
\[
\CE_\gamma(A_m)
=
W^\star-\frac1\gamma
\log\left(
\E\left[\exp\{\gamma\operatorname{Reg}(A_m,S)\}\right]
\right).
\]
For $\gamma>0$, maximizing the certainty equivalent
is therefore equivalent to minimizing the expectation
inside the logarithm.

The table gives regret $(3-K)/8$ for size one.
Each negative score contributes $1/8$ to this regret,
and the three scores are independent.
Its exponential moment is consequently the product
of three identical factors:
\[
\E\left[\exp\{\gamma\operatorname{Reg}(A_1,S)\}\right]
=
\left\{\frac34+\frac14\exp(\gamma/8)\right\}^3.
\]
For size three, regret is $3/8$ when $K\leq1$
and zero otherwise.
Since $\Pp(K\leq1)=10/64$,
\[
\E\left[\exp\{\gamma\operatorname{Reg}(A_3,S)\}\right]
=
\frac5{32}\exp(3\gamma/8)+\frac{27}{32}.
\]

\item We determine which bag size has the smaller
exponential moment.
Subtracting the two expressions from Step~4 gives
\[
\begin{aligned}
&\E\left[\exp\{\gamma\operatorname{Reg}(A_3,S)\}\right]
-\E\left[\exp\{\gamma\operatorname{Reg}(A_1,S)\}\right]\\
&\qquad=
\frac9{64}
\left\{
\exp(3\gamma/8)-\exp(\gamma/4)
-3\exp(\gamma/8)+3
\right\}\\
&\qquad=
\frac9{64}
\{\exp(\gamma/8)-1\}
\{\exp(\gamma/4)-3\}.
\end{aligned}
\]
For $\gamma>0$, the first factor is positive.
The second is negative for $\gamma<4\log3$, zero
at equality, and positive for $\gamma>4\log3$.
Size three has the smaller exponential moment
below the threshold, while size one has the smaller
exponential moment above it.

Combining this comparison with Step~3 proves that
size three is uniquely optimal for $0<\gamma<4\log3$,
size one is uniquely optimal for $\gamma>4\log3$,
and precisely sizes one and three are optimal when
$\gamma=4\log3$.
\end{enumerate}
\end{proof}

\subsection{Minimax benchmarks over all algorithms}\label{app:local_minimax}

\subsubsection{\Cref{thm:minimax}}

The two-point experiment gives a lower bound for every
policy-learning algorithm. Its models differ only in
whether treatment helps or harms a rare covariate group,
making the correct decision difficult to learn.

\paragraph{The two-point experiment.}\label{app:local_experiment}
The two models make opposite treatment decisions optimal
on a rare group.
Fix $\kappa>0$.
For each $n$, let
$\varepsilon_n=cn^{-1/(\kappa+2)}$ and
$p_n=\varepsilon_n^\kappa$, where $0<c\leq1/4$
will be chosen sufficiently small below.
Under both $P_{+,n}$ and $P_{-,n}$, the covariate
$X\in\{0,1\}$ satisfies $\Pp(X=1)=p_n$, and treatment
is independently randomized with probability one half.
At $X=0$, set $Y(0)=0$ and $Y(1)=1$.
At $X=1$, let
$Y(0)\sim\operatorname{Bernoulli}(1/2)$ and
$Y(1)\sim\operatorname{Bernoulli}(1/2\pm\varepsilon_n)$
under $P_{\pm,n}$.
The treatment effect is therefore one at $X=0$ and
$\pm\varepsilon_n$ at $X=1$.
The training sample contains $n$ independent observations
from the selected experiment.
Write $\mathcal P_n=\{P_{+,n},P_{-,n}\}$.

No learner can reliably distinguish
the two treatment decisions on the rare group.
A sample-independent fractional policy supplies
the matching upper bound.

\begin{proof}[Proof of \Cref{thm:minimax}]
We proceed in steps.
\begin{enumerate}
\item We verify the margin condition under both models.
The absolute treatment effect is $\varepsilon_n$
at $X=1$ and one at $X=0$.
Consequently,
\[
\Pp\{0<|\tau(X)|\leq t\}
=
\begin{cases}
0, & 0<t<\varepsilon_n,\\
p_n, & \varepsilon_n\leq t<1,\\
1, & t\geq1.
\end{cases}
\]
In the middle case,
$p_n=\varepsilon_n^\kappa\leq t^\kappa$.
In the last case, $1\leq t^\kappa$.
Thus both models satisfy the margin condition
with constant one.

\item We relate regret to the treatment decision
on the rare group.
Fix any learner $A$, used under both models, and write
$\widehat G=A(S)$.
At $X=1$, the oracle treats under $P_{+,n}$
and withholds treatment under $P_{-,n}$.
The respective welfare losses are therefore
$\varepsilon_n\{1-\widehat G(1)\}$ and
$\varepsilon_n\widehat G(1)$.
Weighting by the group's probability $p_n$ and ignoring
the nonnegative regret contribution from $X=0$ gives
$
R_{+,n}(A)
\geq
p_n\varepsilon_n
\E_{+,n}\{1-\widehat G(1)\}$ and $
R_{-,n}(A)
\geq
p_n\varepsilon_n
\E_{-,n}\{\widehat G(1)\}.$
Averaging these inequalities yields
\[
\frac{R_{+,n}(A)+R_{-,n}(A)}{2}
\geq
\frac{p_n\varepsilon_n}{2}
\left[
1-\E_{+,n}\{\widehat G(1)\}
+\E_{-,n}\{\widehat G(1)\}
\right].
\]
To have small regret under both models, the learner
must therefore assign treatment more often under
$P_{+,n}$ than under $P_{-,n}$.

\item We bound how much the learner's treatment
probabilities can differ between the models.
Total variation bounds the difference in probability
of any event under the two sample laws.
It bounds the difference in expectations of
$\widehat G(1)$, since this quantity lies in $[0,1]$.

For completeness, the identity
$
\widehat G(1)
=
\int_0^1\1\{\widehat G(1)>u\}\,\dd u
$
expresses the fractional treatment probability
as an integral of indicators.
Taking expectations under each model and subtracting gives
\[
\begin{aligned}
&\left|
\E_{+,n}\{\widehat G(1)\}
-
\E_{-,n}\{\widehat G(1)\}
\right|
\leq
\int_0^1
\left|
\Pp_{+,n}\{\widehat G(1)>u\}
-
\Pp_{-,n}\{\widehat G(1)>u\}
\right|\,\dd u\\
&\qquad\leq
\TV(P_{+,n}^n,P_{-,n}^n).
\end{aligned}
\]
Substitution into Step~2 gives
\[
\frac{R_{+,n}(A)+R_{-,n}(A)}{2}
\geq
\frac{p_n\varepsilon_n}{2}
\left\{
1-\TV(P_{+,n}^n,P_{-,n}^n)
\right\}.
\]

\item We show that the two sample laws are close.
For one observed experimental unit, the laws differ
only when $X=1$ and $D=1$.
This event has probability $p_n/2$.
Conditional on it, the outcome is Bernoulli with success
probability $1/2+\varepsilon_n$ or
$1/2-\varepsilon_n$.

The KL divergence between these Bernoulli distributions
satisfies
\[
\begin{aligned}
&\KL\left\{
\operatorname{Bernoulli}(1/2+\varepsilon_n)
\Vert
\operatorname{Bernoulli}(1/2-\varepsilon_n)
\right\}
=
2\varepsilon_n
\log\left(
1+\frac{4\varepsilon_n}{1-2\varepsilon_n}
\right)\\
&\qquad\leq
\frac{8\varepsilon_n^2}{1-2\varepsilon_n}
\leq16\varepsilon_n^2.
\end{aligned}
\]
The first inequality uses $\log(1+x)\leq x$;
the second uses $\varepsilon_n\leq1/4$.
Multiplying by $p_n/2$ bounds the one-observation
divergence by $8p_n\varepsilon_n^2$.
Independence then gives
\[
\begin{aligned}
\KL(P_{+,n}^n\Vert P_{-,n}^n)
&\leq8np_n\varepsilon_n^2
=8n\varepsilon_n^{\kappa+2}
=8c^{\kappa+2}.
\end{aligned}
\]
Choose $c$ small enough that
$c^{\kappa+2}\leq1/16$.
Pinsker's inequality implies
\[
\TV(P_{+,n}^n,P_{-,n}^n)
\leq
\sqrt{\frac12
\KL(P_{+,n}^n\Vert P_{-,n}^n)}
\leq
2c^{(\kappa+2)/2}
\leq\frac12.
\]

\item We obtain the minimax lower bound.
The larger of the two expected regrets is at least
their average.
Combining Steps~3 and~4 therefore gives
\[
\sup_{P\in\mathcal P_n}R_P(A)
\geq
\frac{R_{+,n}(A)+R_{-,n}(A)}{2}
\geq
\frac{p_n\varepsilon_n}{4}.
\]
For $\rho\geq0$, the variance penalty is nonnegative,
so $R_{\RA,P}(A)\geq R_P(A)$.
The bound holds for every learner $A$.
The infimum over learners yields
$
\inf_A\sup_{P\in\mathcal P_n}R_{\RA,P}(A)
\geq
\frac{p_n\varepsilon_n}{4}.
$

\item We construct a policy attaining the same order.
Use the same policy for every training sample:
always treat at $X=0$ and treat with probability
one half at $X=1$.
This policy agrees with the oracle at $X=0$.
At $X=1$, it loses $\varepsilon_n/2$ under either model,
so its expected regret is $p_n\varepsilon_n/2$.

Its population welfare does not depend on the training
sample and therefore has zero sampling variance.
Individual treatment randomization is already averaged
into population welfare.
Thus its risk-aware regret is $p_n\varepsilon_n/2$
for every $\rho\geq0$.
Together with Step~5, this proves
\[
\frac{p_n\varepsilon_n}{4}
\leq
\inf_A\sup_{P\in\mathcal P_n}R_{\RA,P}(A)
\leq
\frac{p_n\varepsilon_n}{2}.
\]
Finally,
$p_n\varepsilon_n
=\varepsilon_n^{\kappa+1}
=c^{\kappa+1}n^{-(\kappa+1)/(\kappa+2)}$.
\end{enumerate}
\end{proof}

\subsubsection{\Cref{cor:gaussian_rkhs}}\label{app:gaussian_proof}

The Gaussian-RKHS comparison combines a regression-error
upper bound with a minimax lower bound.
We first specify the regression model and learner,
then show when the same model class contains the
two-point experiment used for the lower bound.

\paragraph{Model and learner.}\label{app:gaussian_conditions}
Common bounds on the model make the upper bound uniform
over covariate distributions.
Let $\cX\subset\mathbb R^d$ be compact, with $d\geq1$,
and let $\mathcal H_K$ be the RKHS of a normalized
Gaussian kernel with fixed bandwidth.
Suppose Assumption~\ref{ass:main} holds.
Treatment is randomized with known propensity bounded
away from zero and one, and the inverse-propensity
pseudo-outcome has uniformly conditionally sub-Gaussian
noise.
Fix $\kappa>0$ and common bounds on the noise parameter,
margin constant, overlap constant, and RKHS norm of
$\tau$ across the model class.
Let $\widehat\tau_n$ be kernel ridge regression with
regularization of order $(\log n)^d/n$ for $n\geq3$,
and let $A_n$ be its full-sample plug-in policy.
The upper-bound constants may depend on the fixed domain,
kernel, and class bounds, but not on $P_X$ or $n$.

The minimax comparison requires the class to contain
the local experiment.
For this comparison, suppose $\cX$ contains two distinct
points and the class allows arbitrary covariate
distributions on $\cX$, including the two-point
distributions constructed below, with treatment
propensity one half.
The fixed RKHS radius and margin and noise bounds must
be large enough to contain that construction.
The proof gives explicit sufficient values.

\begin{proof}[Proof of \Cref{cor:gaussian_rkhs}]
We proceed in steps.
\begin{enumerate}
\item We express CATE estimation as a regression problem.
Let $e(x)=\Pp(D=1\mid X=x)$ and define
$
\widetilde Y
=
\frac{DY}{e(X)}
-
\frac{(1-D)Y}{1-e(X)}.
$
Randomization gives
$\E(\widetilde Y\mid X)=\tau(X)$.
Thus the regression function is correctly specified
by an element of $\mathcal H_K$.
The conditional sub-Gaussian assumption also supplies
a common bound on $\V(\widetilde Y\mid X)$.
Only this variance bound is needed for the upper bound.

Kernel ridge regression estimates this conditional mean
by minimizing
$
\frac1n\sum_{i=1}^n
\{\widetilde Y_i-f(X_i)\}^2
+
\lambda\lVert f\rVert_{\mathcal H_K}^2
$
over $f\in\mathcal H_K$.
Normalization gives
$\lVert K(X,\cdot)\rVert_{\mathcal H_K}^2=K(X,X)=1$.

\item We bound the effective dimension uniformly over
the covariate distribution.
Enclose $\cX$ in a fixed cube.
Let $(\mu_j)_{j\geq1}$ be the eigenvalues of the kernel
integral operator in decreasing order, appending zeros
if the operator has finite rank.
By Theorem~5 and Remark~6 of \citet{belkin_2018},
$
\mu_j\leq C_0\exp(-c_0j^{1/d}),
$
where $C_0,c_0>0$ depend only on the fixed kernel
and cube.
They do not depend on $P_X$.
Increase $C_0$ to at least one.

The effective dimension sums the contribution of
each eigenvalue at regularization level $\lambda$:
$
\mathcal N(\lambda)
=
\sum_{j\geq1}\frac{\mu_j}{\mu_j+\lambda}.
$
For $0<\lambda\leq1$, choose an integer
$J\asymp\{\log(e/\lambda)\}^d$ large enough that
$c_0J^{1/d}\geq2\log(eC_0/\lambda)$.
Each of the first $J$ terms is at most one.
For the remaining terms, use
$\mu_j/(\mu_j+\lambda)\leq\mu_j/\lambda$.
Hence,
$
\mathcal N(\lambda)
\leq
J+\frac{C_0}{\lambda}
\sum_{j>J}\exp(-c_0j^{1/d}).
$

The remaining sum is controlled by an integral.
Because its summand decreases with $j$, the substitution
$u=x^{1/d}$ gives
\[
\begin{aligned}
\sum_{j>J}\exp(-c_0j^{1/d})
&\leq
\int_J^\infty\exp(-c_0x^{1/d})\,\dd x\\
&=
d\int_{J^{1/d}}^\infty
u^{d-1}\exp(-c_0u)\,\dd u\\
&\lesssim
J^{(d-1)/d}\exp(-c_0J^{1/d}).
\end{aligned}
\]
The last bound follows by repeated integration by parts.
The choice of $J$ makes
$\lambda^{-1}\exp(-c_0J^{1/d})$ bounded.
Since $J^{(d-1)/d}\leq J$, it follows that
$
\mathcal N(\lambda)
\lesssim
\{\log(e/\lambda)\}^d,
$ for $ 0<\lambda\leq1,
$
uniformly in $P_X$.

\item We apply the expected regression-error bound.
Theorem~7, equation~(9), of
\citet{mourtada_rosasco_2022} applies to the features
$K(X,\cdot)$.
Their extension to Hilbert spaces is stated immediately
after Assumption~1.
Step~1 verifies the conditional-mean and variance
conditions, and the feature norm is bounded by one.
The result gives
\[
\begin{aligned}
Q_{n,2}
&\leq
\left(1+\frac{1}{n\lambda}\right)^2
\inf_{f\in\mathcal H_K}
\left\{
\lVert f-\tau\rVert_{L_2(P_X)}^2
+\lambda\lVert f\rVert_{\mathcal H_K}^2
\right\}\\
&\quad+
\left(1+\frac{1}{n\lambda}\right)
\frac{\mathcal N(\lambda)}{n}
\operatorname*{ess\,sup}_{x}
\V(\widetilde Y\mid X=x).
\end{aligned}
\]
Choosing $f=\tau$ makes the approximation-error 
zero, so the infimum is at most
$\lambda\lVert\tau\rVert_{\mathcal H_K}^2$.

The prescribed regularization controls both remaining
terms.
With $\lambda\asymp(\log n)^d/n$, the factors
$1+1/(n\lambda)$ are bounded.
For $\lambda\leq1$, Step~2 gives
$\mathcal N(\lambda)\lesssim(\log n)^d$.
For $\lambda>1$, the normalization of the kernel gives
$\sum_j\mu_j=1$, and hence
$\mathcal N(\lambda)\leq1/\lambda$.
Using the common norm and noise bounds, we obtain
$
Q_{n,2}
\lesssim
\lambda+\frac{\mathcal N(\lambda)}{n}
\lesssim
\frac{(\log n)^d}{n},
$ for $n\geq3.
$
All constants are independent of $P_X$ and $n$.

\item We translate regression accuracy into the welfare bound.
Apply \Cref{thm:bag_upper} with $m=n$.
Since the bag is the full sample,
\[
\begin{aligned}
R_{\RA}(A_n)
&\lesssim
Q_{n,2}^{\frac{\kappa+1}{\kappa+2}}
+\rho Q_{n,2}
\lesssim
\left\{
\frac{(\log n)^d}{n}
\right\}^{\frac{\kappa+1}{\kappa+2}}
+
\rho\frac{(\log n)^d}{n}.
\end{aligned}
\]
This proves the upper bound uniformly over the stated
model class.

\item We place the two treatment-effect alternatives
inside one fixed RKHS ball.
Choose distinct $x_0,x_1\in\cX$, and let $\mathbf K$
be their Gram matrix.
Its diagonal entries are one, and
$0<K(x_0,x_1)<1$.
Its smallest eigenvalue is therefore
$1-K(x_0,x_1)>0$.

For either sign, define $\tau$ as the linear combination
of $K(x_0,\cdot)$ and $K(x_1,\cdot)$ with coefficient
vector
$\mathbf K^{-1}(1,\pm\varepsilon_n)^\top$.
Multiplication by $\mathbf K$ verifies that
$\tau(x_0)=1$ and $\tau(x_1)=\pm\varepsilon_n$.
Its squared RKHS norm is
\[
\begin{aligned}
\lVert\tau\rVert_{\mathcal H_K}^2
&=
(1,\pm\varepsilon_n)
\mathbf K^{-1}
(1,\pm\varepsilon_n)^\top
\leq
\frac{1+\varepsilon_n^2}{1-K(x_0,x_1)}
\leq
\frac{17}{16\{1-K(x_0,x_1)\}},
\end{aligned}
\]
for $0<\varepsilon_n\leq\frac14.$
The first inequality uses the largest eigenvalue
of $\mathbf K^{-1}$, which is
$1/\{1-K(x_0,x_1)\}$.
Thus both alternatives belong to a fixed RKHS ball
whose radius is at least
$
\frac{\sqrt{17}}
{4\{1-K(x_0,x_1)\}^{1/2}}.
$
This radius does not depend on $n$.

\item We verify that the local experiments belong to
the stated model class.
Choose
$\varepsilon_n=cn^{-1/(\kappa+2)}$ as in
\Cref{thm:minimax}, with $c>0$ small enough that
$\varepsilon_n\leq1/4$.
Place probability $p_n=\varepsilon_n^\kappa$ at $x_1$
and the remaining probability at $x_0$.
Use the potential-outcome laws from
Appendix~\ref{app:local_experiment}, replacing its
covariate labels $0,1$ by $x_0,x_1$.

The conditional treatment effects on these two points
are those in Step~5.
Off the support of $P_X$, choose the CATE version to
equal the same RKHS function.
This is permissible because conditional means are
determined only $P_X$-almost surely.

The remaining class bounds can also be fixed independently
of $n$.
As verified for the local experiment, the margin condition
holds with constant one and treatment propensity is
one half.
Both potential outcomes lie in $[0,1]$, so
$\widetilde Y\in[-2,2]$.
Its centered conditional noise has range length at most
four and is therefore sub-Gaussian with variance proxy
at most four.
Consequently, the radius in Step~5, a margin bound
of at least one, and a noise-proxy bound of at least
four suffice to include both experiments.

\item We compare the upper rate with the unrestricted
minimax lower bound.
Both local experiments belong to the Gaussian-RKHS
class, so the worst-case risk over that class is at
least the worst-case risk over these two experiments.
By \Cref{thm:minimax}, every policy-learning algorithm
therefore faces a lower bound of order
$n^{-(\kappa+1)/(\kappa+2)}$ for $\rho\geq0$.

For fixed $\rho$, the variance term in Step~4 is
asymptotically smaller than its expected-regret term.
Their ratio is
$
\rho
\left\{
\frac{(\log n)^d}{n}
\right\}^{1/(\kappa+2)}
\rightarrow0.
$
The upper bound consequently has order at most
$
n^{-\frac{\kappa+1}{\kappa+2}}
(\log n)^{\frac{d(\kappa+1)}{\kappa+2}}.
$
Compared with the lower bound, the only discrepancy
is the factor
$(\log n)^{d(\kappa+1)/(\kappa+2)}$.
This proves the stated minimax comparison.
\end{enumerate}
\end{proof}

\section{Proofs for Appendix A}\label{app:secA_proofs}

\begin{proof}[Proof of \Cref{thm:joint_concavity}]
We proceed in steps.
\begin{enumerate}
\item We show that transformed welfare remains affine
in treatment probabilities.
Conditional on a population member's covariates and
potential outcomes, the treatment randomizer selects
$Y(1)$ with probability $G(X)$ and $Y(0)$ otherwise.
It therefore averages the two transformed outcomes:
\[
\begin{aligned}
V_\phi(G)
&=
\E\left[
G(X)\phi\{Y(1)\}
+\{1-G(X)\}\phi\{Y(0)\}
\right]\\
&=
\E[\phi\{Y(0)\}]
+
\E_X\{\tau_\phi(X)G(X)\}.
\end{aligned}
\]
The first term does not depend on the policy.
At each covariate value, maximizing the second term
over $G(X)\in[0,1]$ gives
$G_\phi^\star(X)=\1\{\tau_\phi(X)\geq0\}$.
This identity is exact and requires no approximation
of $\phi$.

\item We establish the welfare range needed for the
certainty-equivalent results.
Transformed-outcome integrability implies
$
\E_X|\tau_\phi(X)|
\leq
\E|\phi\{Y(1)\}|+\E|\phi\{Y(0)\}|
<\infty.
$
Moreover, Step~1 gives
\[
\begin{aligned}
0
&\leq V_\phi(G_\phi^\star)-V_\phi(G)
=
\E_X\left[
\tau_\phi(X)\{G_\phi^\star(X)-G(X)\}
\right]
\leq
\lVert\tau_\phi\rVert_{L_1(P_X)}.
\end{aligned}
\]
Thus all feasible transformed welfare values lie in
one finite interval.
Its endpoints are attained by $G_\phi^\star$ and
$1-G_\phi^\star$, so its length is
$\lVert\tau_\phi\rVert_{L_1(P_X)}$.

The certainty-equivalent arguments now apply to
$Z_A=V_\phi\{A(S)\}$.
Under the conditions of \Cref{lem:taylor} on an open
interval containing this welfare range, the same
Taylor expansion and remainder bound hold.
The optimizer and stability-based approximation results
retain their original uniformity conditions.
The exact CARA arguments also apply, using the range
just established in place of
$\lVert\tau\rVert_{L_1(P_X)}$.

\item We transfer the stability bounds to transformed welfare.
For any two policies $G$ and $G'$, the welfare identity
from Step~1 gives
\[
\begin{aligned}
|V_\phi(G)-V_\phi(G')|
&=
\left|
\E_X\left\{
\tau_\phi(X)\{G(X)-G'(X)\}
\right\}
\right|\\
&\leq
\E_X\left\{
|\tau_\phi(X)|\,|G(X)-G'(X)|
\right\}\\
&\leq
\lVert\tau_\phi\rVert_{L_1(P_X)}
\sup_{x\in\cX}|G(x)-G'(x)|.
\end{aligned}
\]
Apply this inequality to the policies learned before
and after replacing one experimental observation.
Taking the supremum over samples and replacements gives
$\beta_n(A)\leq
\lVert\tau_\phi\rVert_{L_1(P_X)}\Gamma_n(A)$,
where $\beta_n(A)$ is now defined using transformed
welfare and $\Gamma_n(A)$ retains its original definition.
This proves the transformed version of
\Cref{lem:policy_to_welfare}; its score-stability
implication follows from the same Lipschitz bound.

The proof of \Cref{thm:stability} uses independence
of the experimental observations and the bound on how
much replacing one observation changes welfare.
Applying that proof to $Z_A=V_\phi\{A(S)\}$, with
$\beta_n(A)$ defined as above, gives the corresponding
variance, risk-aware-regret, and tail bounds.
The exponential-moment bound in
\Cref{lem:mcdiarmid_welfare} applies for the same reason,
so \Cref{prop:cara_stability} also holds for transformed
welfare.

\item We transfer the bagging arguments and their rates.
The welfare loss of a bag-specific policy is
$
V_\phi(G_\phi^\star)-V_\phi(\widehat G_I)
=
\E_X\left[
|\tau_\phi(X)|
\1\{\widehat G_I(X)\neq G_\phi^\star(X)\}
\right].
$
On the disagreement event, the estimated transformed
effect must reach or cross zero.
Consequently,
$
|\tau_\phi(X)|
\leq
|\widehat\tau_{\phi,I}(X)-\tau_\phi(X)|.
$
These are the two identities used in the proof of
\Cref{thm:bag_upper}.
That proof therefore bounds expected transformed
welfare loss by
$Q_{\phi,m,2}^{(\kappa+1)/(\kappa+2)}$
and single-bag welfare variance by $Q_{\phi,m,2}$.

Complete averaging preserves the same variance contraction.
By affinity and symmetry,
$V_\phi\{A_m(S)\}$ is a complete $U$-statistic
with kernel $V_\phi(\widehat G_I)$.
Thus \Cref{lem:ustat} gives
\[
\V[V_\phi\{A_m(S)\}]
\leq
\frac mn\V\{V_\phi(\widehat G_I)\}
\leq
\frac mn Q_{\phi,m,2}.
\]
Combining the expected-loss and variance bounds proves
the complete-bagging result.

The conditional-mean and averaging identities also
remain unchanged.
Affinity supplies the mean equality, and concavity
of $\psi$ supplies the two Jensen inequalities in
\Cref{prop:bag_concave}.
The proof of \Cref{cor:finite_bagging} then gives the
finite-bag variance coefficient
$m/n+(1-m/n)/B$.
Substituting
$Q_{\phi,m,2}\lesssim m^{-\alpha}$ gives the stated
bound.

\item We embed the lower-bound construction in transformed
outcomes.
Suppose $\phi$ is strictly increasing on a nondegenerate
compact interval.
Choose a smaller compact interval inside its interior.
Concavity makes $\phi$ continuous there, so its image
is a nondegenerate interval on which $\phi^{-1}$
is well defined.

Start with the bounded potential outcomes $Y(d)$
from \Cref{thm:sharpness}.
Choose fixed constants $a>0$ and $b$ so that every
value of $b+aY(d)$ lies in this image.
Use
$\phi^{-1}\{b+aY(d)\}$ as the new potential outcomes.
They are bounded, and applying $\phi$ recovers exactly
$b+aY(d)$.

The learner can recover the original observed outcome
from the new one by applying $\phi$, subtracting $b$,
and dividing by $a$.
Apply the original base learner to these recovered
observations and multiply its score by $a$.
Then
$
\tau_\phi=a\tau,
$ $
\widehat\tau_{\phi,I}=a\widehat\tau_I,
$ and $
Q_{\phi,m,p}=a^pQ_{m,p}.
$
Because $a>0$, thresholding produces the same treatment
decisions as originally.

The welfare and margin bounds change only by fixed
scale factors.
For corresponding policies,
$V_\phi(G)=b+aW(G)$.
Expected regret therefore scales by $a$, and welfare
variance scales by $a^2$.
In particular,
$
R_{\phi,\RA}(A_m)
=
aR(A_m)
+
\rho a^2\V[W\{A_m(S)\}].
$
Also,
$
\Pp_X\{0<|\tau_\phi(X)|\leq t\}
=
\Pp_X\{0<|\tau(X)|\leq t/a\}
\leq
C_\tau a^{-\kappa}t^\kappa.
$
Both lower-bound terms from \Cref{thm:sharpness}
therefore retain their orders, with constants independent
of $n$, $m$, and $\rho$.
The construction belongs to the transformed high-level
class whenever its fixed margin, moment, and other
bounds admit these rescalings.

The same invertible transformation preserves the total
variation distance between the two sample laws in the
proof of \Cref{thm:minimax}, while multiplying expected
regret by $a$.
The lower bound therefore carries over, and the same
sample-independent fractional policy gives the matching
upper bound because its transformed welfare still has
zero sampling variance.
\end{enumerate}
\end{proof}


\begin{thebibliography}{99}

\bibitem[Andrews(1986)]{andrews_1986}
D. W. K. Andrews.
\newblock Stability comparisons of estimators.
\newblock \emph{Econometrica}, 54(5):1207--1235, 1986.

\bibitem[Andrews and Chen(2025)]{andrews_chen_2025}
I. Andrews and J. Chen.
\newblock Certified decisions.
\newblock arXiv:2502.17830, 2025.

\bibitem[Athey and Wager(2021)]{athey_wager_2021}
S. Athey and S. Wager.
\newblock Policy learning with observational data.
\newblock \emph{Econometrica}, 89(1):133--161, 2021.

\bibitem[Audibert and Tsybakov(2007)]{audibert_tsybakov_2007}
J.-Y. Audibert and A. B. Tsybakov.
\newblock Fast learning rates for plug-in classifiers.
\newblock \emph{The Annals of Statistics}, 35(2):608--633, 2007.

\bibitem[Belkin(2018)]{belkin_2018}
M. Belkin.
\newblock Approximation beats concentration? An approximation view on inference with smooth radial kernels.
\newblock In \emph{Proceedings of the 31st Conference on Learning Theory}, volume 75, pages 1348--1361, 2018.

\bibitem[Bousquet and Elisseeff(2002)]{bousquet_elisseeff_2002}
O. Bousquet and A. Elisseeff.
\newblock Stability and generalization.
\newblock \emph{Journal of Machine Learning Research}, 2:499--526, 2002.

\bibitem[Breiman(1996)]{breiman_1996}
L. Breiman.
\newblock Bagging predictors.
\newblock \emph{Machine Learning}, 24:123--140, 1996.

\bibitem[Breiman(2001)]{breiman_2001}
L. Breiman.
\newblock Random forests.
\newblock \emph{Machine Learning}, 45:5--32, 2001.

\bibitem[B\"uhlmann and Yu(2002)]{buhlmann_yu_2002}
P. B\"uhlmann and B. Yu.
\newblock Analyzing bagging.
\newblock \emph{The Annals of Statistics}, 30(4):927--961, 2002.

\bibitem[Buja and Stuetzle(2006)]{buja_stuetzle_2006}
A. Buja and W. Stuetzle.
\newblock Observations on bagging.
\newblock \emph{Statistica Sinica}, 16(2):323--351, 2006.

\bibitem[Chen et al.(2022)]{chen_syrgkanis_austern_2022}
Q. Chen, V. Syrgkanis, and M. Austern.
\newblock Debiased machine learning without sample-splitting for stable estimators.
\newblock In \emph{Advances in Neural Information Processing Systems}, volume 35, pages 3096--3109, 2022.

\bibitem[Chernozhukov et al.(2026)]{chernozhukov_newey_singh_syrgkanis_2026}
V. Chernozhukov, W. K. Newey, R. Singh, and V. Syrgkanis.
\newblock Adversarial estimation of Riesz representers.
\newblock \emph{Journal of the American Statistical Association}, 121(554):1398--1409, 2026.

\bibitem[Chernozhukov et al.(2025)]{chernozhukov_lee_rosen_sun_2025}
V. Chernozhukov, S. Lee, A. M. Rosen, and L. Sun.
\newblock Policy learning with confidence.
\newblock arXiv:2502.10653, 2025.

\bibitem[Denti and Pomatto(2022)]{denti_pomatto_2022}
T. Denti and L. Pomatto.
\newblock Model and predictive uncertainty: A foundation for smooth ambiguity preferences.
\newblock \emph{Econometrica}, 90(2):551--584, 2022.

\bibitem[Efron and Stein(1981)]{efron_stein_1981}
B. Efron and C. Stein.
\newblock The jackknife estimate of variance.
\newblock \emph{The Annals of Statistics}, 9(3):586--596, 1981.

\bibitem[Elisseeff et al.(2005)]{elisseeff_evgeniou_pontil_2005}
A. Elisseeff, T. Evgeniou, and M. Pontil.
\newblock Stability of randomized learning algorithms.
\newblock \emph{Journal of Machine Learning Research}, 6:55--79, 2005.

\bibitem[Ghirardato et al.(2003)]{ghirardato_et_al_2003}
P. Ghirardato, F. Maccheroni, M. Marinacci, and M. Siniscalchi.
\newblock A subjective spin on roulette wheels.
\newblock \emph{Econometrica}, 71(6):1897--1908, 2003.

\bibitem[Hampel(1974)]{hampel_1974}
F. R. Hampel.
\newblock The influence curve and its role in robust estimation.
\newblock \emph{Journal of the American Statistical Association}, 69(346):383--393, 1974.

\bibitem[Harsanyi(1953)]{harsanyi_1953}
J.~C.~Harsanyi.
\newblock Cardinal utility in welfare economics and in the theory of risk-taking.
\newblock \emph{Journal of Political Economy}, 61(5):434--435, 1953.

\bibitem[Hoeffding(1948)]{hoeffding_1948}
W. Hoeffding.
\newblock A class of statistics with asymptotically normal distribution.
\newblock \emph{The Annals of Mathematical Statistics}, 19(3):293--325, 1948.

\bibitem[Kitagawa et al.(2026)]{kitagawa_lee_qiu_2026}
T. Kitagawa, S. Lee, and C. Qiu.
\newblock Treatment choice with nonlinear regret.
\newblock \emph{Biometrika}, 113(2):asag008, 2026.

\bibitem[Kitagawa and Tetenov(2018)]{kitagawa_tetenov_2018}
T. Kitagawa and A. Tetenov.
\newblock Who should be treated? Empirical welfare maximization methods for treatment choice.
\newblock \emph{Econometrica}, 86(2):591--616, 2018.

\bibitem[Kitagawa and Tetenov(2021)]{kitagawa_tetenov_2021}
T. Kitagawa and A. Tetenov.
\newblock Equality-minded treatment choice.
\newblock \emph{Journal of Business \& Economic Statistics}, 39(2):561--574, 2021.

\bibitem[Klibanoff et al.(2005)]{klibanoff_et_al_2005}
P. Klibanoff, M. Marinacci, and S. Mukerji.
\newblock A smooth model of decision making under ambiguity.
\newblock \emph{Econometrica}, 73(6):1849--1892, 2005.

\bibitem[Liu and Molinari(2024)]{liu_molinari_2025}
Y. Liu and F. Molinari.
\newblock Inference for an algorithmic fairness-accuracy frontier.
\newblock arXiv:2402.08879, 2024.

\bibitem[Maccheroni et al.(2006)]{maccheroni_marinacci_rustichini_2006}
F. Maccheroni, M. Marinacci, and A. Rustichini.
\newblock Ambiguity aversion, robustness, and the variational representation of preferences.
\newblock \emph{Econometrica}, 74(6):1447--1498, 2006.

\bibitem[Maccheroni et al.(2013)]{maccheroni_marinacci_ruffino_2013}
F. Maccheroni, M. Marinacci, and D. Ruffino.
\newblock Alpha as ambiguity: Robust mean-variance portfolio analysis.
\newblock \emph{Econometrica}, 81(3):1075--1113, 2013.

\bibitem[Manski(2004)]{manski2004}
C. F. Manski.
\newblock Statistical treatment rules for heterogeneous populations.
\newblock \emph{Econometrica}, 72(4):1221--1246, 2004.

\bibitem[Manski and Tetenov(2007)]{manski_tetenov_2007}
C. F. Manski and A. Tetenov.
\newblock Admissible treatment rules for a risk-averse planner with experimental data on an innovation.
\newblock \emph{Journal of Statistical Planning and Inference}, 137(6):1998--2010, 2007.

\bibitem[Manski and Tetenov(2023)]{manski_tetenov_2023}
C. F. Manski and A. Tetenov.
\newblock Statistical decision theory respecting stochastic dominance.
\newblock \emph{The Japanese Economic Review}, 74(4):447--469, 2023.

\bibitem[Markowitz(1952)]{markowitz1952}
H. Markowitz.
\newblock Portfolio selection.
\newblock \emph{The Journal of Finance}, 7(1):77--91, 1952.

\bibitem[Mbakop and Tabord-Meehan(2021)]{mbakop_tabord_meehan_2021}
E. Mbakop and M. Tabord-Meehan.
\newblock Model selection for treatment choice: Penalized welfare maximization.
\newblock \emph{Econometrica}, 89(2):825--848, 2021.

\bibitem[McDiarmid(1989)]{mcdiarmid1989}
C. McDiarmid.
\newblock On the method of bounded differences.
\newblock In J. Siemons, editor, \emph{Surveys in Combinatorics, 1989}, volume 141 of London Mathematical Society Lecture Note Series, pages 148--188. Cambridge University Press, 1989.

\bibitem[Mentch and Hooker(2016)]{mentch_hooker_2016}
L. Mentch and G. Hooker.
\newblock Quantifying uncertainty in random forests via confidence intervals and hypothesis tests.
\newblock \emph{Journal of Machine Learning Research}, 17(26):1--41, 2016.

\bibitem[Moon(2025)]{moon_2026}
S. Moon.
\newblock Optimal policy choices under uncertainty.
\newblock arXiv:2503.03910, 2025.

\bibitem[Mourtada and Rosasco(2022)]{mourtada_rosasco_2022}
J. Mourtada and L. Rosasco.
\newblock An elementary analysis of ridge regression with random design.
\newblock \emph{Comptes Rendus. Math\'ematique}, 360(G9):1055--1063, 2022.

\bibitem[Opocher(2026)]{opocher_26_review}
G. Opocher.
\newblock Producing policy recommendations: From statistical decision theory to empirical practice.
\newblock arXiv:2607.29281, 2026.

\bibitem[Popoviciu(1935)]{popoviciu1935}
T.~Popoviciu.
\newblock Sur les \'equations alg\'ebriques ayant toutes
leurs racines r\'eelles.
\newblock \emph{Mathematica (Cluj)}, 9:129--145, 1935.

\bibitem[Qian et al.(2025)]{qian_ying_lam_yin_2025}
H. Qian, D. Ying, H. Lam, and W. Yin.
\newblock Subsampled ensemble can improve generalization tail exponentially.
\newblock In \emph{Advances in Neural Information Processing Systems}, volume 38, pages 5137--5188, 2025.

\bibitem[Rawls(1971)]{rawls_1971}
J.~Rawls.
\newblock \emph{A Theory of Justice}.
\newblock Belknap Press of Harvard University Press,
Cambridge, MA, 1971.

\bibitem[Shao(1999)]{shao1999}
J.~Shao.
\newblock \emph{Mathematical Statistics}.
\newblock Springer, New York, 1999.

\bibitem[Soloff et al.(2024a)]{soloff_barber_willett_2024}
J. A. Soloff, R. F. Barber, and R. Willett.
\newblock Bagging provides assumption-free stability.
\newblock \emph{Journal of Machine Learning Research}, 25(131):1--35, 2024a.

\bibitem[Soloff et al.(2024b)]{soloff_barber_willett_argmax_2024}
J. A. Soloff, R. F. Barber, and R. Willett.
\newblock Building a stable classifier with the inflated argmax.
\newblock In \emph{Advances in Neural Information Processing Systems}, volume 37, pages 70349--70380, 2024b.

\bibitem[Strzalecki(2011)]{strzalecki_2011}
T. Strzalecki.
\newblock Axiomatic foundations of multiplier preferences.
\newblock \emph{Econometrica}, 79(1):47--73, 2011.

\bibitem[Strzalecki(2013)]{strzalecki_2013}
T. Strzalecki.
\newblock Temporal resolution of uncertainty and recursive models of ambiguity aversion.
\newblock \emph{Econometrica}, 81(3):1039--1074, 2013.

\bibitem[Sun(2026)]{sun2026}
L. Sun.
\newblock Empirical welfare maximization with constraints.
\newblock \emph{Journal of Econometrics}, 253:106169, 2026.

\bibitem[Sun and Xiang(2025)]{sun_xiang_2025}
Y.~Sun and D.-H.~Xiang.
\newblock Total stability of outcome weighted learning.
\newblock \emph{Mathematical Foundations of Computing},
  8(5):734--755, 2025.

\bibitem[Swaminathan and Joachims(2015)]{swaminathan_joachims_2015}
A. Swaminathan and T. Joachims.
\newblock Counterfactual risk minimization: Learning from logged bandit feedback.
\newblock In \emph{Proceedings of the 32nd International Conference on Machine Learning}, volume 37, pages 814--823, 2015.

\bibitem[Yang et al.(2017)]{yang_pilanci_wainwright_2017}
Y. Yang, M. Pilanci, and M.~J. Wainwright.
\newblock Randomized sketches for kernels: Fast and optimal nonparametric regression.
\newblock \emph{The Annals of Statistics}, 45(3):991--1023, 2017.

\bibitem[van der Laan(2026)]{van_der_laan_2026}
L. van der Laan.
\newblock A researcher's guide to empirical risk minimization.
\newblock arXiv:2602.21501, 2026.

\bibitem[Vickrey(1945)]{vickrey_1945}
W.~Vickrey.
\newblock Measuring marginal utility by reactions to risk.
\newblock \emph{Econometrica}, 13(4):319--333, 1945.

\bibitem[Viviano and Bradic(2024)]{viviano_bradic_2024}
D. Viviano and J. Bradic.
\newblock Fair policy targeting.
\newblock \emph{Journal of the American Statistical Association}, 119(545):730--743, 2024.

\bibitem[Wager and Athey(2018)]{wager_athey_2018}
S. Wager and S. Athey.
\newblock Estimation and inference of heterogeneous treatment effects using random forests.
\newblock \emph{Journal of the American Statistical Association}, 113(523):1228--1242, 2018.

\bibitem[Wald(1950)]{wald1950}
A. Wald.
\newblock \emph{Statistical Decision Functions}.
\newblock John Wiley \& Sons, New York, 1950.

\end{thebibliography}
\end{document}